\renewcommand{\&}{and}
\newtheoremstyle{dotless}{}{}{\itshape}{}{\bfseries}{}{ }{}
\newtheorem{theorem}{Theorem}
\theoremstyle{definition}	
\newtheorem{definition}{Definition}
\numberwithin{definition}{section} 
\theoremstyle{informaldefinition}	
\numberwithin{informaldefinition}{section}
\newtheorem{lemma}{Lemma}
\numberwithin{lemma}{section}
\numberwithin{theorem}{section} 
\newcommand{\E}{\mathbb{E} }
\newcommand{\be}{{\bm \epsilon}}
\newcommand\independent{\protect\mathpalette{\protect\independenT}{\perp}}
\newcommand\ci{\protect\mathpalette{\protect\independenT}{\perp}}
\def\independenT#1#2{\mathrel{\rlap{$#1#2$}\mkern2mu{#1#2}}}
\newif\iffirstitem
\newcommand\myitem[1]{%
	\iffirstitem
	\firstitemfalse
	\else
	; %
	\fi
	\citeauthor{{#1}}, \citeyear{{#1}}}
\NewDocumentCommand\mycite{>{\SplitList{,}}m}
{%
	\ProcessList{#1}{\myciteitem}%
	\firstitemtrue
}
\newif\iffirstitem
\newcommand\myciteitem[1]{%
	\iffirstitem
	\firstitemfalse
	\else
	; %
	\fi
	\citeauthor{{#1}}'s (\citeyear{{#1}})}
\NewDocumentCommand\myciteopt{>{\SplitList{,}}m}
{%
	(\ProcessList{#1}{\myitem}, to mention a few)%
	\firstitemtrue  
}
\newcommand\close[1]{\left( {#1} \right)}
\newcommand\braket[1]{\left[ {#1} \right]}
\newcommand\curly[1]{\left \{{#1} \right\}}
\title{Context-dependent Causality\\ (the Non-Monotonic Case)}
\author{Nir Billfeld\thanks{University of Haifa. nirbillfeld@gmail.com}\hspace{0.5em} Moshe Kim\thanks{University of Haifa. kim@econ.haifa.ac.il.}\thanks{We thank Boaz Nadler from the Weizmann Institute of Science for his insightful comments.}} 
\date{\today}
\theoremstyle{cond}
\numberwithin{cond}{section}
\newtheorem{asu}{Assumption}
\theoremstyle{asu}
\numberwithin{asu}{section}
\newtheorem{assumption}{Assumption}
\newtheorem{subasu}{Assumption}[assumption]
\let\oldproofname=\proofname
\renewcommand{\proofname}{\rm\bf{\oldproofname}} 
\begin{document}
	
	\maketitle

	%
	
	
	\begin{abstract}
We develop a novel identification strategy as well as a new estimator for  context-dependent causal inference in non-parametric triangular models with non-separable disturbances. Departing from the common practice, our analysis does not rely on the strict monotonicity assumption. 
Our key contribution lies in leveraging on \textbf{\textit{diffusion models}} to formulate the structural equations as a system evolving from noise accumulation to account for the influence of the latent context (confounder) variable on the outcome. 
Our identifiability strategy involves a system of Fredholm integral equations expressing the distributional relationship between a latent context variable and a vector of observables.	 
These integral equations involve an unknown kernel and are governed by a set of structural form functions, inducing a \textbf{\textit{non-monotonic inverse problem}}. 
We prove that if the kernel density can be represented as an infinite mixture of Gaussians, then there exists a unique solution for the unknown function. This is a significant result, as it shows that it is possible to solve a non-monotonic inverse problem even when the kernel is unknown.	 
On the methodological front we leverage on a novel and enriched \textit{Contaminated} Generative Adversarial (Neural)  Networks (\textbf{CONGAN}) which we provide as a solution to the non-monotonic inverse problem. 			
	\end{abstract}
	
	\def\keywords{\vspace{.5em}
		{\textit{Keywords}:\,\relax%
	}}
	\def\endkeywords{\par}
	
	\keywords Counterfactual; Diffusion; Synthetic distribution; Unrestricted support; Contaminated Generative Adversarial Networks; Context; Non-monotonic inverse problem; Unknown kernel function; Neural Networks; Fredholm integral equation
	\vspace{20em}
	\pagebreak
	
	\renewcommand{\theenumi}{(\roman{enumi})}
	\renewcommand{\labelenumi}{\theenumi}			
	\section{Introduction}  
	Consequences are propagated by actions taken in a specific and often latent contextual platform. As such, context is an intrinsic part of a causal model, playing the role of  unobserved heterogeneity. The common practice in the treatment of unobserved heterogeneity is to impose various kinds of structural form restrictions on the relationships among actions and consequences. A widely used structural restriction is the monotonicity assumption, which implies a one-to-one relationship between either the action or the outcome variable and its random disturbance (given the control variables). A key limitation of this assumption however, is that it rules out any source of uncertainty regarding the unobservables when the observables are held fixed. Namely, monotonicity forces a degenerate conditional distribution of the unobservables given the observables. Consequently, the theoretical model is restricted to a subset of structural form functions and thus, may poorly identify and capture the true underlying causal relationship between action and outcome. 
	
	We depart from the conventional reliance on monotonicity to achieve identifiability. Yet, we keep the general architecture similar to the conventional modeling. The present model is a nonparametric triangular model with non-separable disturbances, which includes a latent context variable. Context is modeled as a confounder with unobserved heterogeneity given the observables, challenging the conventional causal inference. 
	The main challenge in achieving identifiability is aggravated due to the absence of an adjustment set (observed control variables) to be conditioned on in order to satisfy some conditional independence assumption. This identifiability relies on formulating the structural equations as a system evolving from noise accumulation, known as a  \textbf{\textit{diffusion process}} to account for the influence of the latent context (confounder) on the outcome. This allows us to achieve identifiability without the limitations of monotonicity assumptions. The key result of the offered identifiability approach introduced here importantly emanates from the fact that any possible set of structural functions belonging to an equivalence class of triangular models generating the data yields the same do-interventional counterfactual distribution. This result is achieved by expressing the distributional relationship between a latent context variable and a vector of observables through a system of Fredholm integral equations.
	The aforementioned set of equations is governed by the set of generator functions and an \textit{unknown kernel function}, inducing a \textbf{\textit{non-monotonic inverse problem}}. 
	The role of these generator functions is two-fold: (i) to characterize 
	an admissible kernel function induced by these generator functions and (ii) to ensure that the estimator of the unknown quantity is a continuous function of the data for any given kernel function in the equivalence class. We establish that the interventional distribution is invariant to the choice of the kernel if it belongs to any strongly complete family of densities, e.g., an infinite mixture of Gaussians. This is a significant result, as it shows that it is possible to solve a non-monotonic inverse problem even when the kernel is unknown.	 
	
	The framework offered is general and may have a significant impact on a wide variety of economic and other phenomena inclusive of those intrinsically exhibiting non-monotonicity in their confounding variables.	The relationship between the equivalence class of structural form functions and the causal distribution is formed by the establishment of a newly introduced augmented Fourier series expansion. This novel expansion method is designed to characterize an infinite system of linear regression equations satisfying cross-equation restrictions, induced by an equivalence class of non-separable structural form functions. These cross-equation restrictions are derived from the common triangular model assumptions, without reliance on any variant of monotonicity.

	The existing literature using either local average treatment effects or triangular models architecture impose various functional form restrictions, such as conditional quantile restrictions \citep{chesher2003identification}, strict monotonicity in the case of average local treatment effects \citep{angrist1994identification} as well as in the case of triangular models \citep{imbens2009identification,hoderlein2017corrigendum}. Alternative models assume restricted support domain (locality) \citep{hoderlein2009identification,angrist1996identification, altonji2005cross, heckman2005structural}. These restrictions guarantee the identifiability of the latent confounder up to a strictly monotonic transformation from conditional quantiles of the observables. 
	The main limitation of such treatment however, is to potentially narrow down the support domain of the latent space. This can be detrimental to causal inference as it eliminates a subset of the control (placebo) group, when employing e.g., the common practice of partial means \citep{newey1994kernel}, interventions \citep{pearl2019interpretation} and local average treatment effect \citep{angrist1996identification}. 
	Monotonicity is rather not an innocuous restriction as it is at odds with many phenomena in general and those in economics in particular. The potential bias propagated by relying on the monotonicity assumption has long been estimated, (e.g., \cite{klein2010heterogeneous}). However, there is a lack of alternative identifiability, inference and an appropriate estimator which are not monotonicity-based.
	Monotonicity is a strong and perhaps unpalatable for many basic economic phenomena such as demand, supply and consumption behaviour \citep{hoderlein2016testing, hoderlein2007identification}, multiple equilibria in strategic behavior \citep{myerson1999nash}, search behavior \citep{chetverikov2019testing, agarwal2020searching} as well as in prospect theory \citep{tversky1979analysis}.\footnote{A recent paper \citep{dembo2021ever} documents strong violations of monotonicity as a feature of expected utility theory.}
	
	In our present framework, we depart from the above mentioned restrictions by allowing for the posterior distribution of the latent context to be a function of the covariates and of the outcome. This generalization incorporates cases in which the context is also a function of the outcome and not only the covariates, unlike the control variable approach \myciteopt{mammen2012nonparametric,mammen2016semiparametric,chernozhukov2020}.
	The presence of a latent context covariate affects the cause and the outcome simultaneously and thus, is the source of endogeneity in our model.
	Thus, the endogeneity can be eliminated by allocating each one of the cause and outcome a different and independent realization of the context. 
	Building upon this idea, we develop a novel identification strategy as well as a new estimator for triangular models in the presence of non-separable disturbances. Unlike the common practice, our approach does not rely on the strict monotonicity assumption, but rather on an integral equation characterized by generator functions. These generator functions induce an equivalence class of triangular models, while exploiting the entire support domain of the latent space in a non-monotonic manner. This insight solves the double-hurdle problem of relying on monotonicity as well as on a restricted support domain \citep{chernozhukov2020} which may produce inferior results as is shown in our simulations.
	
	The new paradigm presented here yields a synthetic counterfactual distribution, which is entropy-advantageous. 
	We provide a novel feed-forward multi-layer Neural network-based estimator.  
	This is performed by leveraging on a novel extension namely, a \textit{\textbf{Contaminated}} version we develop and apply to the Generative Adversarial Networks model (\textbf{CONGAN}) \citep{goodfellow2014generative}. The advantage of the proposed practice is that it enables the Neural network to be trained on the entire support domain of the latent space unlike the common practice, which uncovers counterfactual relationships by averaging available observable data on a restricted support.
	
	The paper proceeds as follows. Section \ref{Sec:Discussion} gives an overview of the common practice of causal inference in triangular models. Section \ref{Sec:Theoretical:model} presents the diffusion problem formulation and its underlying assumptions. 
	 Section \ref{Sec:alternative_representation} shows an alternative formulation of the triangular model as an integral equation to simplify the analysis. Section \ref{sec:example} provides an illustrative example for identifiability which does not rely on monotonicity assumption. Section \ref{sec:completeness} characterizes strong completeness of sufficient statistic as a building block for identifiability.
	Section \ref{Sec:Identification} presents the identifiability by relying on strong completeness. The estimator is presented in section \ref{Sec:Estimation}. 
	In section \ref{Sec:Simulations} Monte-Carlo simulations are used to verify our theoretical model performance. Section \ref{Sec:Conclusion} concludes.

	\section {Discussion and literature review}\label{Sec:Discussion}
	Various identification strategies have been designed to uncover counterfactual relationships in triangular models, consisting of non-additively separable disturbances, in the absence of experimental data. The ultimate goal of these approaches is to mimic a ``natural experiment'' \citep{rosenzweig2000natural} by detaching a random variable $X$ (an observed action or treatment) from the unobservables affecting some other random variable $Y$ (outcome). The prominent question they answer is what would have been the expected counterfactual outcome of $Y$ if the values of the independent variable $X$ were assigned independently of the disturbances. Generally, in order to achieve this goal, monotonicity is imposed either on the unobservables in the first stage regression  \myciteopt{imbens2009identification,blundell2014control, heckman1985alternative,hoderlein2009identification,chernozhukov2020}; or on the unobservables in the outcome's structural equation in the second stage \citep{chernozhukov2007instrumental, kim2020partial}. In other cases, the monotonicity requirement is imposed on the instrumental variable \citep{angrist1994identification,hoderlein2017corrigendum,heckman2018unordered}.  However, this assumption is at odds with many actual phenomena as discussed earlier. Moreover and perhaps more importantly, it imposes severe limitations on the set of identifiable causal relationships in nonparametric inference, while potentially missing the true causal relationship. The unattended issue in the existing literature is that the direction and strength of causal relationship is context-blind.
	
	An alternative identifiability strategy for causal inference not relying on monotonicity is obtained by assuming ignorability which is known also as an unconfoundedness assumption \citep{athey2016recursive} or a conditional independence on observables \citep{farrell2021deep}. This approach limits the scope of causal analysis by requiring a directly observed set of control variables, which are rarely available in reality, in order to introduce conditional independence. The present study departs from this approach by embracing a generative covariate mechanism \citep{mammen2012nonparametric,mammen2016semiparametric} producing a synthetic sample of the same distribution present in the data. This approach does not necessitate the unconfoundedness assumption, yet allows for the estimation of the counterfactual distribution at any level of accuracy desired (given amount of data available and level of complexity). This methodology also departs from the popular quantile structural estimation, which is intrinsically monotonic \citep{chernozhukov2013inference}.

	We contribute to the existing literature in two specific aspects.  The theoretical contribution amounts to developing a new causal identification paradigm, which is non-parametric, monotonicity-free, context-dependent and universal with respect to the identifiable function set. Our new paradigm yields a counterfactual distribution exploiting the unrestricted support domain of the latent space, which is essential for the practice of intervention and counterfactual inference. On the methodological front, we develop a novel estimation strategy referred to here as a Contaminated Generative Adversarial Networks (CONGAN). This approach consists of two generators and one classifier: counterfactual samples generator and a contaminator generator. The classifier determines whether the sample is real or synthetic as in the conventional GAN \citep{goodfellow2014generative}. The main difference is in the unique role of each one of the generators in the causal inference. The counterfactual samples generator intends to mimic the causal relationship between $X$ and $Y$, which would be the case if $X$ and the unobservables were distributionally independent. However, it cannot be trained directly using the observed data, in which $X$ and the latent context variable are distributionally jointly dependent. Consequently, the role of the contaminator is to minimize the disparity between the observed and the counterfactual samples. As such, the contaminator is a nuisance Neural network which is essential for training the counterfactual generator. Our new paradigm enables to mimic the distribution of $Y$ given $X$ that would have been obtained under random assignment of the latent context variable (independently of $X$). We emphasize at this stage that context is not just a semantic concept but rather an identifiability instrument, in that it permits rendering $Y$ a different contextual regime from the one governing $X$. Once the regime governing $Y$ does not longer comove with $X$, one mimics the effect of an exogenous variation in $X$ on $Y$. The benefit of the contaminated Neural network is in that it precludes the need to invert the generators in order to account for the posterior distribution of the latent context variable given the observables, which is computationally cumbersome. 
	Our CONGAN application utilizes multi-layer feed-forward Neural networks in which, unlike in other series estimators (e.g., sieve), the basis functions themselves are data-driven by  combining simple functions. Such a flexible combination is known to be capable of approximating any measurable function to any desired degree of accuracy \citep{hornik1989multilayer}.\footnote{We note that a specific variant of GANS, referred to as Wasserstein GANS with a penalized gradient, has been recently employed for conducting monte-carlo simulations of potential outcomes under unconfoundedness assumption \citep{athey2020}. Recent work criticizes this penalized Wasserstein GANS due to disregarding significant parts of the support domain in the data \citep{wei2018improving}. Note that the utilization of the correct support is at the core of counterfactual analysis.} In the ensuing section we attend to the theoretical triangular model formulation. 

	\section{Diffusion-based non-monotonic triangular models}
	\label{Sec:Theoretical:model} 
	Consider the following triangular simultaneous equations model, pioneered by \citet{heckman1985alternative}. 
	The model includes three observed random variables $X,Y,Z\in\mathbb{R}$, 
	and two unobserved random disturbances $\upeta\in\mathbb{R}$ and 
	$\bm{\epsilon}:=(\epsilon_1,...,\epsilon_d)\in\upchi^d\subset\mathbb{R}^{d}$. The triplet $(Z,\bm{\epsilon},\upeta)$ generates a pair of observed quantities $X,Y\in\mathbb{R}$, as an infinite Gaussian mixtures governed by unknown location as well as variance functional parameters $(\mu_{1,i}, \mu_{2,i})$ and $(\sigma_{1,i}, \sigma_{2,i})$. There are unknown mixing proportion parameters $(\alpha_{1,i}, \alpha_{2,i})$.  Let $L$ be the number of components in the Gaussian mixtures of $(X,Y)$, which approaches infinity, depicted as,
	\begin{align}
		& X = h(z,\eta) \sim \sum_{i=1}^{L} \alpha_{1,i}\mathcal{N}\left(\mu_{1,i}(z) + \rho_{12}\sigma_{1,i}(z)\left(  \frac{\eta-\mu_{\eta}}{\sigma_\eta}\right), \sigma_{1,i}^2(z)(1-\rho_{i,12}^2)\right) \label{X:Structural}
	\end{align}
	and $Y = g(X,\upepsilon)$ satisfying,
	\begin{align}
		& g(X,\upepsilon)|\upeta=\eta\sim \sum_{i=1}^{L} \alpha_{2,i}\mathcal{N}\left(\mu_{2,i}(X) + \rho_{i,13}(X)\sigma_{2,i}(X)\left(  \frac{\eta-\mu_{\eta}}{\sigma_\eta}\right), \sigma_{2,i}^2(X)(1-\rho_{i,13}^2(X))\right)\label{Y:Structural}.
	\end{align}	
	The random variable $Y$ is the outcome variable, whereas $X$ represents an action or an endogeneous variable. 
	The random variable $Z$ is an exogenous  covariate also known as an instrumental variable.
	The random variable $Z$ is assumed to be continuous with an unknown density $f_Z(z)$,
	and is further assumed to satisfy  
	$Z\independent (\upeta,\bm{\epsilon})$. 
	The disturbances $(\upeta,\bm{\epsilon})$ are assumed to be continuous and possibly jointly dependent. 
	We denote their unknown joint and marginal distributions 
	by $F_{\upeta,\bm{\epsilon}}$, $F_\upeta$ and $F_{\bm{\epsilon}}$, respectively. 

Equivalently, the Gaussian Cholesky decomposition parameterized by the functional parameters $\left\{\upbeta_i\right\}$ with $\upbeta_i(x):=\rho_{i,13}^2(x)$  and a constant $\uptheta_i:=\rho_{12}^2$ can be used to simplify the analytical formula of the data Gaussian mixture data generation process in Eqs. \eqref{X:Structural} and \eqref{Y:Structural},
\begin{align}
&	X = \sum_{i=1}^{L}\alpha_{1,i}\left\{\mu_{1,i}(Z) + \sigma_{1,i}(Z)\left(\sqrt{1-\uptheta_i}  \Phi^{-1}(\bm\nu_{i,x}) + \sqrt{\uptheta_i}\left(  \frac{\upeta-\mu_{\eta}}{\sigma_\eta}\right)\right)\right\},\nonumber\\
&	Y = \sum_{i=1}^{L}\alpha_{2,i}\left\{\mu_{2,i}(X) + \sigma_{2,i}(X)\left(\sqrt{1-\upbeta_i(X)}  \Phi^{-1}(\bm\nu_i) + \sqrt{\upbeta_i(X)}\left(  \frac{\upeta-\mu_{\eta}}{\sigma_\eta}\right)\right)\right\},\nonumber
\end{align}
with independent uniformly distributed random variables $\bm\upnu_i$ and $\bm\upnu_{i,x}$.
We define $\alpha_{i,j}^*:=\alpha_{1,i}\alpha_{2,j}$ to express the observed outcome as a function of $(X,Z)$,
\begin{align}
	Y = \sum_{\substack{i,j=1}}^{L}\alpha_{i,j}^*\left\{\mu_{2,i}(X) + \sigma_{2,i}(X)\left(\sqrt{1-\upbeta_i(X)}  \Phi^{-1}(\bm\nu_i) + \sqrt{\upbeta_i(X)}\left(  \frac{\Psi_{j}(X,Z)+\bm\epsilon_0-\mu_{\eta}}{\sigma_\eta}\right)\right)\right\}, \label{Y:ReducedForm}
\end{align}
where $\Psi_{j}(X,Z)$ for any given noise $\bm\epsilon_0$ is just a one of the infinite potential values of $\upeta$ in the $j$'th component of the mixture for any given $(X,Z)$. Eq. \eqref{Y:ReducedForm} is a reduced-form representation of $Y$ without any endogenous predictor variables. This formulation shares similarities with diffusion models \citep{ho2020denoising} by capturing the influence of noise on $Y$.

We attend to the following definition of the support of a continuous random vector $Q$ in $\mathbb{Q}^n$ with a probability density function $f_{Q}$,	
    \begin{definition}[Support]
    	$\text{Supp}(Q):=\text{The closure of the set of points for which } f_{Q}(q)\ne 0$.
    \end{definition}
	We consider a non-parametric setting, whereby the unknown function $h$ 
	is assumed to be continuous, but otherwise of general structural form. The unknown function
	$g$ can be continuous or piecewise constant, in the latter case leading to a discrete  outcome $Y$. 
	As in \cite{{imbens2009identification}}
	the functions $h$ and $g$ may be nonseparable in their respective disturbances $\upeta$ and $\bm{\epsilon}$. 
	The model (\ref{X:Structural})-(\ref{Y:Structural}) implies that $\upeta$ is a \textit{confounder} or common cause, affecting both $X$ directly via \eqref{X:Structural}, 
	and $Y$ indirectly through the dependence of
	$\bm{\epsilon}$ on $\upeta$.
	Finally, following previous work \citep{imbens2009identification}, we assume the following standard condition holds: 
	\begin{asu}[Common support]\label{asu:common}
		The conditional random variable $\upeta|X=x$ has a \textbf{fixed support} independent of $x$.
		Namely, 
		$\text{Supp}(\upeta|X=x) = \text{Supp}(\upeta),$ $\forall x$. 
	\end{asu}

	For future use, we present two important properties of the above triangular model. 
	First, note that Eq. (\ref{X:Structural}) together with the assumption that $Z$ is independent of $(\upeta,\bm \epsilon)$ imply that $X\ci \bm\epsilon|\upeta$. The second property is
	stated in the following auxiliary lemma (see proof in appendix \ref{Appendix:Proof}). 
\begin{lemma}\label{lemma:eps:support}
	Under assumption \ref{asu:common} it holds that,
	\begin{align}
		\text{Supp}(\left.\bm\epsilon\right|X=x)=\text{Supp}(\bm\epsilon) \quad \forall x\in\text{Supp}(X).\label{eq:Eps:Support}
	\end{align}	
\end{lemma}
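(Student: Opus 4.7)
The plan is to leverage the conditional independence $X\ci\bm\epsilon\mid\upeta$, which was noted in the paragraph preceding the lemma and follows immediately from $X=h(Z,\upeta)$ together with $Z\ci(\upeta,\bm\epsilon)$. Using this I would represent both $f_{\bm\epsilon|X}(\bm e|x)$ and $f_{\bm\epsilon}(\bm e)$ as mixtures of the same conditional density $f_{\bm\epsilon|\upeta}(\bm e|\cdot)$, differing only through their mixing distributions over $\upeta$; Assumption~\ref{asu:common} will then force these two mixing distributions to share the same support, and hence the two marginal supports of $\bm\epsilon$ to agree.

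First I would combine the tower rule with $X\ci\bm\epsilon\mid\upeta$ to write
\begin{equation*}
f_{\bm\epsilon|X}(\bm e|x)=\int f_{\bm\epsilon|\upeta,X}(\bm e|\eta,x)\,f_{\upeta|X}(\eta|x)\,d\eta=\int f_{\bm\epsilon|\upeta}(\bm e|\eta)\,f_{\upeta|X}(\eta|x)\,d\eta,
\end{equation*}
and then apply Bayes' rule, $f_{\bm\epsilon|\upeta}(\bm e|\eta)=f_{\upeta|\bm\epsilon}(\eta|\bm e)f_{\bm\epsilon}(\bm e)/f_\upeta(\eta)$, to obtain the key factorization
\begin{equation*}
f_{\bm\epsilon|X}(\bm e|x)=f_{\bm\epsilon}(\bm e)\int\frac{f_{\upeta|\bm\epsilon}(\eta|\bm e)\,f_{\upeta|X}(\eta|x)}{f_\upeta(\eta)}\,d\eta.
\end{equation*}
This cleanly isolates the dependence on $\bm e$ into a prefactor $f_{\bm\epsilon}(\bm e)$ and a non-negative integral that absorbs the conditioning on $x$.

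Both set inclusions now fall out of this factorization. The direction $\text{Supp}(\bm\epsilon|X=x)\subseteq\text{Supp}(\bm\epsilon)$ is immediate, since positivity of the left-hand side forces $f_{\bm\epsilon}(\bm e)>0$, and closing up preserves the inclusion. For the reverse inclusion, I would fix $\bm e$ with $f_{\bm\epsilon}(\bm e)>0$ and argue that the integral above is strictly positive. Its integrand is positive exactly on the intersection of the positivity sets of $f_{\upeta|\bm\epsilon}(\cdot|\bm e)$ and $f_{\upeta|X}(\cdot|x)$. The former has positive Lebesgue measure inside $\text{Supp}(\upeta)$ since $f_{\upeta|\bm\epsilon}(\cdot|\bm e)$ is itself a probability density, while Assumption~\ref{asu:common} guarantees that $f_{\upeta|X}(\cdot|x)$ is supported on all of $\text{Supp}(\upeta)$, so the intersection retains positive measure. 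Hence the integral is strictly positive, giving $\bm e\in\text{Supp}(\bm\epsilon|X=x)$.

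The main obstacle will be the careful handling of the word \emph{support}: two densities can share a support in the closure sense yet have positivity sets that intersect only in a Lebesgue-null set, which would break the measure-theoretic step above. I would dispatch this by reading Assumption~\ref{asu:common} in the standard econometric sense, namely, that the positivity sets of $f_{\upeta|X}(\cdot|x)$ and $f_\upeta$ coincide up to a Lebesgue-null boundary; this is automatic given the regularity inherited from the Gaussian-mixture structure in Eqs.~\eqref{X:Structural}--\eqref{Y:Structural}. With that reading in place the two supports of $\bm\epsilon$ are forced to agree for every $x\in\text{Supp}(X)$, delivering Eq.~\eqref{eq:Eps:Support}.
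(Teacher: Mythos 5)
Your proof is correct, and it reaches the conclusion by a genuinely different route than the paper. The paper's own proof never touches densities: it writes $\text{Supp}(\bm\epsilon|X=x)$ as the union over $\eta\in\text{Supp}(\upeta|X=x)$ of $\text{Supp}(\bm\epsilon|X=x,\upeta=\eta)$, replaces $\text{Supp}(\upeta|X=x)$ by $\text{Supp}(\upeta)$ via Assumption \ref{asu:common}, and then uses $\bm\epsilon|X,\upeta\overset{d}{=}\bm\epsilon|\upeta$ to collapse the union to $\text{Supp}(\bm\epsilon)$ --- a purely set-level argument with the same two ingredients you use (conditional independence given $\upeta$ and common support). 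You instead work at the density level, writing $f_{\bm\epsilon|X}(\bm e|x)$ as a mixture over $\upeta$ and using Bayes' rule to factor out $f_{\bm\epsilon}(\bm e)$, so that both inclusions reduce to positivity of an explicit integral. What your version buys is a quantitative factorization $f_{\bm\epsilon|X}(\bm e|x)=f_{\bm\epsilon}(\bm e)\cdot w(\bm e,x)$ with $w\ge 0$, which is stronger than the support statement alone; what it costs is exactly the issue you flag in your last paragraph: you must read Assumption \ref{asu:common} as equality of positivity sets up to a Lebesgue-null set rather than mere equality of closed supports, since a density whose positivity set is dense in $\text{Supp}(\upeta)$ but not co-null there could still miss the positivity set of $f_{\upeta|\bm\epsilon}(\cdot|\bm e)$. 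The paper's union-of-supports identity sidesteps densities entirely (though it is itself asserted without proof and carries analogous implicit regularity requirements), so on balance the two arguments sit at a comparable level of rigor; your honest identification and resolution of the measure-versus-closure gap is the right way to make the density route airtight.
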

		Given $n$ 
		i.i.d. triplets $(x_i,y_i,z_i)$ from the above triangular model, Eqs. (\ref{X:Structural})-(\ref{Y:Structural}), the goal is to estimate the counterfactual distribution of the outcome $Y$ if $X$ were set to the value $x$.
		Depending on the particular application, this quantity is related to policy effect or treatment effect. 
		In the notation of causal inference \citep{pearl2019interpretation}, the above quantity of interest is known as \textit{intervention}. It is given by   
		\begin{align}
			T(y|x) = P\left(Y\le y|\text{do}(X=x)\right):= \mathbb{E}_{\bm\epsilon\sim F_{\bm\epsilon}}\left[\mathds{1}\left\{g(x,\bm\epsilon)\le y\right\}\right].
			\label{eq:do_X}
		\end{align}
		It is worth emphasizing that under the above triangular model and its assumptions, even exact knowledge of
		the joint distribution of $(X,Y,Z)$ does not uniquely determine the quantity $T(y|x)$ \citep{imbens2007nonadditive}. 	
		The reason is that even if $g$ were known, under (\ref{X:Structural})-(\ref{Y:Structural}) the random variables $X$ and $\bm\epsilon$ might be jointly dependent. 
		In contrast, in Eq. (\ref{eq:do_X}) the expectation is over the marginal distribution of $\bm\epsilon$, effectively treating $\bm\epsilon$  as independent of $X$. 

		To guarantee identifiability of $T(y|x)$, several previous works imposed an additional assumption
		of {\em strict monotonicity} of $h(z,\eta)$ in the second argument $\eta$, required to hold
		for all $z$ in its support, e.g., \cite{blundell2014control,imbens2009identification}. 
		Specifically, \citet{imbens2009identification}
		defined the following random variable $V:=F_{X|Z}(X)$
		and the following functional, both of which can be estimated from the observed data, 
		\begin{align}
			H_c(y|x) = \int_{0}^1 F_{Y|X=x,V=v}(y)dv
			\label{eq:H_c}.
		\end{align}
		The strict monotonicity assumption implies that the random variable $V$ is a
		one-to-one mapping of the unobserved disturbance $\upeta$. Furthermore, it can be shown that
		$V$ is a {\em control variable} and that $H_c(y|x) = T(y|x)$. Then,  an estimate of $H_c$ directly yields an estimate of $T(y|x)$. 
		
		The identifiability strategy described above is based on the ability to uniquely determine $\upeta$,
		up to a strictly monotonic transformation, for any pair of observed $(X,Z)$. As we now describe, we significantly depart from this approach. Instead, we leverage on the known axiom of continuity and building block of expected utility theory and post the following assumption:\footnote{We are reminded that continuity is an essential part of the axiom of expected utility theory guaranteeing non-intersecting indifference curves. See  \cite{afriat1967} for the well-known result showing that data cannot be treated as being generated by a utility function, if there are large deviations from rationalizability.  \mycite{afriat1967} theorem tells us that continuity is utmost necessary for any finite data set to be rationalizable. } 
		%
		
	\begin{asu}[Continuity of $X|Z$ and $\bm{\epsilon}|\upeta$]
	\begin{subasu}\label{asu:continuity}
		For any $z\in \text{Supp}(Z)$, the random variable $X|Z=z$ is continuous. In particular, the set of values of $\upeta$ where $\partial h(z,\eta)/\partial \eta=0$ is of measure zero.				
	\end{subasu}
	\begin{subasu}\label{asu:cond:continuity}
		The random vector $\bm{\epsilon}|\upeta$ is a $d$-dimensional continuous random vector for any $\upeta$.		
	\end{subasu}			
\end{asu}

\begin{asu}[Completeness]
	The distribution of the random vector $(X,\upeta)|Z$ belongs to a complete family of distributions with respect to $Z$.
\end{asu}\label{asu:bounded}
\begin{asu}[Square integrability]\label{asu:Integrability}
	The density $f(y|X=x,Z=z)$ is square integrable, and is known.
\end{asu}

	{
		
		The first key contribution of our work is to show that under the 
		triangular model with assumptions \ref{asu:common}-\ref{asu:Integrability}, 
		the intervention $T(y|x)$ is identifiable.
	Consequently, the fact that additive and monotonic triangular models are identifiable, are special cases of our more general identifiability result  \citep{imbens2007nonadditive,imbens2009identification,blundell2014control,heckman1985alternative}.

	We establish identifiability in a population setting, where assume to have observed an infinite number of triplets $(x_i,y_i,z_i)$ from the triangular model \eqref{X:Structural}-\eqref{Y:Structural}. Hence, in what follows we assume that the joint distribution 
	$F_{X,Y,Z}$ as well as various conditional distributions, such as $F_{Y|X=x,Z=z}$ are all
	perfectly known. Our approach to prove identifiability proceeds as follows. First, 
	in Section \ref{Sec:integral}, we derive an 
	integral equation relating the intervention 
	of Eq. (\ref{eq:do_X}) to known distributions
	of $(X,Y,Z)$. In general, there may be an infinite number of solutions to this integral equation, which reflects the fact that in the original triangular model, the unknown functions $h$ and $g$ are not identifiable. Yet, in Section \ref{Sec:Identification} we prove that {\em any}
	solution of this integral equation gives the same intervention, hence proving identifiability of the intervention.


		\section{An alternative representation for the intervention}\label{Sec:alternative_representation}
		
		A key preliminary step is to derive an {\em alternative} representation
for the intervention $T(y|x)$. 
	{Consider the following cumulative distribution function (CDF),}  
		\begin{equation}
			H(y|x) =  \mathbb{E}_{\upeta}[  F_{Y|X=x,\upeta}(y)] =  
			\int F_{Y|X=x,\upeta=\eta}(y)f_{\upeta}(\eta)d\eta.
			\label{eq:def_H}
		\end{equation}
	Note that the function inside the integral , with $\upeta$ attaining all possible values in its support, is well defined by assumption \ref{asu:common}. 	%
	The following lemma is key to our proposed approach, both for proving identifiability and for inference. 
		\begin{lemma}[Functional equivalence]
			\label{lem:H_T}
			Under the triangular model with assumption \ref{asu:common}, 
			the function $H(y|x)$ defined in Eq. \eqref{eq:def_H} satisfies that 
			$\forall\hspace{0.2em}(x,y)$ in the relevant support,
			\begin{eqnarray}
				H(y|x) = T(y|x).
				\label{eq:H_T}
			\end{eqnarray}	
		\end{lemma}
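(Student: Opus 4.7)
The plan is to exploit the conditional independence $X \ci \bm{\epsilon}\mid \upeta$ (which the excerpt already notes follows from Eq.~\eqref{X:Structural} together with $Z \ci (\upeta,\bm{\epsilon})$) and then reduce the definition of $H(y\mid x)$ in \eqref{eq:def_H} to the intervention defined in \eqref{eq:do_X} by a single application of the tower property. The preparatory worry is only whether the conditional objects $F_{Y\mid X=x,\upeta=\eta}$ are well-defined for \emph{every} $\eta \in \mathrm{Supp}(\upeta)$; this is exactly what Assumption \ref{asu:common} buys us, and Lemma \ref{lemma:eps:support} gives the companion statement $\mathrm{Supp}(\bm{\epsilon}\mid X=x) = \mathrm{Supp}(\bm{\epsilon})$, which I will need so that the inner integrand can be re-expressed as an expectation over $\bm{\epsilon}\mid \upeta=\eta$ without missing mass.

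The first step is to rewrite the conditional CDF appearing in the integrand of \eqref{eq:def_H}. Using $Y=g(X,\bm{\epsilon})$ together with $X\ci\bm{\epsilon}\mid\upeta$, for any $(x,\eta)\in\mathrm{Supp}(X)\times\mathrm{Supp}(\upeta)$,
\begin{equation*}
F_{Y\mid X=x,\upeta=\eta}(y) \;=\; \Pr\!\bigl(g(X,\bm{\epsilon})\le y\,\bigm|\, X=x,\upeta=\eta\bigr) \;=\; \Pr\!\bigl(g(x,\bm{\epsilon})\le y\,\bigm|\, \upeta=\eta\bigr) \;=\; \mathbb{E}_{\bm{\epsilon}\mid\upeta=\eta}\!\bigl[\mathds{1}\{g(x,\bm{\epsilon})\le y\}\bigr].
\end{equation*}
Here the middle equality substitutes the constant $x$ for $X$, justified by $X\ci\bm{\epsilon}\mid\upeta$; the common-support assumption ensures the conditioning event $\{X=x,\upeta=\eta\}$ is non-degenerate on the whole product support and the fixed support of $\bm\epsilon|\upeta=\eta$ ensures the indicator is integrated against the correct conditional law.

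The second step plugs this identity into \eqref{eq:def_H} and collapses the iterated expectation. By Fubini/Tonelli (the integrand is a bounded indicator, so integrability is automatic),
\begin{equation*}
H(y\mid x) \;=\; \int \mathbb{E}_{\bm{\epsilon}\mid \upeta=\eta}\!\bigl[\mathds{1}\{g(x,\bm{\epsilon})\le y\}\bigr] f_{\upeta}(\eta)\, d\eta \;=\; \mathbb{E}_{\upeta}\!\Bigl[\mathbb{E}_{\bm{\epsilon}\mid\upeta}\!\bigl[\mathds{1}\{g(x,\bm{\epsilon})\le y\}\bigr]\Bigr] \;=\; \mathbb{E}_{\bm{\epsilon}\sim F_{\bm{\epsilon}}}\!\bigl[\mathds{1}\{g(x,\bm{\epsilon})\le y\}\bigr],
\end{equation*}
where the last equality is the law of total expectation applied to the joint law $F_{\upeta,\bm{\epsilon}}$. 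The right-hand side is precisely $T(y\mid x)$ by \eqref{eq:do_X}, giving the desired equality.

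The only genuinely delicate point is the first step, and specifically the justification that we may replace the conditional $F_{Y\mid X=x,\upeta=\eta}$ by an expectation over the marginal conditional law of $\bm{\epsilon}\mid\upeta=\eta$ for \emph{every} $\eta$ in the support of $\upeta$; this is the sole role Assumption \ref{asu:common} and Lemma \ref{lemma:eps:support} play in the argument. Everything after that is a mechanical application of Fubini and the tower property, with no additional regularity required because the integrand is a bounded indicator.
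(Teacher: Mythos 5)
Your proof is correct and follows essentially the same route as the paper's: rewrite $F_{Y\mid X=x,\upeta=\eta}(y)$ as an expectation over $\bm{\epsilon}\mid\upeta=\eta$ using the conditional independence $X\ci\bm{\epsilon}\mid\upeta$, then collapse the iterated expectation via the law of total expectation to recover $T(y\mid x)$. The extra remarks on the role of Assumption \ref{asu:common} and Lemma \ref{lemma:eps:support} in guaranteeing the conditioning events are well-defined are a harmless (and reasonable) elaboration of what the paper leaves implicit.
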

		
		\begin{proof}
			Using the definition of $Y$ in Eq. \eqref{Y:Structural} the function 
			$F_{Y|X=x,\upeta}(y)$ may be equivalently written as follows
			\begin{equation}
				F_{Y|X=x,\upeta}(y) = \mathbb{E}_{\bm{\epsilon}|X=x,\upeta}\left[\mathds{1}\left\{Y\le y\right\}|X=x,\upeta\right] = \mathbb{E}_{\bm{\epsilon}|X=x,\upeta}\left[
				\mathds{1}\left\{g(x,\bm{\epsilon})<y\right\}\right].
			\end{equation}
			By conditional independence $X\ci\epsilon|\upeta$, the random vector $\bm{\epsilon}|X=x,\upeta$ has the same distribution as 
			$\bm{\epsilon}|\upeta$. Inserting the resulting expression back into Eq. \eqref{eq:def_H}
			gives that
			\begin{equation}		
				H(y|x) = \mathbb{E}_{\upeta\sim F_{\upeta}}\left[\mathbb{E}_{\bm{\epsilon}|\upeta}\left[\mathds{1}\left\{g(x,\bm{\epsilon})\le y\right\}\right]\right].
			\end{equation}
			By the law of total expectation, the RHS is simply
			$\mathbb{E}_{\bm \epsilon}[\mathds{1}\left\{g(x,\bm\epsilon)\le y\right\}]$, which
			by Eq. \eqref{eq:do_X} is $T(y|x)$. 
		\end{proof}
		
		By Lemma \ref{lem:H_T}, instead of estimating $T(y|x)$ which involves a $d$-dimensional integration over $\bm\epsilon$, one may instead estimate the function $H(y|x)$ that depends on a univariate integral w.r.t. $\upeta$.  
The specific expectation operator in Eq. (\ref{eq:def_H}) is known as the \textit{partial means} estimator \citep{newey1994kernel}, since the expectation is taken w.r.t. the unconditional distribution $F_{\upeta}$ rather than the conditional one $F_{\upeta|X=x}$. 
As such, $H(y|x)$ is a {\em counterfactual} distribution.

\section{An Example for a Triangular Data Generation Process}\label{sec:example}

Let's examine the non-monotonic case, $h(z,\eta)\ne F_{X|Z=z}^{-1}(F_{\upeta}(\eta))$.

\begin{align}
	& \bm\Sigma_{y,\eta}(X,Z):=\left[\begin{matrix}
		\sigma_y^2(X) & \sigma_2(x)\sigma_{\eta}(X,Z)\rho_{13}(X) \\
		\sigma_2(x)\sigma_{\eta}(X,Z)\rho_{13}(X) & \sigma_{\eta}^2(X,Z)
	\end{matrix}\right]\nonumber
\end{align}

The coefficient $\rho_{13}(x)$ determines the conditional correlation between $Y$ and $\upeta$ given $X=x, Z=z$.

\begin{align}
	& Z \sim F_Z, \quad \upeta\sim N(\mu_{\eta}, \sigma_{\eta}^2) \nonumber \\
	& X|Z=z, \upeta=\eta\sim N\left(\mu_1(z) + \rho_{12}\sigma_1(z)\left(  \frac{\eta-\mu_{\eta}}{\sigma_\eta}\right), \sigma_x^2(z)(1-\rho_{12}^2)\right)\nonumber\\
	& Y|X=x,Z=z\sim N\left(\mu_2(x) + \rho_{13}(x)\sigma_2(x)\left(  \frac{\eta(x,z)-\mu_{\eta}}{\sigma_\eta}\right), \sigma_y^2(x)(1-\rho_{13}^2(x))\right)\nonumber
\end{align}
Recall that $\upeta\ci Z$. Yet, $\upeta\not\ci Z|X$. Similarly, $Y\ci Z|X,\upeta$.
In the equation of $X$, each value of $Z$ characterizes a specific conditional Gaussian (given $\upeta$). In the equation of $Y$, each value of $X$ characterizes a specific conditional Gaussian (given $\upeta$).
We employ Gaussian Cholesky decomposition to simplify the analytical formula of the data generation process,
\begin{align}
	& X(z,\eta, \bm\varepsilon_x) = \mu_1(z) - \rho_{12}\sigma_1(z)\frac{\mu_{\eta}}{\sigma_\eta} +   \sigma_1(z)\rho_{12}\frac{\upeta}{\sigma_\eta} + \sigma_1(z)\sqrt{1-\rho_{12}^2}  \Phi^{-1}(\bm\varepsilon_x) \nonumber\\
	& Y(x,z,\bm\nu) = \mu_2(x) - \rho_{13}(x)\sigma_2(x)\frac{\mu_{\eta}}{\sigma_{\eta}} +   \sigma_2(x)\rho_{13}(x)\frac{\upeta(z,x)}{\sigma_{\eta}} + \sigma_2(x)\sqrt{1-\rho_{13}^2(x)}  \Phi^{-1}(\bm\nu)\label{Diffused_outcome}
\end{align}
\begin{align}
	&\upeta\frac{\sigma_1(z)\rho_{12}}{\sigma_\eta} = x - \mu_1(z) + \rho_{12}\sigma_1(z)\frac{\mu_{\eta}}{\sigma_\eta}  - \sigma_1(z)\sqrt{1-\rho_{12}^2}  \Phi^{-1}(\bm\varepsilon_x) \nonumber
\end{align}
\begin{align}
	&\upeta =\mu_{\eta}  + \frac{\sigma_\eta}{\sigma_1(z)\rho_{12}}(x  - \mu_1(z)) - \frac{\sigma_{\eta}\sqrt{1-\rho_{12}^2}}{\rho_{12}} \Phi^{-1}(\bm\varepsilon_x) \nonumber
\end{align}
\begin{align}
	& \sigma_{\eta}(x,z):=\frac{\sigma_{\eta}\sqrt{1-\rho_{12}^2}}{\rho_{12}} \nonumber \\
	& \mu_{\eta}(x,z):=\mu_{\eta} +  \frac{\sigma_{\eta}}{\sigma_1(z) \rho_{12}}\left(x-\mu_1(z)\right) \quad \mu_{\eta}=0 \quad (\text{a normalization}) \nonumber \\
	& \upeta(z,x) = \mu_{\eta}(x,z)- \sigma_{\eta}(x,z)\Phi^{-1}(\bm\varepsilon_x). \nonumber
\end{align}
where $\upeta(x,z):=\upeta|X=x, Z=z \sim N\left(\mu_{\eta}(x,z), \sigma_{\eta}^2(x,z)\right)$.
The unknown parameters characterizing the latent confounder $\upeta$ are: $(\rho_{12}, \rho_{13}(\cdot), \mu_{\eta}, \sigma_{\eta})$. The rest of the parameters can be directly measured from the joint distribution of the observables $(X,Y,Z)$.

Identifying $\rho_{12}$ from the covariance with observable.
\begin{align}
	& \mathbb{E}_{\upeta}\left[\mathbb{COV}\left[X,Z\right]|\upeta\right] = \mathbb{COV}\left[\mu_1(z), Z\right] + \sqrt{1-\rho_{12}^2} \mathbb{COV}\left[\sigma_1(z) \Phi^{-1}(\bm\varepsilon_x), Z\right] \nonumber
\end{align}
This gives,
\begin{align}
	& \sqrt{1-\rho_{12}^2}=\frac{\mathbb{COV}\left[X,Z\right]-\mathbb{COV}\left[\mu_1(z), Z\right] }{\mathbb{COV}\left[\sigma_1(z) \Phi^{-1}(\bm\varepsilon_x), Z\right]} \nonumber
\end{align}
Given $\rho_{12}$, the coefficient $\rho_{13}(x)$ can be identified by a linear regression of the form,
\begin{align}
	& Y(x,z,\bm\nu) = \overbrace{\mu_2(x)}^{\mathbb{E}[Y|X=x]} - \rho_{13}(x)\left[  \frac{\sigma_2(x)}{\sigma_1(z)\rho_{12}}\overbrace{(x-\mu_1(z))}^{x-\mathbb{E}[X|Z=z]}\right] + \epsilon \nonumber
\end{align}
The counterfactual outcome variable
\begin{align}
	& Y^{\text{CF}}(\text{Do}(X=x)) = \mu_2(x) + \rho_{13}(x)\sigma_2(x)\left(  \frac{\eta-\mu_{\eta}}{\sigma_\eta}\right) + \sigma_2(x)\sqrt{1-\rho_{13}^2(x)}  \Phi^{-1}(\bm\nu) \label{Diffused_CF_outcome}		
\end{align}
Define $\upbeta:=\rho_{13}^2(x)$. For any given pair $(x,z)$, $Y$ can be reconstructed from a noise propagated by $\bm\nu$,
\begin{align}
	& Y(x,z) = \mu_2(x) + \sqrt{\upbeta}\sigma_2(x)\left(  \frac{\eta(x,z)-\mu_{\eta}}{\sigma_\eta}\right) + \sigma_2(x)\sqrt{1-\upbeta}  \Phi^{-1}(\bm\nu). \label{Diffused_CF_outcome:beta}		
\end{align}
The non-contaminated output is defined as follows,
$$Y_0:=\Phi^{-1}(\bm\nu).$$ The observed output is a contaminated variant of $Y_0$ is known as a diffusion process \citep{ho2020denoising},\footnote{Diffusion models excel at generating high-quality, diverse data for various applications, including creative content generation. They achieve this by adding controlled noise to the data and subsequently learning to reverse the process, effectively capturing underlying structure in datasets.}
\begin{align}
	\frac{Y(x,z)-\mu_2(x)}{\sigma_2(x)}:=\sqrt{1-\upbeta}Y_{0}+\sqrt{\upbeta}\left(\frac{\eta(x,z)-\mu_{\eta}}{\sigma_\eta}\right).\label{eq:Y_t} 
\end{align}
It can be seen that the posterior distribution of $\eta$ given $x$ and $z$ does not play any role in $Y_0$. In the present diffusion model we only need to recover the parameters $(\mu_2(x), \sigma_2(x))$ characterizing $Y(x,z)$ for any give pair $(x,z)$. This requires to identify the nuisance parameters $\beta$ and $\rho_{12}$ as depicted above. Eq. \eqref{Diffused_CF_outcome} gives us the direct linkage between the identified parameters and the counterfactual output. Namely, by substituting $\eta(x,z)$ in Eq. \eqref{eq:Y_t}  with a random draw from the prior distribution of $\upeta$ we get a counterfactual sample from $Y^{\text{CF}}$.

The next building block is formulating a two-regime model describing the relationship among members in a complete family of distributions, facilitating generalizing
 the identifiability achieved in the simple example above.
\section{Completeness of Sufficient Statistic}\label{sec:completeness}
		
		\begin{definition}[Completeness]
		Let $X$ be a random variable with density functions parameterized by $\theta$. A statistic $T(X)$ is said to be complete if, for every measurable function $g$, the following holds:
		\[ \mathbb{E}[g(T(X))] = 0 \text{ for all } \theta \implies g(T(X)) = 0 \text{ almost everywhere.} \]
		\end{definition}		
		In simpler terms, if the expected value of any function of the statistic is zero for all possible values of the parameter $\theta$, then the function itself must be almost everywhere equal to zero.

		Now, we attend to a stronger variant of completeness, bridging between regimes nested in the same model.
		
	\textbf{Informal Definition (Strong completeness)}.
		The concept of strong completeness applied to the relationship between regimes (subsets of a parameter space), introduced by \cite{alamatsaz1983completeness}, asserts the invariance of expected values of functions with respect to a parameter in some regime (for some fixed setting of the remaining parameters), implying invariance with respect to this parameter in any other regime (other fixed settings of the remaining parameters). 
		\begin{figure}[H]{Two Mutually Exclusive Regimes Nested in a Single Model}
	\begin{center}
		\begin{tikzpicture}[scale=1.5]
			\draw[fill=blue, opacity=0.3] (0,0) circle (1);
			\draw[fill=green, opacity=0.3] (2,0) circle (1);
			\draw[fill=yellow, opacity=0.2] (1,0) circle (2);			
			\node[align=center, text=black] at (0,0) {Regime A \\ $\theta_2=\theta_2^{\prime}$ \\(\small randomized)};
			\node[align=center, text=black] at (2,0) {Regime B \\ $\theta_2=\theta_2^{\prime\prime}$\\(\small observational) };
			\node[align=center, text=black] at (1,-1.5) {A Two-regime Model\\ $\theta_1\in\Theta_1$};
			
			\draw[->, thick] (1.5,-0.5) -- (0.5,-0.5);
		\end{tikzpicture}
	\end{center}		
\end{figure}	
In the figure above we present a two-regime model parameterized by $(\theta_1,\theta_2)$ belonging to a strongly complete family of distributions with respect to $\theta_1$. Under strong completeness, family of regime $B$ (each member is a different realization of $\Theta_1$) is informative for policy makers that want to infer about family of regime $A$ because these two families share a common parameter space $\theta_1$. We next attend to a formal definition of strong completeness.
		\begin{definition}[Strong Completeness]
		A family of distributions $F:=\left\{ F_{Y|\theta_1, \theta_2} : \theta_1\in \Theta_1\right\}$ parameterized by $(\theta_1, \theta_2)$ of $d$-dimensional random vectors is said to be strongly complete with respect to $\theta_1$, if for every function $m:\mathbb{R}^d \to\mathbb{R}$ satisfying,
		\[ \mathbb{E}_{Y\sim F_{Y|\theta_1, \theta_2}}\left[m(\bm Y)\right] = 0 \quad \forall\theta_1 \in \Theta_1, \]
		we have that,
		\[ \mathbb{E}_{ Y\sim F_{Y|\theta_1, \theta_2}}\left[\mathds{1}\left\{m(\bm Y)=0\right\}\right] = 1 \quad \forall(\theta_1,\theta_2) \in \Theta_1\times \Theta_2. \]
	    \end{definition}
		Consequently, in the presence of a model related to a strongly complete family of distributions with respect to common parameters $\theta_1\in\Theta_1$, invariance to these common parameters in one regime $\theta_2=\theta_2^{\prime\prime}$ holds also in the rest of regimes $\theta_2^{\prime}\in\Theta_2$. This enables to inform policy decisions in different real-world settings (regime $A$).

		\section{Identifiability}\label{Sec:Identification} 
		In this section we prove the identifiability of the intervention $T(y|x)$. This is done in two steps. First, we alleviate the issue of the triangular model dependence on multiple unknown quantities. The most prominent one being the dependence of the disturbances on the endogenous covariate. This is implemented by employing an equivalent representation for the structural functions (lemmas \ref{Lemma:Decoupling} and \ref{lemma:Invariant:Do} to follow) in a canonical form, which is invariant to the specification of the unknown quantities (joint-distribution of the disturbances). Second, we use  canonical form representation to characterize an equivalence class of normalized-triangular models and establish that identifiability is universal in that it holds for any structural continuous function in the equivalence class (theorem \ref{Theorem:Identifiability}).

		\subsection{Normalized triangular models via error-decoupling mechanism}
		
		
		In what follows, ${X}^*\overset{d}{=}X$ indicates that the two random variables $X^*$ and $X$  have the same distribution. Using this notation, the triangular model is reformulated via an error-decoupling mechanism
		with disturbances that are uniformly distributed.
		To this end, given random variables $\upomega$ and $\nu_1,\ldots,\nu_d$, define the following transformed random variables,
		\begin{equation}
			\Lambda_0 = \Lambda_0(\upomega):=F_{\upeta}^{-1}(\upomega),
			\label{eq:Lambda_0}
		\end{equation} 
		and recursively for $1\le j\le d$, 
		\begin{equation}
			\Lambda_j = \Lambda_j(\upomega,\nu_1,...,\nu_j):=
			F_{\epsilon_j|(\Lambda_0(\upomega),\Lambda_1(\upomega,\nu_1),\ldots,\Lambda_{j-1}(\upomega,\nu_1,\ldots,\nu_{j-1}))}^{-1}(\nu_j).
			\label{eq:Lambda_j}		
		\end{equation}
		The notations in Eqs. \eqref{eq:Lambda_0}-\eqref{eq:Lambda_j} are used to perform error-decoupling in the following lemma (see proof in appendix \ref{Appendix:Proof}):
		\begin{lemma}
			\label{Lemma:Decoupling}(Error-decoupling)
			Let $\upomega\sim U[0,1]$ and $\bm{\nu}:=(\nu_1,\ldots,\nu_d) \sim U[0,1]^d$
			be independent of $\upomega$. 
			Consider the following triangular model,
			where $\Lambda_0$ and $\Lambda_j$ are defined in Eqs. (\ref{eq:Lambda_0}) and (\ref{eq:Lambda_j}),  
			\begin{eqnarray}
				X^*&=&h^*(Z,\upomega):=h\left(Z,\Lambda_0(\upomega)\right)
				\label{eq:tilde_X},\\
				Y^*&=&g^*(X^*,\upomega,\bm{\nu}):=g\left(X^*,\Lambda_1(\upomega,\nu_1),\ldots,\Lambda_d(\upomega,\nu_1,\ldots,\nu_d)\right). \label{eq:tilde_Y}
			\end{eqnarray}
			Then, $(X^*, Y^*, Z) \overset{d}{=}(X,Y,Z)$.
		\end{lemma}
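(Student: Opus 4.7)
The plan is to show that the random vector $(Z,\Lambda_0(\upomega),\Lambda_1(\upomega,\nu_1),\ldots,\Lambda_d(\upomega,\nu_1,\ldots,\nu_d))$ has the same joint distribution as $(Z,\upeta,\epsilon_1,\ldots,\epsilon_d)$; once this is established, applying the deterministic measurable maps $h$ and $g$ to both sides and using the definitions (\ref{eq:tilde_X})--(\ref{eq:tilde_Y}) gives the claim by the standard fact that equal-in-distribution inputs yield equal-in-distribution outputs.

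The first step is to handle $Z$ separately. Since $\upomega$ and $\bm{\nu}$ are independent of each other and of $Z$, and since by the modelling assumption $Z\ci(\upeta,\bm{\epsilon})$, it suffices to prove the unconditional statement $(\Lambda_0,\Lambda_1,\ldots,\Lambda_d)\stackrel{d}{=}(\upeta,\epsilon_1,\ldots,\epsilon_d)$, and then tensor with the marginal of $Z$.

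The core step is an induction on $j$. For $j=0$: since $\upomega\sim U[0,1]$, the inverse-CDF (Smirnov) transform yields $\Lambda_0(\upomega)=F_{\upeta}^{-1}(\upomega)\stackrel{d}{=}\upeta$; assumption \ref{asu:cond:continuity} ensures all relevant conditional CDFs are well defined on a set of full measure so the inverse transforms used below exist. For the inductive step, assume that $(\Lambda_0,\ldots,\Lambda_{j-1})\stackrel{d}{=}(\upeta,\epsilon_1,\ldots,\epsilon_{j-1})$. Because $\nu_j$ is independent of $(\upomega,\nu_1,\ldots,\nu_{j-1})$ and hence independent of $(\Lambda_0,\ldots,\Lambda_{j-1})$, and because $\nu_j\sim U[0,1]$, the conditional distribution of $\Lambda_j=F_{\epsilon_j\mid\Lambda_0,\ldots,\Lambda_{j-1}}^{-1}(\nu_j)$ given $(\Lambda_0,\ldots,\Lambda_{j-1})$ coincides pointwise with the distribution of $\epsilon_j\mid(\upeta,\epsilon_1,\ldots,\epsilon_{j-1})$ (again by the inverse-CDF transform). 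Combining this conditional law with the inductive hypothesis for the marginal of $(\Lambda_0,\ldots,\Lambda_{j-1})$ gives the joint equality $(\Lambda_0,\ldots,\Lambda_j)\stackrel{d}{=}(\upeta,\epsilon_1,\ldots,\epsilon_j)$.

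Finally, re-introducing $Z$: the triple $(Z,\upomega,\bm{\nu})$ is mutually independent by hypothesis, so the joint law of $(Z,\Lambda_0,\ldots,\Lambda_d)$ is the product of $F_Z$ with the law of $(\Lambda_0,\ldots,\Lambda_d)$, which by the previous step equals the product of $F_Z$ with the law of $(\upeta,\bm{\epsilon})$; but by the model assumption $Z\ci(\upeta,\bm{\epsilon})$ the latter is exactly the joint law of $(Z,\upeta,\bm{\epsilon})$. Applying $h$ to $(Z,\Lambda_0)$ and $g$ to $(h(Z,\Lambda_0),\Lambda_1,\ldots,\Lambda_d)$ then yields $(X^*,Y^*,Z)\stackrel{d}{=}(X,Y,Z)$.

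The main technical obstacle I anticipate is ensuring the conditional inverse CDFs in (\ref{eq:Lambda_j}) are well defined (and measurable as functions of the conditioning arguments): one must invoke the continuity of $\bm{\epsilon}\mid\upeta$ granted by assumption \ref{asu:cond:continuity} together with a regular-conditional-distribution argument, and restrict to the full-measure set where the conditional CDFs are strictly increasing so that the Smirnov transform applies unambiguously. All subsequent steps are then routine consequences of independence and the change-of-variables property of measurable maps.
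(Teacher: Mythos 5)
Your proposal is correct and follows essentially the same route as the paper's proof: both rest on the (iterated, conditional) inverse probability integral transform to show $(\Lambda_0,\Lambda_1,\ldots,\Lambda_d)\overset{d}{=}(\upeta,\bm{\epsilon})$, combine this with the independence of $Z$ from the disturbances to get the joint law of $(Z,\Lambda_0,\ldots,\Lambda_d)$, and then push the equality through the deterministic maps $h$ and $g$. The only difference is one of detail: you make explicit the induction over $j$ and the measurability/well-definedness of the conditional inverse CDFs, steps the paper compresses into a single ``similarly, by the inverse probability integral transform'' assertion.
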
						
		There are two major differences
		between the normalized triangular model Eqs.  \eqref{eq:tilde_X}-\eqref{eq:tilde_Y}
		and the original Eqs. \eqref{X:Structural}-\eqref{Y:Structural}. 
		The first difference is that the disturbances $\upomega$ and $\bm \nu$ are {\em independent}
		and in fact uniformly distributed. 
		The second difference, or the price to pay for this decoupling, is that now the disturbance $\omega$ appears explicitly both in the equations for $X$ and for $Y$, namely inside the functions $h^*$ and $g^*$. Nevertheless, this price is negligible as it alleviates a major challenge in the original triangular model of Eqs. \eqref{X:Structural}-\eqref{Y:Structural}. This challenge stems from the fact that $\upeta$ and $\bm \epsilon$ may in general be dependent, in an unknown fashion. 
		
		\begin{lemma}[Normalized Interventional Distribution]\label{lemma:Invariant:Do}
			Suppose that assumption \ref{asu:cond:continuity} holds. Namely, $\bm{\epsilon}|\upeta$ is a $d$-dimensional continuous random vector for any $\upeta$. 
			In terms of the decoupled system as in Eqs. (\ref{eq:tilde_X})-(\ref{eq:tilde_Y}), the quantity of interest (\ref{eq:do_X})
			admits the following form,   
			\begin{equation}
				H(y|x) = \mathbb{E}_{\substack{\upomega\sim U[0,1]\\ \bm\nu\sim U[0,1]^d}}\braket{\mathds{1}\curly{g^*(x,\upomega,\bm\nu)\le y}}.
				\label{eq:Tyx_nu}
			\end{equation}
		\end{lemma}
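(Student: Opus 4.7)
The plan is to recognize that the construction of $(\Lambda_0,\Lambda_1,\ldots,\Lambda_d)$ in Eqs. \eqref{eq:Lambda_0}--\eqref{eq:Lambda_j} is precisely a Rosenblatt conditional-inverse-CDF transformation. Once I show that this transformation sends independent uniforms $(\upomega,\bm\nu)$ to a random vector distributed exactly as $(\upeta,\bm\epsilon)$, the lemma reduces immediately to Lemma \ref{lem:H_T}. Assumption \ref{asu:cond:continuity} enters here: continuity of $\bm\epsilon|\upeta$ ensures that each conditional CDF appearing in \eqref{eq:Lambda_j} is (almost surely) continuous and strictly increasing on the relevant support, so its generalized inverse is a well-defined measurable map and the probability-integral-transform identity applies.

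First, I would prove by induction on $j\in\{0,1,\ldots,d\}$ that $(\Lambda_0(\upomega),\Lambda_1(\upomega,\nu_1),\ldots,\Lambda_j(\upomega,\nu_1,\ldots,\nu_j))\overset{d}{=}(\upeta,\epsilon_1,\ldots,\epsilon_j)$. The base case $j=0$ is the standard fact that $F_\upeta^{-1}(\upomega)\sim F_\upeta$ when $\upomega\sim U[0,1]$. For the inductive step, condition on the preceding $j-1$ coordinates: by the induction hypothesis the distribution of $(\Lambda_0,\ldots,\Lambda_{j-1})$ matches that of $(\upeta,\epsilon_1,\ldots,\epsilon_{j-1})$, and since $\nu_j$ is independent of $(\upomega,\nu_1,\ldots,\nu_{j-1})$ and uniform, the conditional distribution of $\Lambda_j$ given $(\Lambda_0,\ldots,\Lambda_{j-1})=(\eta,e_1,\ldots,e_{j-1})$ is exactly that of $\epsilon_j|(\upeta,\epsilon_1,\ldots,\epsilon_{j-1})=(\eta,e_1,\ldots,e_{j-1})$ by the probability-integral-transform. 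Combining the conditional and marginal laws yields the claim for index $j$.

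Second, applying this distributional identity at $j=d$ together with the definition $g^*(x,\upomega,\bm\nu)=g(x,\Lambda_1(\upomega,\nu_1),\ldots,\Lambda_d(\upomega,\nu_1,\ldots,\nu_d))$ and the fact that $\mathds{1}\{g(x,\cdot)\le y\}$ is a bounded measurable function, gives
\begin{equation*}
\mathbb{E}_{\upomega\sim U[0,1],\,\bm\nu\sim U[0,1]^d}\braket{\mathds{1}\curly{g^*(x,\upomega,\bm\nu)\le y}}=\mathbb{E}_{(\upeta,\bm\epsilon)\sim F_{\upeta,\bm\epsilon}}\braket{\mathds{1}\curly{g(x,\bm\epsilon)\le y}}.
\end{equation*}
Marginalizing over $\upeta$ on the right-hand side (by the law of total expectation) collapses the joint distribution to $F_{\bm\epsilon}$, yielding $T(y|x)$ as defined in Eq. \eqref{eq:do_X}. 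Invoking Lemma \ref{lem:H_T} to replace $T(y|x)$ by $H(y|x)$ completes the argument.

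The main technical obstacle is the inductive step: verifying that the conditional-inverse construction genuinely reproduces the conditional law $\epsilon_j|(\upeta,\epsilon_1,\ldots,\epsilon_{j-1})$ in a measurable way. This is exactly where assumption \ref{asu:cond:continuity} is used, since without continuity of $\bm\epsilon|\upeta$ the conditional CDFs could have jumps, the generalized inverse would only give equality in distribution on a set of full measure, and one would have to argue separately that the measure-zero exceptional set does not propagate through the recursion. Under assumption \ref{asu:cond:continuity} this issue disappears and the recursion is clean.
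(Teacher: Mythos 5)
Your proposal is correct and follows essentially the same route as the paper: both arguments rest on the probability-integral-transform/Rosenblatt identity showing that $(\Lambda_0(\upomega),\Lambda_1(\upomega,\nu_1),\ldots,\Lambda_d(\upomega,\nu_1,\ldots,\nu_d))\overset{d}{=}(\upeta,\bm\epsilon)$ --- the paper simply performs the equivalent change of variables in the reverse direction, starting from $T(y|x)=\mathbb{E}_{\upeta}[\mathbb{E}_{\bm\epsilon|\upeta}[\mathds{1}\{g(x,\bm\epsilon)\le y\}]]$ and mapping $(\upeta,\bm\epsilon)$ to $(\upomega,\bm\nu)$ --- and then conclude via Lemma \ref{lem:H_T}. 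Your explicit induction merely spells out what the paper's proof of Lemma \ref{Lemma:Decoupling} records as a one-line appeal to the inverse probability integral transform, so no substantive difference.
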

		
		\subsection{Identifiability of the interventional distribution}
		
		
		Recall that our primary goal is to estimate the intervention $T(y|x)$ of (\ref{eq:do_X}). 
	The key difficulty is that observing $(X,Y,Z)$ does not uniquely determine the functions ${h}^*$ and ${g}^*$. We alleviate this issue by introducing a set of functions, which are all indistinguishable with respect to the observations. To this end, let $\tilde{F}$ be an arbitrary continuous distribution. Define		
		\begin{align}
			&\widetilde{\mathcal{C}}:=\left\{(\widetilde{h},\widetilde{g}): \widetilde{X}=\widetilde{h}(Z,\widetilde{\upomega}), \, \widetilde{Y}=\widetilde{g}(\widetilde{X},\widetilde{\upomega},\bm\nu),\hspace{0.5em} \left(\widetilde{X},\widetilde{Y},Z\right)\overset{d}{=}\left(X^*,Y^*,Z\right), \,  \widetilde{\upomega}\sim U[0,1], \bm\nu\sim U[0,1]^d \right\}.
			\label{eq:equivalence:class}
	\end{align}
	where $\bm\nu$ and $\widetilde{\upomega}$ are independent. 	
In other words, the set $\widetilde{\cal C}$ is the {\em equivalence class} of 
		the given triangular model. Namely, all pairs of structural functions $\widetilde{h}$ and $\widetilde{g}$ give rise to the same distribution for the observables $(X,Y,Z)$. 
For any pair $(\tilde{h},\tilde{g})$ there is a corresponding function 
\begin{equation}
	\tilde{H}(y|x) = \mathbb{E}_{\substack{\tilde{\upomega}\sim U[0,1]\\ \bm\nu\sim U[0,1]^d}}\braket{\mathds{1}\curly{\tilde{g}(x,\tilde{\upomega},\bm\nu)\le y}}.
	\label{eq:Tyx_nu}
\end{equation}	
Note that any pair of structural functions $(\widetilde{h}, \widetilde{g})\in\mathcal{\widetilde{C}}$  constitutes a distribution, 
\begin{align}
\mathcal{F}_{X,\upomega|Z=z}^{\widetilde{h}}(x,\omega):=\mathbb{E}\left[\mathds{1}\left\{\widetilde{h}(z,\upomega)<x, \upomega<\omega\right\}\right], \label{eq:F:X:omega}
\end{align}
as well as a counterfactual distribution,
\[
F_{X,Y|Z=z}^{\text{CF}}(x,y):=\mathbb{E}\left[\mathds{1}\left\{\widetilde{h}(z, \upomega)<x, \widetilde{g}(x, \upomega,\bm\nu) < y\right\}\right],\] 
such that the counterfactual density is $f_{X,Y|Z=z}^{\text{CF}}(x,y):=\mathbb{E}\left[\delta(\widetilde{h}(z, \upomega)-x) \delta(\widetilde{g}(x, \upomega,\bm\nu)-y)\right]$.  In next steps, this density of interest is shown to be identifiable.
\begin{lemma}[Fourier Transform of Counterfactual Density]\label{lemma:Fourier:CF:representation}
Under assumptions \ref{asu:common}-\ref{asu:Integrability} the Fourier transform of the counter factual distribution admits the representation as a Fredholm integral equation of the first kind for any pair of frequencies $(\xi_1, \xi_2)$ and value of $z$,
	\begin{align}
		\Gamma_{\xi_1, \xi_2}(z;(\widetilde{h}, \widetilde{g}))=\mathbb{E}_{\substack{\upomega\sim U[0,1]\\X\sim F_{X|Z=z}}}\left[\int \exp(-i\xi_1 \cdot X) \exp(-i\xi_2 \cdot \widetilde{g}(X, \upomega, \bm\nu))d\bm\nu\right].\label{eq:box:counter:X:Y}
	\end{align}
\end{lemma}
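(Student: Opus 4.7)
My plan is to compute the two-dimensional Fourier transform of $f^{\text{CF}}_{X,Y|Z=z}$ directly from its delta-representation, use the sifting property of the Dirac delta to collapse the integrals over $x$ and $y$, and then invoke the equivalence-class identity $(\widetilde X,Z)\overset{d}{=}(X,Z)$ from (\ref{eq:equivalence:class}) to rewrite the resulting expectation in the claimed form.

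Concretely, I would start from $\Gamma_{\xi_1,\xi_2}(z;(\widetilde h,\widetilde g))=\iint e^{-i\xi_1 x-i\xi_2 y}\,f^{\text{CF}}_{X,Y|Z=z}(x,y)\,dx\,dy$ and substitute $f^{\text{CF}}_{X,Y|Z=z}(x,y)=\mathbb{E}[\delta(\widetilde h(z,\upomega_1)-x)\delta(\widetilde g(x,\upomega_2,\bm\nu)-y)]$, reading the two occurrences of $\upomega$ as independent uniform realizations $\upomega_1,\upomega_2\sim U[0,1]$; this independence encodes the do-intervention severing the observational $X$-$\upomega$ coupling. Interchanging the $(x,y)$-integration with the outer expectation is legitimate by Fubini, using boundedness of the complex exponentials and Assumption~\ref{asu:Integrability}. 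Sifting against $\delta(\widetilde g(x,\upomega_2,\bm\nu)-y)$ replaces $y$ by $\widetilde g(x,\upomega_2,\bm\nu)$, and sifting against $\delta(\widetilde h(z,\upomega_1)-x)$ replaces the generic $x$ throughout by $\widetilde h(z,\upomega_1)$, yielding
\begin{equation*}
\Gamma_{\xi_1,\xi_2}(z;(\widetilde h,\widetilde g))=\mathbb{E}\bigl[e^{-i\xi_1\widetilde h(z,\upomega_1)}\,e^{-i\xi_2\,\widetilde g(\widetilde h(z,\upomega_1),\upomega_2,\bm\nu)}\bigr].
\end{equation*}

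Because $\upomega_1$ is independent of $(\upomega_2,\bm\nu)$, and because $\widetilde h(z,\upomega_1)$ with $\upomega_1\sim U[0,1]$ has law $F_{X|Z=z}$ by (\ref{eq:equivalence:class}), relabeling $\widetilde h(z,\upomega_1)\mapsto X\sim F_{X|Z=z}$ and $\upomega_2\mapsto\upomega$ converts the expectation above into $\mathbb{E}_{X\sim F_{X|Z=z},\,\upomega\sim U[0,1]}\bigl[\int e^{-i\xi_1 X}\,e^{-i\xi_2\widetilde g(X,\upomega,\bm\nu)}\,d\bm\nu\bigr]$, which is precisely the RHS of (\ref{eq:box:counter:X:Y}). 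The Fredholm-first-kind structure is then made explicit by pulling the $X$-marginal outside: $\Gamma_{\xi_1,\xi_2}(z;(\widetilde h,\widetilde g))=\int f_{X|Z=z}(x)\,e^{-i\xi_1 x}\bigl[\iint e^{-i\xi_2\widetilde g(x,\upomega,\bm\nu)}\,d\upomega\,d\bm\nu\bigr]\,dx$, in which the known kernel $f_{X|Z=z}(x)\,e^{-i\xi_1 x}$ acts on the unknown functional $\iint e^{-i\xi_2\widetilde g(x,\upomega,\bm\nu)}\,d\upomega\,d\bm\nu$ encoding the structural generator $\widetilde g$.

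The main technical obstacle is giving rigorous meaning to the delta-function calculus underlying the definition of $f^{\text{CF}}$. The cleanest remedy is a mollifier approximation $\delta_\varepsilon\downarrow\delta$ combined with the continuity of $\widetilde h$ and $\widetilde g$ inherited from the triangular model, after which dominated convergence transfers the limit through the outer expectation uniformly on compact frequency windows. A secondary conceptual point worth flagging is the implicit reading of the two $\upomega$'s as independent copies; without this reading one would recover the observational Fourier transform rather than the counterfactual one, which would defeat the purpose of the construction.
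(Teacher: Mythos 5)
Your argument is correct and follows essentially the same route as the paper's proof: both compute the Fourier transform of the counterfactual density directly and then use the fact that $\widetilde{h}(z,\upomega)$ with $\upomega\sim U[0,1]$ has law $F_{X|Z=z}$ (membership in $\widetilde{\mathcal{C}}$, which the paper compresses into the phrase ``by probability integral transform'') to rewrite the result as an expectation over $X\sim F_{X|Z=z}$ and an independent $\upomega$. Your explicit delta-sifting computation and, in particular, your insistence that the two occurrences of $\upomega$ in $f^{\text{CF}}_{X,Y|Z=z}$ be read as independent copies supply exactly the detail the paper omits --- that reading is the correct one (it matches the later definition in Eq.~\eqref{equ:FYX:CF:givenZ} with $\widetilde{\upomega}\ci\upomega$, and without it one would indeed recover the observational transform of Lemma~\ref{lemma:Fourier:representation} instead of the counterfactual one).
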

A sufficient condition for identifiability in equivalence class $\mathcal{\widetilde{C}}$ is to uniquely determine the Fourier transform of the counterfactual density from the joint distribution of the triplets $(X,Y,Z)$, namely, having that,
\begin{align}
	\Gamma_{\xi_1, \xi_2}^{\text{CF}}(z; (\widetilde{h}, \widetilde{g})) = \Gamma_{\xi_1, \xi_2}^{\text{CF}}(z; (h^*, g^*)) \quad \forall (z,\xi_1, \xi_2).\label{eq:coeff:obs:pair:CF:Fourier}
\end{align} 
We rely on the following lemma as a building block in establishing that this condition is satisfied under assumptions \ref{asu:common}-\ref{asu:Integrability}, by formulating the observational equivalence condition in terms of Fourier transform.
\begin{lemma}[Fourier Transform Coefficients]\label{lemma:Fourier:representation}
	Under assumptions \ref{asu:common}-\ref{asu:Integrability}
	the following Fredholm integral equation of the first kind uniformly holds for any pair $(\widetilde{g}, \widetilde{h})\in\mathcal{\widetilde{C}}$ constituting a density $\mathcal{F}_{X,\upomega|Z=z}^{h^*}(x,\omega))$ as in Eq. \eqref{eq:F:X:omega},
	\begin{align}
		\Gamma_{\xi_1, \xi_2}(z)=   
		\mathbb{E}_{(X,\upomega)\sim \mathcal{F}_{X,\upomega|Z=z}^{\widetilde{h}}}\left[\int \exp(-i\xi_1 \cdot X) \exp(-i\xi_2 \cdot \widetilde{g}(X, \upomega, \bm\nu))d\bm\nu\right] \quad \forall (z, \xi_1, \xi_2) ,\label{eq:coeff:obs:pair:Fourier}
	\end{align}
    where the LHS is a specific Fourier transform coefficient of the known density $f_{X,Y|Z=z}$ for frequencies $(\xi_1, \xi_2)$. 
\end{lemma}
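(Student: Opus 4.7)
The plan is to recognize the left-hand side as the conditional Fourier transform of the known joint density $f_{X,Y|Z=z}$ and then evaluate that same transform using any representative $(\widetilde{h},\widetilde{g})$ from the equivalence class. By Assumption \ref{asu:Integrability}, $f_{X,Y|Z=z}$ is known and square-integrable, so I would first write $\Gamma_{\xi_1,\xi_2}(z)=\mathbb{E}[e^{-i\xi_1 X-i\xi_2 Y}\mid Z=z]$. The key leverage comes from the defining property of $\widetilde{\mathcal{C}}$ in \eqref{eq:equivalence:class}, namely $(\widetilde{X},\widetilde{Y},Z)\overset{d}{=}(X,Y,Z)$, which immediately allows me to replace $(X,Y)$ in this conditional Fourier transform by any $(\widetilde{X},\widetilde{Y})$ of the form $\widetilde{X}=\widetilde{h}(Z,\widetilde{\upomega})$, $\widetilde{Y}=\widetilde{g}(\widetilde{X},\widetilde{\upomega},\bm\nu)$ without altering its value.

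Next, I would carry out a direct Fubini calculation. Because $\widetilde{\upomega}\sim U[0,1]$ and $\bm\nu\sim U[0,1]^d$ are independent by construction, I can pull the $\bm\nu$-integral inside the expectation over $\widetilde{\upomega}$ to obtain
\begin{equation*}
\Gamma_{\xi_1,\xi_2}(z) = \mathbb{E}_{\widetilde{\upomega}}\!\left[e^{-i\xi_1\widetilde{h}(z,\widetilde{\upomega})}\int_{[0,1]^d} e^{-i\xi_2\widetilde{g}(\widetilde{h}(z,\widetilde{\upomega}),\widetilde{\upomega},\bm\nu)}\,d\bm\nu\right].
\end{equation*}
By the very definition in \eqref{eq:F:X:omega}, the joint law of the pair $(\widetilde{h}(z,\widetilde{\upomega}),\widetilde{\upomega})$ is exactly $\mathcal{F}_{X,\upomega|Z=z}^{\widetilde{h}}$. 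Rewriting the outer expectation as integration against this pushforward measure and renaming $\widetilde{\upomega}$ as $\upomega$ yields precisely \eqref{eq:coeff:obs:pair:Fourier}, completing the identity.

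The subtle point, and the one I expect to require the most care to state cleanly, is how to interpret the measure $\mathcal{F}_{X,\upomega|Z=z}^{\widetilde{h}}$: for fixed $z$ the coordinate $X$ is a deterministic function of $\upomega$ via $X=\widetilde{h}(z,\upomega)$, so the joint law is degenerate and concentrated on the one-dimensional graph of $\widetilde{h}(z,\cdot)$. Assumption \ref{asu:continuity} guarantees that $X\mid Z=z$ is continuous and that $\partial\widetilde{h}/\partial\omega$ vanishes only on a null set, which is exactly what is needed to make the pushforward well-defined and to legitimize the change of variables between $\widetilde{\upomega}$-integration and $(X,\upomega)$-integration under $\mathcal{F}_{X,\upomega|Z=z}^{\widetilde{h}}$. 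Beyond this measure-theoretic interpretation, the argument is pure bookkeeping; no deeper analytic obstacle is anticipated, and the result is really an \emph{observational-equivalence} reformulation of the raw identity $\mathbb{E}[e^{-i\xi_1 X-i\xi_2 Y}\mid Z=z]=\Gamma_{\xi_1,\xi_2}(z)$, asserted now simultaneously for every pair in $\widetilde{\mathcal{C}}$.
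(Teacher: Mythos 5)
Your argument is correct and follows essentially the same route as the paper's proof: both identify $\Gamma_{\xi_1,\xi_2}(z)$ as the conditional Fourier coefficient $\mathbb{E}[e^{-i\xi_1 X - i\xi_2 Y}\mid Z=z]$, substitute the generator representation $Y=\widetilde{g}(X,\upomega,\bm\nu)$, $X=\widetilde{h}(z,\upomega)$ via the distributional equivalence defining $\widetilde{\mathcal{C}}$, and integrate out the independent $\bm\nu$ to land on the expectation against $\mathcal{F}^{\widetilde{h}}_{X,\upomega|Z=z}$. Your version is in fact slightly more streamlined, since you skip the paper's detour through the inverse Fourier representation of the density, and your closing remark on the degenerate pushforward measure and the role of Assumption \ref{asu:continuity} makes explicit a point the paper leaves implicit.
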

The following theorem \ref{theorem:coefficients} states that for any two pairs of functions $(g^*,h^*)$ and  $(\widetilde{g}, \widetilde{h})$ satisfying observational equivalence (Eq. \eqref{eq:coeff:obs:pair:Fourier}), counterfactual equivalence also holds. 
\begin{theorem}[Coefficients comparison]\label{theorem:coefficients}
	Let $m(x,y):= \exp(-i\xi_1 \cdot x) \exp(-i\xi_2 \cdot y)$. Suppose that the following observational equivalence holds for two mechanisms $(h^*, g^*)$ and $(\widetilde{h}, \widetilde{g})$ generating $(X^*,Y^*)$ and $(\widetilde{X},\widetilde{Y})$, respectively, in regime $\rho\in\left\{0, 1\right\}$,
	\begin{align}
	& \mathbb{E}\left[ m(\widetilde{X}, \widetilde{Y})|Z=z,\rho=1\right]=\mathbb{E}\left[ m(X^*, Y^*)|Z=z,\rho=1\right] = \varphi(z)  \quad \forall z, \label{eq:obs:equiv}
\end{align}
such that, $(X^*,Y^*)|Z, \rho$ and $(\widetilde{X},\widetilde{Y})|Z, \rho$ are strongly complete mixtures of Gaussians  with respect to $Z$. Then,  
	\[\mathbb{E}\left[ m(\widetilde{X}, \widetilde{Y})|Z=z,\rho=0\right]=\mathbb{E}\left[ m(X^*, Y^*)|Z=z,\rho=0\right] \quad\forall z.\]
	This implies that in any regime $\rho$ the conditional expectation of $m(\cdot,\cdot)$ is invariant to the choice of the mechanism. 
\end{theorem}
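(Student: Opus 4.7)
My plan is to invoke strong completeness of the mixture-of-Gaussians family with respect to $Z$ to transport the observational equivalence in regime $\rho = 1$ into the counterfactual equivalence in regime $\rho = 0$. The guiding principle is that in a family strongly complete w.r.t.\ $Z$ (with $\rho$ as the secondary parameter), any function of $(X, Y)$ whose conditional expectation given $Z$ vanishes identically in $z$ at one fixed $\rho$ must vanish almost surely under every density in the family, and hence vanish in conditional expectation at every other value of $\rho$.

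First, I would recast both the hypothesis and the desired conclusion as integrals of $m$ against a signed measure. Define the mechanism-wise difference of conditional densities
\[
\Delta_\rho(x, y; z) \;:=\; f^{(\rho)}_{\widetilde X, \widetilde Y \mid Z}(x, y \mid z) - f^{(\rho)}_{X^*, Y^* \mid Z}(x, y \mid z),
\]
which, being a difference of two infinite Gaussian mixtures, stays in the same parametric class. The hypothesis is then $\int m(x, y)\, \Delta_1(x, y; z)\, dx\, dy = 0$ for every $z$, while the conclusion is the same identity with $\Delta_0$ in place of $\Delta_1$.

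Second, I would bridge the two regimes by reducing both mechanisms to a single parametric family. Introduce an auxiliary Bernoulli mechanism-selection variable $U$ with $\Pr(U = 1) = \Pr(U = 0) = 1/2$, independent of everything else, and the test function $\Lambda(x, y, u) := m(x, y)(2u - 1)$. A direct calculation gives
\[
\mathbb{E}\bigl[\Lambda(X, Y, U)\mid Z = z, \rho\bigr] \;=\; \tfrac{1}{2} \int m(x, y)\, \Delta_\rho(x, y; z)\, dx\, dy,
\]
so the observational-equivalence hypothesis reads $\mathbb{E}[\Lambda \mid Z, \rho = 1] \equiv 0$ in $z$. Since the joint density of $(X, Y, U) \mid Z, \rho$ is still an infinite mixture of Gaussians in $(X, Y)$ tensored with a Bernoulli in $U$, and the two constituent mechanism-specific families are by assumption strongly complete in $Z$, the combined family inherits strong completeness w.r.t.\ $Z$. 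Strong completeness then forces the zero-expectation identity for $\Lambda$ to transfer from $\rho = 1$ to $\rho = 0$, yielding $\int m(x, y)\, \Delta_0(x, y; z)\, dx\, dy = 0$ for every $z$, which is exactly the counterfactual equivalence.

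The principal technical obstacle is justifying that the combined Gaussian-mixture family indeed inherits strong completeness w.r.t.\ $Z$, since this is what enables the regime transfer. The verification reduces to showing that a convex combination of two strongly complete mixture-of-Gaussians families sharing the same $Z$-parameter space retains the rich $Z$-separation property; this follows from closure of the Gaussian-mixture class under convex combinations and the fact that the $Z$-indexing is common to both constituents. A minor additional subtlety is that $m$ is complex-valued, which is handled by applying the real-valued strong completeness definition to the real and imaginary parts of $\Lambda$ separately.
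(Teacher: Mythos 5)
Your proof is correct and follows essentially the same route as the paper's: both reduce the two-mechanism comparison to a single test function whose conditional expectation vanishes in $z$ at $\rho=1$, split it into real and imaginary parts, and invoke strong completeness of the Gaussian-mixture family with respect to $Z$ (via the Alamatsaz result) to transfer the vanishing to $\rho=0$. The only substantive difference is bookkeeping: the paper embeds both mechanisms into one family by taking the joint distribution of the independent pair $(X^*,Y^*,\widetilde{X},\widetilde{Y})$ given $Z,\rho$ --- which remains a pure Gaussian mixture, so the completeness theorem applies directly --- whereas your Bernoulli-selector construction leaves the Gaussian-mixture class and therefore needs the extra step (which you flag but do not fully discharge) of showing that the augmented family with the discrete coordinate $U$ still inherits strong completeness with respect to $Z$.
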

The theorem below relies on theorem \ref{theorem:coefficients} to show that Eq. \eqref{eq:coeff:obs:pair:CF:Fourier} holds. Simply put, it states that the interventional distribution is invariant with respect to the choice of the structural form functions in the equivalence class $\widetilde{\mathcal{C}}$ (see proof in appendix \ref{Appendix:Proof}).
\begin{theorem}[Identifiability]\label{Theorem:Identifiability}
	Given assumptions \ref{asu:common}-\ref{asu:Integrability},  the intervention function is uniquely determined from the joint distribution of the triplets $(X,Y,Z)$. Namely, $\forall (\widetilde{g}, \widetilde{h})\in\widetilde{\mathcal{C}}$ it holds that,
	\begin{equation}
		H(y|x) = \tilde H(y|x)		
				\label{eq:H_equals_tilde_H}
	\end{equation}
%
\end{theorem}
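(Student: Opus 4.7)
The plan is to chain Lemma \ref{lemma:Fourier:representation} with Theorem \ref{theorem:coefficients} to show that any observationally-equivalent pair in $\widetilde{\mathcal{C}}$ induces the same counterfactual Fourier coefficient $\Gamma^{\text{CF}}_{\xi_1,\xi_2}(z;\cdot)$, and then to use Fourier inversion together with Assumption \ref{asu:bounded} (completeness of $(X,\upeta)|Z$ w.r.t.\ $Z$) and Assumption \ref{asu:Integrability} (square integrability of $f(y|x,z)$) to descend from this coefficient-level equality to the pointwise identity $H(y|x)=\widetilde H(y|x)$.

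Concretely, I fix an arbitrary $(\widetilde h,\widetilde g)\in\widetilde{\mathcal{C}}$. By definition of the equivalence class, $(\widetilde X,\widetilde Y,Z)\overset{d}{=}(X^*,Y^*,Z)$, so for the Fourier test function $m(x,y)=e^{-i\xi_1 x}e^{-i\xi_2 y}$ the conditional expectations under the two mechanisms coincide for every $z$. Lemma \ref{lemma:Fourier:representation} expresses this common value as the Fredholm equation \eqref{eq:coeff:obs:pair:Fourier}, placing both pairs in the ``observational'' regime $\rho=1$ of Theorem \ref{theorem:coefficients}. The strong-completeness hypothesis of that theorem --- that $(X,Y)|Z,\rho$ is a strongly complete mixture of Gaussians w.r.t.\ $Z$ --- is inherited from the Gaussian-mixture structural form \eqref{X:Structural}--\eqref{Y:Structural} together with Assumption \ref{asu:bounded}. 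Invoking Theorem \ref{theorem:coefficients} therefore yields the counterfactual equality in regime $\rho=0$: $\Gamma^{\text{CF}}_{\xi_1,\xi_2}(z;(\widetilde h,\widetilde g))=\Gamma^{\text{CF}}_{\xi_1,\xi_2}(z;(h^*,g^*))$ for every $(z,\xi_1,\xi_2)$.

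To descend from this coefficient-level equality to $H=\widetilde H$, I would write $\widetilde\phi(x,\xi_2):=\mathbb{E}_{\upomega,\bm\nu}\bigl[e^{-i\xi_2\widetilde g(x,\upomega,\bm\nu)}\bigr]$ and analogously $\phi^*(x,\xi_2)$. By Lemma \ref{lemma:Fourier:CF:representation} the counterfactual coefficient factorizes as $\int f_{X|Z=z}(x)\,e^{-i\xi_1 x}\widetilde\phi(x,\xi_2)\,dx$, so subtracting the two representations yields $\int f_{X|Z=z}(x)\,e^{-i\xi_1 x}\bigl[\widetilde\phi(x,\xi_2)-\phi^*(x,\xi_2)\bigr]dx=0$ for every $(z,\xi_1,\xi_2)$. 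Completeness of $X|Z$ in $Z$ (via Assumption \ref{asu:bounded}) forces the bracketed function to vanish almost everywhere in $x$ for each fixed $\xi_2$, and Fourier inversion in $\xi_2$ (legitimate under Assumption \ref{asu:Integrability}) then gives equality of the marginal counterfactual densities of $Y\mid\text{do}(X=x)$. By Lemma \ref{lemma:Invariant:Do}, this is exactly $H(y|x)=\widetilde H(y|x)$.

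The main obstacle I anticipate is verifying that the strong-completeness hypothesis demanded by Theorem \ref{theorem:coefficients} actually propagates across both regimes $\rho\in\{0,1\}$ after the error-decoupling of Lemma \ref{Lemma:Decoupling} has been applied; the infinite Gaussian-mixture formulation appears to have been engineered for precisely this purpose, but the careful bookkeeping of how completeness survives the change of variables $(\upeta,\bm\epsilon)\mapsto(\upomega,\bm\nu)$ is where the bulk of the technical work will live. A secondary concern is legitimizing the interchange of Fourier inversion with $z$-marginalization, which should follow from Fubini given square integrability, but warrants explicit checking.
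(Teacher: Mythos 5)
Your proposal follows essentially the same route as the paper's own proof: observational equivalence of $(\widetilde h,\widetilde g)$ and $(h^*,g^*)$ gives the common Fredholm/Fourier representation of Lemma \ref{lemma:Fourier:representation}, Theorem \ref{theorem:coefficients} (strong completeness of the Gaussian mixtures) transfers this equality from the observational to the counterfactual regime, and the resulting equality of counterfactual Fourier coefficients pins down the counterfactual density and hence $H(y|x)=\widetilde H(y|x)$. Your final descent step via completeness of $X|Z$ and Fourier inversion in $\xi_2$ is more explicit than the paper's (which simply asserts that uniqueness of the counterfactual Fourier coefficients suffices), but it is the same argument, and the obstacle you flag about propagating strong completeness across regimes is exactly the point the paper handles by invoking the Alamatsaz result in Lemma \ref{lemma:GaussianMixtureEquivalence}.
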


		The aforementioned may give rise to the following query: can a monotone data generation process  produce an equivalent distribution to the one propagated by a non-monotonic data generation process? The answer to such query is definitely negative (as is proven here in lemma \ref{lemma:Auxiliary}, Appendix \ref{Appendix:Proof}). This is so because any monotonic data generation process lacks sufficient variation in the distribution of the outcome variable due to the restriction on any value of context producing the same distribution of $Y$ within a fixed conditional quantile of $X$. 
		
		So far we have established identifiability for the interventional distribution. This is a very strong result in that it is universal, neither relying on any structural form restrictions nor on any assumptions regarding the true measure (prior to normalization) of the disturbances. Thus, it lends itself to nonparametric treatment. Building on the above identifiability result of the counterfactual distribution, we next develop the synthetic counterfactual machinery. Theorem \ref{theorem:coefficients} establishes that the kernel can be unknown in Fredholm integral equations, whereas Theorem \ref{Theorem:Identifiability} establishes the identifiability of the counterfactual distribution based on the insight of theorem \ref{Theorem:Identifiability}. 
		\subsection{Synthetic counterfactual machinery}
		Following theorem \ref{Theorem:Identifiability}, the interventional distribution $H(y|x)$ is the same for any sequential generator functions in the equivalence class $(h^*,g^*)\in\widetilde{\mathcal{C}}$. This result naturally lends itself for an arbitrarily chosen sequential generator functions $(\widetilde{h},\widetilde{g})$ for defining,
		\begin{align}
			& F_{Y,X|Z=z}^{\text{O}}(y,x|\widetilde{g},\widetilde{h}):=P\left(Y\le y,X\le x|Y=\widetilde{g}(X,\upomega,\upnu), X=\widetilde{h}(z,\upomega)\right)\label{equ:FYX:givenZ}\\
			&=\int_{0}^1\int_{[0,1]^d}\mathds{1}\left\{\widetilde{g}(\widetilde{h}(z,\omega),\omega,\bm\nu)\le y\right\}\mathds{1}\left\{\widetilde{h}(z,\omega)\le x\right\}d\bm\nu d\omega,\nonumber
		\end{align}
		which is  the observed (O) conditional generated density function of $Y$ and $X$ given $Z=z$. Similarly, the observed (O)  conditional generated density function of $X$ given $Z=z$ is defined as,
		\begin{align}
			& F_{X|Z=z}^{\text{O}}(x|\widetilde{h}):=P\left(X\le x|X=\widetilde{h}(z,\upomega)\right)=\int_{0}^1\mathds{1}\left\{\widetilde{h}(z,\omega)\le x\right\}\hspace{0.25em} d\omega. \label{equ:X:CF:givenZ}
		\end{align}
		Note that the key innovation embedded in \eqref{equ:FYX:givenZ} is in that the entire support domain $[0,1]\times[0,1]^d$ of the latent space (random disturbances) is used to characterize the conditional joint distribution of $Y$ and $X$ given $Z=z$.
		This is superior to the concept widely used in the literature for characterizing $F_{Y|X,\upomega}(y|x,\omega)$, the conditional distribution of $Y$ given $X=x$ and the actual latent space (up to normalization), which is restricted by the available data. The present treatment in \eqref{equ:FYX:CF:givenZ} however, conveys an entropy superior way to float the interrelationships among the unobservables and the observables, as it leads to the following general characterization of the unobserved counterfactual (CF) conditional joint distribution of $Y$ and $X$ given $Z=z$ (by letting $\widetilde{\upomega}\sim U[0,1]$ satisfying $\widetilde{\upomega}\ci\upomega$):
		\begin{align}
			& F_{Y,X|Z=z}^{\text{CF}}(y,x|\widetilde{g},\widetilde{h}):=P\left(Y\le y,X\le x|Y=\widetilde{g}(X,\widetilde{\upomega},\upnu), X=\widetilde{h}(z,\upomega)\right)\label{equ:FYX:CF:givenZ}\\
			&=\int_{0}^1\int_{0}^1\int_{[0,1]^d}\mathds{1}\left\{\widetilde{g}(\widetilde{h}(z,\omega),\widetilde{\omega},\bm\nu)\le y\right\}\mathds{1}\left\{\widetilde{h}(z,\omega)\le x\right\}d\bm\nu d\omega \hspace{0.25em} d\widetilde{\omega}. \nonumber
		\end{align}
		The increase in entropy is due to the presence of two independent contextual covariates $\upomega$ and $\widetilde{\upomega}$ in \eqref{equ:FYX:CF:givenZ} introducing independent variations to $X$ and to $Y$. This yields the following counterfactual result:
		\begin{align}
			& \mathbb{E}\left[\mathds{1}\left\{Y\le y\right\}|\text{do}(X=x); \widetilde{g},\widetilde{h}\right] = \mathbb{E}_{z\sim F_Z}\left[\left.\frac{\frac{\partial}{\partial x}F_{Y,X|Z}^{\text{CF}}(y,x;\widetilde{g},\widetilde{h})}{\frac{\partial}{\partial x}F_{X|Z}^{\text{O}}(x;\widetilde{g},\widetilde{h})}\right|Z=z\right].\label{equ:Expected:FY:CF:givenX}
		\end{align} 
		The result in \eqref{equ:Expected:FY:CF:givenX} is propagated by the unconstrained support domain of \eqref{equ:FYX:CF:givenZ} unlike in models employing the additive control function approach \citep{heckman1985alternative}, the non-additive control function approach \citep{newey1994kernel,imbens2009identification,blundell2014control,hoderlein2009identification}, the generated covariates \citep{mammen2012nonparametric}, the non-separable counterfactual distribution based on quantiles \citep{chernozhukov2013inference}, the treatment of unobservables in \cite{schennach2014entropic} and the average causal effects by employing do-interventions \citep{pearl2019interpretation}. Further discussion pertaining to the effect of limited empirical support on counterfactual analysis is detailed in section \ref{Sec:Restricted:Support}. 

		In fact, the context in and by itself makes it feasible to allocate different values for the counterfactual data set of $X$ and $Y$. By doing so, we can generate random draws, independent of each other, from the unconstrained support of the context and the action variables, unlike the case when the observed data set is used. This is demonstrated in Figure \ref{figure:CounterFactualRectSupport}:
		\begin{figure}[H]\centering\caption{\label{figure:CounterFactualRectSupport}Counterfactual Machinery vs. Partial Means  \\ (Unrestricted vs. Restricted support)}\vspace{2mm}
			\includegraphics[width=16cm,height=5cm]{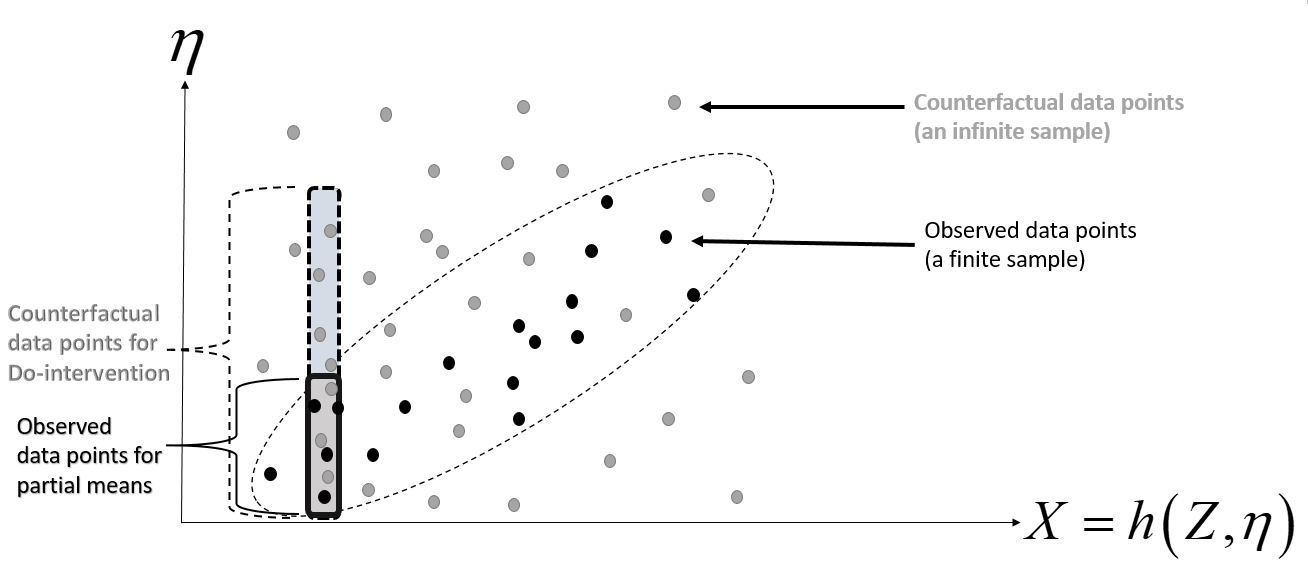}
		\end{figure}
		The observed data set is denoted by the black dots and the counterfactual data set is denoted by the gray dots. By construction, the former is randomly drawn from the joint support of $X$ and $\upeta$, while the latter is randomly drawn from the rectangular support $\text{Supp}(X)\times\text{Supp}(\upeta)$. Consequently, the partial means estimate of $\mathbb{E}[Y|\text{do}(X=x)]$ is calculated by using the observed data $X$ and $Y$ in which $X$ and $\upeta$ are jointly dependent as in Figure \ref{figure:CounterFactualRectSupport}.
		
		Our theoretical result (theorem \ref{Theorem:Identifiability}) implies that both the kernel as well as the integral equation can be completely characterized by generator functions belonging to $\widetilde{\mathcal{C}}$ and being incorporated in our proposed estimator (in section \ref{Sec:Estimation} to follow). To show that, recall the Dirac delta function $\updelta(\cdot)$ and denote,		
		\[
		f_{Y,X|Z=z}(y,x) = \int_{0}^{1}\left\{\int_{[0,1]^d}\updelta\left(y-\widetilde{g}\left(\widetilde{h}(z,\omega),\omega,\bm\nu\right)\right) d\bm\nu\right\} \updelta\left(x-\widetilde{h}(z,\omega)\right) d\omega.
		\]
		
		Next, we propose our estimator and justify the type of Neural network architecture we develop, which involves sequential generator functions. The sequential generator functions estimator is proposed to mimic the underlying data generation process as in a triangular model, where by its very definition, the structural functions are formulated in a sequential manner.

		\section{Estimation: contaminated GAN (CONGAN)}\label{Sec:Estimation}
		
	The formulation of the system of Fredholm equations in \eqref{eq:integral:normalized} (in Appendix \ref{Sec:integral}) is used only for achieving the identifiability of $H(y|x)$ and not for estimation. This disparity between identification and estimation is related to the fact that such a system of equations might be ill-posed. Namely, it may provide an estimator which is a discontinuous function of the data and thus, inconsistent \citep{darolles2011nonparametric, newey2003instrumental}. This issue is resolved here by constructing a novel contaminated GAN (CONGAN) estimator, which is a smooth function of the data. 	
	The benefit of the contaminated Neural network is in that it precludes the need to invert the generators in order to account for the posterior distribution of the latent context variable given the observables, which is computationally cumbersome. 
		
		
		Consider a probability space $(\Upomega,\mathcal{F},\mathbb{P})$, a measurable output spaces $(\mathcal{X}, \mathcal{Y})$ for the random vector $(X,Y)$, an observed input space  $\mathcal{Z}$ for the random variable $Z$ and  a latent input space $(\mathcal{U,V})$ for the random vector $(\upomega,\bm\nu)$. The set of paired measurable functions $\mathcal{G}:(\mathcal{X},\mathcal{U},\mathcal{V})\mapsto\mathcal{Y}$ and $\mathcal{H}:(\mathcal{Z},\mathcal{U})\mapsto\mathcal{X}$ map from an input space (observed and latent space) to the output space. These functions are termed \textit{generator functions} designated to produce random samples. We consider also $\mathcal{D}$ a set of measure functions $d:(\mathcal{X},\mathcal{Y},\mathcal{Z})\mapsto[0,1]$ referred to as \textit{discriminators}. Their role is classifying the input space as either real or synthetic.  Let $\mu$ be a measure on $(\mathcal{X}, \mathcal{Y})$, we make the following assumption  \citep{singh2018nonparametric}:
		\begin{asu}\label{Generators:assumption}
			$\forall(\widetilde{h},\widetilde{g})\in\mathcal{(G,H)}$ the distributions of $\widetilde{g}(X, \omega,\nu)$ and $\widetilde{h}(Z, \omega)$ are absolutely continuous with respect to $\mu$. The joint distribution of $(Y,X,Z)$ is absolutely continuous with respect to $\mu$.\\  
		\end{asu} 
		
		We define the contaminated GAN (CONGAN) approach as an extension of the original GAN \citep{goodfellow2014generative} as follows:
		\begin{align}
			V(d,(\widetilde{h},\widetilde{g})) = \left\{\mathbb{E}_{\substack{X,Y,Z\sim F_{X,Y,Z}}}\left[\log\left(d(X,Y,Z)\right)\right] + \mathbb{E}_{\substack{\upomega\sim U[0,1]\\\nu\sim U[0,1]\\Z\sim F_Z}}\left[\log\left(1-d\left(\widetilde{h}(Z,\omega), \widetilde{g}\left(\widetilde{h}(z,\omega),\omega,\bm\nu\right)\right),Z\right)\right]\right\}\label{Cont:GAN}
		\end{align}
		The key difference between the original GAN and our proposed CONGAN is entirely related to the sequential construction of the generator functions (the generator function $\widetilde{h}$ belongs to the input of the generator function $\widetilde{g}$) both share the noise $\omega$ as appears under the \textit{contamination} braces in \eqref{Cont:GAN}. Namely, \eqref{Cont:GAN} is reduced to the original GAN when $\widetilde{h}(Z,\omega)$ is substituted with $X$ and the expectation of the CON braces is taken w.r.t $X$, $Z$ and $\bm\nu$ rather than w.r.t $\bm\nu,\omega$ and $Z$.
		
		The optimal discriminator assigns values close to $1$ given data from the joint distribution of $(X,Y,Z)$ and assigns values close to zero given data from the distribution obtained by the generator functions $(\widetilde{h},\widetilde{g})$. Thus, the optimal $d$ given a fixed pair of generator functions $(\widetilde{h},\widetilde{g})$ is:
		\begin{align}
			d=\underset{d\in\mathcal{D}}{\arg\max}\hspace{0.25em} V(\cdot,(\widetilde{h},\widetilde{g}))
		\end{align}
		Then, the optimal pair of generator functions $(\widetilde{h},\widetilde{g})$ is the one solving the $\min\max$ problem: 
		\begin{align}
			(\widetilde{h},\widetilde{g})=\underset{(\widetilde{h},\widetilde{g})\in\mathcal{(G,H)}}{\arg\min}\hspace{0.25em}\left\{\underset{d\in\mathcal{D}}{\arg\max}\hspace{0.25em} V(d,(\widetilde{h},\widetilde{g}))\right\}
		\end{align}
		
		Define $f_{Y,X|Z}^{\text{O}}(y,x|z;\widetilde{g},\widetilde{h}):=\frac{\partial^2}{\partial x\partial y}F_{Y,X|Z}^{\text{O}}(y,x|z;\widetilde{g},\widetilde{h})$ and suppose that assumption \ref{Generators:assumption} holds. Then following Theorem 2.2 in \cite{singh2018nonparametric},
		\begin{align} 
			V(d,(\widetilde{h},\widetilde{g})) = \int_{x,y,z}\left\{\log\left(d(x,y,z)\right)f_{Y,X|Z}(y,x|z) + \log\left(1-d(x,y,z)\right)f_{Y,X|Z}^{\text{O}}(y,x|z;\widetilde{g},\widetilde{h})\right\}f_z(z)dx \hspace{0.2em}dy\hspace{0.2em} dz
		\end{align}
		$\forall \hspace{0.2em}(\widetilde{h},\widetilde{g})\in\mathcal{(G,H)}$ and $\forall \hspace{0.2em}d\in\mathcal{D}$. Hence, for any fixed pair of generators $(\widetilde{h},\widetilde{g})\in\mathcal{(G,H)}$ and $d\in\mathcal{D}$ maximization of $V(d,(\widetilde{h},\widetilde{g}))$ has the form:
		\begin{align}
			d(x,y,z) =  \frac{f_{Y,X|Z}(y,x|z)}{f_{Y,X|Z}(y,x|z) + f_{Y,X|Z;\widetilde{g},\widetilde{h}}^{\text{O}}(y,x|z)}, \hspace{1em} \forall \hspace{0.2em}(x,y,z) \in \text{Supp}(X,Y,Z)
		\end{align}
		This yields:
		\begin{align}
			& \underset{d\in\mathcal{D}}{\max}\hspace{0.25em} V(d,(\widetilde{h},\widetilde{g})) = \int_{x,y,z}\left\{f_{Y,X|Z}(y,x|z) \log\left(\frac{f_{Y,X|Z}(y,x|z)}{f_{Y,X|Z}(y,x|z) + f_{Y,X|Z}^{\text{O}}(y,x|z;\widetilde{g},\widetilde{h})}\right)\right\}f_z(z)dx \hspace{0.2em}dy\hspace{0.2em} dz\label{Divergence:loss:function}\\
			& +     \int_{x,y,z}\left\{f_{Y,X|Z}(y,x|z)\log\left(\frac{f_{Y,X|Z}^{\text{O}}(y,x|z;\widetilde{g},\widetilde{h})}{f_{Y,X|Z}(y,x|z) + f_{Y,X|Z}^{\text{O}}(y,x|z;\widetilde{g},\widetilde{h})}\right)f_z(z)dx \hspace{0.2em}dy\hspace{0.2em} dz\right\}\nonumber
		\end{align}
		The result above represents the Nash equilibrium concept, in the sense that both the discriminator as well as the generator players reach their best responses in maximizing their value functions given the other player action \citep{goodfellow2014generative}. In practice $f_z(z)$, $f_{Y,X|Z}(y,x|z)$ and $f_{Y,X|Z}^{\text{O}}(y,x|z;\widetilde{g},\widetilde{h})$ are estimated by a kernel density estimator as a function of some unknown bandwidth $\bm{\upsigma}=(\upsigma_x,\upsigma_y,\upsigma_z)$. This bandwidth is a parameter of the model satisfying the aforementioned Nash equilibrium. Consequently, it obviates the known problematic cross-validation, which requires formidable successive multiple estimations of the likelihood function.
		
		Next we attend to the characterization of our proposed Neural network to be used in the contaminated GANS methodology. For doing so, we denote a feed-forward Neural network consisting of $L$ layers (such that there are $L-1$ hidden layers). In what follows we explain in detail the formulation of the sequential generator functions in this Neural network, presented in the following figure:
		\begin{figure}[H]\centering\caption{\label{figure:NeuralNetworkSequentialGenerators}Neural Network with sequential generators ($\widetilde{h}$ and $\widetilde{g}$)}\vspace{2mm}
			\includegraphics[width=8cm]{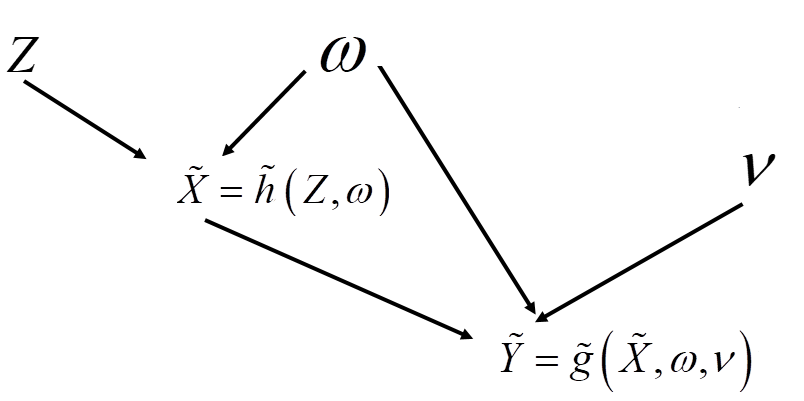}
		\end{figure}
		Figure \ref{figure:NeuralNetworkSequentialGenerators} depicts the sequential generation of $\widetilde{X}$ and $\widetilde{Y}$ induced by a triangular model with error-decoupling. 
		By adopting this sequential data generation process, in which $\widetilde{g}$ takes the output of $\widetilde{h}$ as one of its arguments, one can exploit the unrestricted support domain of the latent space in order to construct the observed distribution in \eqref{equ:FYX:givenZ} as well as the counterfactual distribution in \eqref{equ:FYX:CF:givenZ}. This latent space consists of $\upomega$ and the random disturbance $\nu$. 
		Next, we characterize explicitly the Neural network architecture to estimate these functions.
		\subsection{Multi-layer neural network with sequential generator functions}
		Denote the input space of its $\ell$'th layer in the generator functions of $Y$ and $X$ by the column vectors $\bm{\mathcal{Y}}_{\ell-1}$ and $\bm{\mathcal{X}}_{\ell-1}$ of sizes $T_{\ell-1}^Y\times 1$ and $T_{\ell-1}^X\times 1$, respectively. The parameters associated with these covariate vectors are referred to as weights (slopes) and biases (intercepts) which are denoted by the matrices $\bm{W}_{\ell}^{Y}$ and $\bm{W}_{\ell}^{X}$ of sizes $(T_{\ell}^Y+1)\times (T_{\ell-1}^Y+1)$ and  $(T_{\ell}^X+1)\times (T_{\ell-1}^X+1)$, respectively. Let $\widetilde{g}$ and $\widetilde{h}$ be the generator functions of $Y$ and $X$, respectively, with parameters $\bm{\upbeta}=(\bm{W}_{1}^X,...,\bm{W}_{L}^X)$ and $\bm{\uptheta}=(\bm{W}_{1}^Y,...,\bm{W}_{L}^Y)$, defined as:
		\begin{align}
			\widetilde{g}(X,\omega,\nu,\bm{\uptheta})=\bm{\mathcal{Y}}_{L},  \hspace{1em}\bm{\mathcal{Y}}_{0} = [X,\omega,\nu], \hspace{1em} \bm{\mathcal{Y}}_{\ell} = \upvarphi\left(\bm{W}_{\ell}^Y \bm{\mathcal{Y}}_{\ell-1}\right), \hspace{1em} \ell = 1,...,L
		\end{align}
		and
		\begin{align}
			\widetilde{h}(Z,\omega;,\bm{\upbeta})=\bm{\mathcal{X}}_{L},  \hspace{1em}\bm{\mathcal{X}}_{0} = [Z,\omega], \hspace{1em} \bm{\mathcal{X}}_{\ell} = \upvarphi\left(\bm{W}_{\ell}^X \bm{\mathcal{X}}_{\ell-1}\right), \hspace{1em} \ell = 1,...,L
		\end{align}
		where $\upvarphi(\cdot):\mathbb{R}\to C\subset\mathbb{R}$ is some differentiable activation function to be applied element-wise. The input space of the first layer is $(T_0^Y, T_0^X)=(3,2)$ and the output of the last layer is $(T_L^Y, T_L^X)=(1,1)$.

		Real and synthetic data are denoted by $\left\{(X_i,Y_i,Z_i)\right\}_{i=1}^N$ and $\left\{(\widehat{X}_i,\widehat{Y}_i,Z_i)\right\}_{i=1}^N$, respectively, with  $\widehat{X}_i:=\widetilde{h}(Z_i,\omega_{0i},\bm{\upbeta})$ and $\widehat{Y}_i:=\widetilde{g}(\widehat{X}_i,\omega_{0i}, \nu_{i};\bm{\uptheta})$.

		\vspace{1em}
		It should be emphasized that by adopting a Shannon-divergence minimization criterion function, we depart from the practice of Wasserstein GAN (WGAN) \citep{athey2020}. Although WGANS can be trained more easily, it requires the utilization of a penalty function to alleviate various optimization difficulties \citep{gulrajani2017improved}, which may limit the support domain of the generated data distribution \citep{wei2018improving}.\footnote{This limitation stems from the fact that this gradient penalty term is influenced only by the sampled points without examining a significant parts of the support domain \citep{wei2018improving}.} As we deal with counterfactual analysis we aspire to utilize the entire support domain of the data distribution function. Thus, we adopt a Shannon-divergence minimization criterion function, in which the density is evaluated at each point in the data to preserve the original distribution support. We define the following density estimators:
		\begin{align}
			\widehat{p}_{\bm{\upsigma}}(x,y,z) = \frac{1}{N}\sum_{i=1}^N  \frac{1}{\upsigma_x}K\left(\frac{X_i-x}{\upsigma_x}\right)\frac{1}{\upsigma_y}K\left(\frac{Y_i-y}{\upsigma_y}\right)\frac{1}{\upsigma_z}K\left(\frac{Z_i-z}{\upsigma_z}\right)
		\end{align}
		\begin{align}
			\widehat{q}_{\bm{\upsigma}}(x,y,z) = \frac{1}{N}\sum_{i=1}^N  \frac{1}{\upsigma_x}K\left(\frac{\widehat{X}_i-x}{\upsigma_x}\right)\frac{1}{\upsigma_y}K\left(\frac{\widehat{Y}_i-y}{\upsigma_y}\right)\frac{1}{\upsigma_z}K\left(\frac{Z_i-z}{\upsigma_z}\right)
		\end{align}
		
		Utilizing the smoothing Jensen-Shannon Divergence emanating from \eqref{Divergence:loss:function} \citep{sinn2018non}:
		\begin{align}
			\mathcal{K}_n(\bm{\uptheta},\bm{\upbeta},\bm{\upsigma}) = \frac{1}{N}\sum_{i=1}^N\log\frac{\widehat{p}_{\bm{\upsigma}}(X_i,Y_i,Z_i)}{\widehat{p}_{\bm{\upsigma}}(X_i,Y_i,Z_i)+\widehat{q}_{\bm{\upsigma}}(X_i,Y_i,Z_i)} + \frac{1}{N}\sum_{i=1}^N\log\frac{\widehat{q}_{\bm{\upsigma}}(\widehat{X}_i,\widehat{Y}_i,Z_i)}{\widehat{p}_{\bm{\upsigma}}(\widehat{X}_i,\widehat{Y}_i,Z_i)+\widehat{q}_{\bm{\upsigma}}(\widehat{X}_i,\widehat{Y}_i,Z_i)}\label{CONT:GAN:LOSS}
		\end{align}
		where $\bm{\upsigma}:=(\upsigma_x,\upsigma_y,\upsigma_z)$ is the bandwidth vector. The optimal generator's strategy is to minimize the divergence:
		\begin{align}
			(\widehat{\bm{\uptheta}}^{(\text{iter+1})},\widehat{\bm{\upbeta}}^{(\text{iter+1})})=\underset{(\bm{\uptheta},\bm{\upbeta})}{\arg\min}\hspace{0.2em} \mathcal{K}_n(\bm{\uptheta},\bm{\upbeta},\bm{\upsigma}^{(\text{iter})})
		\end{align}

		We attend to the generation of simulated data for the purpose of causal counterfactual inference by using the estimated structural functions. The synthetic data is constructed from the synthetic (estimated) structural functions $\widetilde{g}$ and $\widetilde{h}$, whereas the real data is constructed from the original structural model $g$ and $h$ in \eqref{X:Structural} and in \eqref{Y:Structural}, respectively. The latter will be used as a benchmark for verifying the accuracy of the estimate obtained from the synthetic data.
		\subsection{Synthetic counterfactual machine vs. partial means estimator \citep{newey1994kernel}}\label{Sec:Restricted:Support}
		The following equations map an input vector $(Z_i,\upomega_{i},\widetilde{\upomega}_{i},\nu_i)$ to an output vector  $(X_{\text{real,i}}^{\text{CF}}, Y_{\text{real,i}}^{\text{CF}})$ in order to generate the \textit{real} counterfactual (CF) data set from the structural functions $g$ and $h$:
		\begin{align}
			& X_{\text{real,i}}^{\text{CF}} = h^*\left(Z_i,\omega_{i}^{\text{real}}\right), \hspace{1em} \omega_{i}^{\text{real}}, \widetilde{\omega}_{i}^{\text{real}}, \nu_i^{\text{real}} \sim U[0,1].\\
			& Y_{\text{real,i}}^{\text{CF}} = g\left(X_{\text{real,i}},F_{\bm\epsilon|\upeta}^{-1}\left(\nu_i^{\text{real}}|F_{\upeta}^{-1}\left(\widetilde{\omega}_{i}^{\text{real}}\right)\right)\right).
		\end{align}
		The following equations map an input vector $(Z_i,\upomega_{i},\widetilde{\upomega}_{i},\nu_i)$ to an output vector  $(X_{\text{synt,i}}^{\text{CF}}, Y_{\text{synt,i}}^{\text{CF}})$ in order to generate the \textit{synthetic (synt)} counterfactual (CF) data set from the synthetic (estimated) structural functions $\widetilde{g}$ and $\widetilde{h}$:
		\begin{align}
			& X_{\text{synt,i}}^{\text{CF}} =  \widetilde{h}\left(Z,\omega_{i}^{\text{synt}};\bm{\upbeta}\right), \hspace{1em} \omega_{i}^{\text{synt}}, \widetilde{\omega}_{i}^{\text{synt}}, \nu_i^{\text{synt}} \sim U[0,1]\\
			& Y_{\text{synt,i}}^{\text{CF}} =  \widetilde{g}\left(X_{\text{synt,i}},\widetilde{\omega}_{i}^{\text{synt}}, \nu_i^{\text{synt}};\bm{\uptheta}\right).
		\end{align}
		Similarly, the following equations map an input vector $(Z_i,\upomega_{i},\nu_i)$ to an output vector  $(X_{\text{real,i}}^{\text{O}}, Y_{\text{real,i}}^{\text{O}})$ in order to generate the \textit{real} observed (O) data set from the structural functions $g$ and $h$:
		\begin{align}
			& X_{\text{real,i}}^{\text{O}} = h^*\left(Z_i,\omega_{i}^{\text{real}}\right), \hspace{1em} \omega_{i}^{\text{real}}, \nu_i^{\text{real}} \sim U[0,1]\\
			& Y_{\text{real,i}}^{\text{O}} = g\left(X_{\text{real,i}},F_{\bm\epsilon|\upeta}^{-1}\left(\nu_i^{\text{real}}|F_{\upeta}^{-1}\left(\omega_{i}^{\text{real}}\right)\right)\right),
		\end{align}
		and the following equations map an input vector $(Z_i,\upomega_{i},\nu_i)$ to an output vector  $(X_{\text{synt,i}}^{\text{O}}, Y_{\text{synt,i}}^{\text{O}})$ in order to generate the \textit{synthetic (synt)} observed (O) data set from the synthetic (estimated) structural functions $\widetilde{g}$ and $\widetilde{h}$:
		\begin{align}
			& X_{\text{synt,i}}^{\text{O}} =  \widetilde{h}\left(Z,\omega_{i}^{\text{synt}};\bm{\upbeta}\right), \hspace{1em} \omega_{i}^{\text{synt}}, \nu_i^{\text{synt}} \sim U[0,1]\\
			& Y_{\text{synt,i}}^{\text{O}} =  \widetilde{g}\left(X_{\text{synt,i}},\omega_{i}^{\text{synt}}, \nu_i^{\text{synt}};\bm{\uptheta}\right).
		\end{align}	
		The estimated conditional expected outcomes given the action in the \textit{real} data is:
		\begin{align}
			& \widehat{\mathbb{E}}\left[Y_{\text{real}}|X=x\right] = \widehat{\mathbb{E}}\left[Y_{\text{real}}^{\text{O}}|X_{\text{real}}^{\text{O}}=x\right] = \frac{\frac{1}{N}\frac{1}{\upsigma_x}\sum_{i=1}^N Y_{\text{real,i}}^{\text{O}}K\left(\frac{X_{\text{real,i}}^{\text{O}}-x}{\upsigma_x}\right)}{\frac{1}{N}\frac{1}{\upsigma_x}\sum_{i=1}^N K\left(\frac{X_{\text{real,i}}^{\text{O}}-x}{\upsigma_x}\right)}.\label{real:O}
		\end{align}
		The estimated conditional expected outcomes given the action in the \textit{synthetic (synt)} data is:
		\begin{align}
			& \widehat{\mathbb{E}}\left[Y_{\text{synt}}|X=x\right] = \widehat{\mathbb{E}}\left[Y_{\text{synt}}^{\text{O}}|X_{\text{synt}}^{\text{O}}=x\right] \frac{\frac{1}{N}\frac{1}{\upsigma_x}\sum_{i=1}^N Y_{\text{synt,i}}^{\text{O}}K\left(\frac{X_{\text{synt,i}}^{\text{O}}-x}{\upsigma_x}\right)}{\frac{1}{N}\frac{1}{\upsigma_x}\sum_{i=1}^N K\left(\frac{X_{\text{synt,i}}^{\text{O}}-x}{\upsigma_x}\right)}.\label{synthetic:O}
		\end{align}
		
		The estimated counterfactual expected outcomes given the action in the \textit{real} data is:
		\begin{align}
			& \widehat{\mathbb{E}}\left[Y_{\text{real}}|\text{do}(X_{\text{real}}=x)\right] = \widehat{\mathbb{E}}\left[Y_{\text{real}}^{\text{CF}}|X_{\text{real}}^{\text{CF}}=x\right] = \frac{\frac{1}{N}\frac{1}{\upsigma_x}\sum_{i=1}^N Y_{\text{real,i}}^{\text{CF}}K\left(\frac{X_{\text{real,i}}^{\text{CF}}-x}{\upsigma_x}\right)}{\frac{1}{N}\frac{1}{\upsigma_x}\sum_{i=1}^N K\left(\frac{X_{\text{real,i}}^{\text{CF}}-x}{\upsigma_x}\right)}.\label{real:CF}
		\end{align}
		
		The estimated counterfactual expected outcomes given the action in the \textit{synthetic (synt)} data is:
		\begin{align}
			& \widehat{\mathbb{E}}\left[Y_{\text{synt}}|\text{do}(X_{\text{synt}}=x)\right] = \widehat{\mathbb{E}}\left[Y_{\text{synt}}^{\text{CF}}|X_{\text{synt}}^{\text{CF}}=x\right] \frac{\frac{1}{N}\frac{1}{\upsigma_x}\sum_{i=1}^N Y_{\text{synt,i}}^{\text{CF}}K\left(\frac{X_{\text{synt,i}}^{\text{CF}}-x}{\upsigma_x}\right)}{\frac{1}{N}\frac{1}{\upsigma_x}\sum_{i=1}^N K\left(\frac{X_{\text{synt,i}}^{\text{CF}}-x}{\upsigma_x}\right)}.\label{synthetic:CF}
		\end{align}
		
		The partial means estimator \citep{newey1994kernel} is presented here as a benchmark, as it is applied to the \textit{observed} data set, unlike the counterfactual expression applied to the \textit{counterfactual} dataset in \eqref{real:CF}: 
		\begin{align}
			& \widehat{\widehat{\mathbb{E}}}\left[Y_{\text{real}}|\text{do}(X_{\text{real}}=x)\right] =\frac{1}{N}\sum_{i=1}^N \widehat{\mathbb{E}}\left[Y_{\text{real}}^{\text{O}}|X_{\text{real}}^{\text{O}}=x, \upomega^{\text{real}}=\omega_{i}^{\text{real}}\right]\label{Partial:Means}
		\end{align}
		with 
		\begin{align}
			& \widehat{\mathbb{E}}\left[Y_{\text{real}}^{\text{O}}|X_{\text{real}}^{\text{O}}=x, \upomega^{\text{real}}=\omega\right] = \frac{\frac{1}{N}\frac{1}{\upsigma_x}\frac{1}{\upsigma_{\omega}}\sum_{i=1}^N Y_{\text{real,i}}^{\text{O}}K\left(\frac{X_{\text{real,i}}^{\text{O}}-x}{\upsigma_x}\right)K\left(\frac{\omega_i^{\text{real}}-\omega}{\upsigma_{\omega}}\right)}{\frac{1}{N}\frac{1}{\upsigma_x}\frac{1}{\upsigma_{\omega}}\sum_{i=1}^N K\left(\frac{X_{\text{real,i}}^{\text{O}}-x}{\upsigma_x}\right)K\left(\frac{\omega_i^{\text{real}}-\omega}{\upsigma_{\omega}}\right)}.
		\end{align}
		We note that $\widehat{\widehat{\mathbb{E}}}\left[Y_{\text{real}}|\text{do}(X_{\text{real}}=x)\right]$ and  $\widehat{\mathbb{E}}\left[Y_{\text{real}}|\text{do}(X_{\text{real}}=x)\right]$ are different empirical objects. In the former $Y_{\text{real}}^{\text{O}}$, available only in theory (as $\upeta$ is unknown), is generated from the same realization of $\upomega$ generated $X_{\text{real}}^{\text{O}}$, while in the latter $Y_{\text{real}}^{\text{CF}}$ is generated by $\widetilde{\upomega}$ (another realization of $\upomega$) which is independent of $X_{\text{real}}^{\text{O}}$. Consequently, $\widehat{\mathbb{E}}\left[Y_{\text{real}}|\text{do}(X_{\text{real}}=x)\right]$ is expected to be more accurate by not being limited to the the joint support of $\upomega$ and $X_{\text{real}}^{\text{O}}$.
		
		Equation \eqref{Partial:Means} utilizes $\upomega:=F_{\upeta}(\upeta)$ which is unobserved in the data. In case of strict monotonicity the following holds $F_{\upeta}(\upeta)=F_{X|Z}(X|Z)$ \citep{imbens2009identification}. Thus, one can construct $V_{\text{real,i}}:=\widehat{F}_{X|Z}(X_{\text{real,i}}^{\text{O}}|Z_i)$ to estimate $\upeta$ (up to normalization) and employ the partial means estimator as follows:
		\begin{align}
			& \widehat{\widehat{\mathbb{E}}}\left[Y_{\text{real}}|\text{do}(X_{\text{real}}=x)\right] =\frac{1}{N}\sum_{i=1}^N \widehat{\mathbb{E}}\left[Y_{\text{real}}^{\text{O}}|X_{\text{real}}^{\text{O}}=x, V_{\text{real}}=V_{\text{real,i}}\right]\label{Partial:Means:Imbens:Newey}
		\end{align}
		This is the benchmark used for comparison of our results. The reasons for choosing equation \eqref{Partial:Means:Imbens:Newey} for comparison are: (i) to show that our counterfactual goes beyond conditional quantiles of observables captured by the transformation above (violation of monotonicity) and further (ii) to demonstrate our approach performance in the presence of restricted support which is an acute problem in empirical research regardless of the presence or absence of monotonicity \citep{chernozhukov2020}.

		\section{Simulations}\label{Sec:Simulations}
		In what follows we verify our proposed estimator's performance by employing monte-carlo simulations. The section has two rules: the first is to verify our model performance when the underlying model is characterized by intrinsically non-monotonic structural equations for $X$ and $Y$. The second is to use various examples representing important and widely used economic models of supply and demand which lie in the heart of the economic science, such as: the ``backward-bending'' supply curve of labor \citep{Hanoch1965}; the Almost Ideal Demand System (AIDS)  \citep{deaton1980almost}; the Constant Elasticity of Substitution (CES) production function \citep{Sato1967} as well as the Translog utility function \citep{christensen1975}, which may also affect the demand and supply curves. A specific model of demand and supply functions have been discussed in \cite{blundell2014control} under the assumption of strict monotonicity in the unobserved non-additively-separable error terms. We however, attend to a more general (and non-monotonic) functional form for various economic phenomena.

		The various widely used economic models discussed above are estimated from data sets generated by the following data generation processes: 
		\begin{enumerate}
			\item  Trans-log utility functions: \label{Translog:model}
			\begin{align}
				& h_1(Z,\upeta) =  \upalpha_0 + \upalpha_1 \log(Z)  + \upalpha_2\log(\upeta ) + \upalpha_3\log^2(Z)+ \upalpha_4\log^2(\upeta) +  \upalpha_5\log(Z)\log(\upeta)\\
				& g_1(X,\bm\epsilon) =  \upbeta_0 + \upbeta_1 \log(X)  + \upbeta_2\log(\epsilon ) + \upbeta_3\log^2(X)+ \upbeta_4\log^2(\epsilon) +  \upbeta_5\log(X)\log(\epsilon)
			\end{align}
			\item  Almost Ideal Demand System (AIDS) functions: \label{AIDS:model}
			\begin{align}
				&  h_2(Z,\upeta) =  h_1(Z,\upeta)+ \upalpha_6 (Z\upeta)^{\uprho}\\
				& g_2(X,\bm\epsilon)  =  g_1(X,\bm\epsilon)+ \upbeta_6 (X\epsilon)^{\uprho}
			\end{align}
			\item  Constant Elasticity of Substitution (CES) production functions: \label{CES:model}
			\begin{align}
				&  h_3(Z,\upeta) =  \upalpha_1\left[\upalpha_2 Z^{-\uprho}+\upalpha_3 \upeta^{-\uprho}\right]^{-\upalpha_4/\uprho}\\
				& g_3(X,\bm\epsilon)  =  \upbeta_1\left[\upbeta_2 X^{-\uprho}+\upbeta_3 \epsilon^{-\uprho}\right]^{-\upbeta_4/\uprho}
			\end{align}     
			\item  Backward-bending supply curves for $\upalpha_1=0, \upalpha_2=1$ and  $\upbeta_1=0, \upbeta_2=1$:  \label{Hanoch:model}
			\begin{align}
				&  h_4(Z,\upeta) = \exp(Z\upeta - \upalpha_1 Z)-\upalpha_2 (Z\upeta-\upalpha_1 Z)\\
				& g_4(X,\bm\epsilon)  =  \exp(X\epsilon-\upbeta_1 X)-\upbeta_2(X\epsilon-\upbeta_1 X)
			\end{align}     
			\item A generalized cobb-Douglas production function (Cobb-Douglas is obtained for $\upalpha_1=\upalpha_2=0$ and $\upbeta_1=\upbeta_2=0$): \label{CobbDouglas:model}
			\begin{align}
				&  h_5(Z,\upeta) =  \upalpha_1 Z + \upalpha_2\upeta +\upalpha_3 Z^{\upalpha_4}\upeta^{\upalpha_5}\\
				& g_5(X,\bm\epsilon)  =  \upalpha_1 X + \upalpha_2\epsilon +\upbeta_3 X^{\upbeta_4}\epsilon^{\upbeta_5}
			\end{align}     
			\item Nonlinearly non-additively-separable error terms \label{tanh:model}
			\begin{align}
				&  h_6(Z,\upeta) =  \upalpha_1\tanh(\upalpha_2 Z+\upalpha_3 \upeta)+\upalpha_4\tanh(\upalpha_5 \upeta)\\ 
				& g_6(X,\bm\epsilon)  =  \upbeta_1\tanh(\upbeta_2 Z+\upbeta_3 \epsilon)+\upbeta_4\tanh(\upbeta_5 \epsilon) 
			\end{align}     
			with $\tanh(x)=\frac{\exp(x)-\exp(-x)}{\exp(x)+\exp(-x)}$ commonly known as hyperbolic tangent.        
		\end{enumerate}
		The error terms are generated as follows: $\upeta, \nu\sim N(0,1)$ such that $\upeta\ci\nu$,
		\begin{align}
			\epsilon = \upgamma_1\upeta\nu + \upgamma_2\upeta + \upgamma_3\nu, \hspace{1em} \upgamma_1,\upgamma_2,\upgamma_3\in\left\{-3,-2, -1, 0, 1, 2, 3\right\}
		\end{align}

		We consider various specifications for the structural equations of $X$ and $Y$, each one is chosen from the following functions: the Trans-log utility functions in \ref{Translog:model}; the Almost Ideal Demand System (AIDS) functions in \ref{AIDS:model}; the Constant Elasticity of Substitution (CES) production functions in \ref{CES:model}; the backward-bending supply curves in  \ref{Hanoch:model}; a Cobb-Douglas production function in \ref{CobbDouglas:model} and the hyperbolic tangent model in \ref{tanh:model}. For sake of generality, for each of these model specifications, we verify our model performance under different Neural network architectures, such that for each given architecture we generated $100$ data sets. These data sets are used to calculate the means and the standard error of our estimates. 
		Such depth of simulations can be carried out only by parallel computing that can handle very demanding computations at a reasonable time in order to explore many specifications and variations of data generation processes.

		Each of tables \ref{tab:CES:Triang:Average}-\ref{tab:BackBending:Average} (appendix \ref{Appendix:Tables}) summarizes the results obtained from Monte-Carlo simulations generating $100$ data sets, each consisting of $10,000$ observations from specified structural model (data generation process) includes the following columns: a specific quantile of the action variable (column ($a$));  the mean value (over data sets) of the action variable in this quantile (column ($b$)); the mean (over data sets) of the estimated expected counterfactual outcome (given the action variable) using the real and synthetic data in (column ($c$)) and (column $(d)$), respectively; the mean (over data sets) of the estimated expected outcome (given the action variable) using the real data in (column $(e)$) and (column $(f)$), respectively; 
		the control variable results \citep{imbens2009identification} represent the counterfactual expected outcome that would be obtained by evaluating the latent context variable through the imposition of strict monotonicity (column $(g)$); and the partial means results \citep{newey1994kernel} represent the counterfactual expected outcome that would be obtained if the context were known (column $(h)$);
		
		We quantify the proximity of the synthetic data set generated by the synthetic generator functions $\widetilde{g}$ and $\widetilde{h}$ to the real data set generated by the ``true'' generator functions, $g$ and $h$. For tractability, our analysis disentangles the restricted support and divergence from strict monotonicity issues by using the following four criteria: 
		\begin{enumerate}
			\item \textbf{Observed Similarity} between $\widehat{\mathbb{E}}\left[Y_{\text{real}}^{\text{O}}|X_{\text{real}}^{\text{O}}=x\right]$ and $\widehat{\mathbb{E}}\left[Y_{\text{synt}}^{\text{O}}|X_{\text{synt}}^{\text{O}}=x\right]$ in \eqref{real:O} and \eqref{synthetic:O}, respectively.
			It  measures the resemblance between the estimates of the \textit{observed} expected outcome of $Y$ (at each quantile of $X$) in the synthetic and real data sets.\label{item:Observed:Similarity} 
			
			\item \textbf{Counterfactual Similarity} between  $\widehat{\mathbb{E}}\left[Y_{\text{real}}^{\text{CF}}|X_{\text{real}}^{\text{CF}}=x\right]$ and $\widehat{\mathbb{E}}\left[Y_{\text{synt}}^{\text{CF}}|X_{\text{synt}}^{\text{CF}}=x\right]$ in \eqref{real:CF} and \eqref{synthetic:CF}, respectively.
			It measures the resemblance between the estimates of the \textit{counterfactual} expected outcome of $Y$ (at each quantile of $X$) in the synthetic and real data sets.\label{item:CF:Similarity}
			
			\item \textbf{Strict monotonicity} by comparing the estimated counterfactual results in \eqref{Partial:Means}, which does not rely on monotonicity, to the partial means estimate under monotonicity in \eqref{Partial:Means:Imbens:Newey}. Both of these estimators utilize the restricted empirical support only.
			The degree of non-monotonicity is quantified in these simulations by comparing the results that would have been obtained if $\upomega$ were known (the partial means estimator) to the actual results obtained by using $\widehat{\upomega}=$, an estimate of it (the control variable estimator). Consequently, a negligible disparity between the estimates of the control variable and the partial means indicate that the monotonicity assumption could have been imposed.\footnote{Namely, the control variable estimator, $\widehat{V}_i=\widehat{F}_{X|Z}(x_i|z_i)$, is an estimate of the realization of $\upomega$ in the $i$th observation, whereas  the partial means estimator, $V_i=F_{\upeta}(\eta_i)$, is the actual realization of $\omega$ in the $i$th observation. The imposition of strict monotonicity is required as in practice $\omega$ is unknown and thus it must be evaluated.} \label{item:MONO:Similarity}
			\item \textbf{Restricted support} by comparing our real data counterfactual estimate in \eqref{real:CF} utilizing the unrestricted support to the partial means in \eqref{Partial:Means} utilizing the restricted support. Both of these estimators do not rely on monotonicity. However, the partial means estimator is available only in theory in which latent context is treated as known. 
			Any disparity between the partial means and the real counterfactual expected outcomes are related to discrepancies between the empirical and theoretical support of the context and action variable. 
			\label{item:MEANS:Similarity}              
		\end{enumerate}

		In the theoretical model we have shown that the importance of allowing for measure-preserving transformations when only distributional similarity is required. That is, we are interested in measuring the similarity between the empirical distributions of two sequences of estimates collected from the Monte Carlo simulations. Each sequence belongs to a different estimator. Thus, in the empirical application we opted for the Sliced-Wasserstein Distance (SWD) \citep{manole2019minimax}, because it is designated for the comparison of two empirical distributions in a non-parametric manner. For robustness, these distributions must be compared over quantiles in the interior of the distribution to avoid the imposition of parametric assumptions regarding the tail behaviour of each distribution. Thus, we specify a scalar $\Lambda\in(0,0.5)$ for which $[\Lambda,1-\Lambda]$ denotes an interval of quantiles in the interior of the distribution. Using the aforementioned (SWD) methodology we append an asterisk and a dagger to indicate sequences of estimates with similar empirical distributions (over the Monte Carlo simulations) to those of the real counterfactual and the real observed results, respectively. Consequently, we perform these similarity tests by comparing distributions using only quantiles in the interval $[0.05,0.95]$ at significance level of $95\%$.
		
		Summary statistics of results are presented in tables \ref{tab:CES:Triang:Average}-\ref{tab:BackBending:Average} (appendix \ref{Appendix:Tables}). Entries represent average (standard error) of estimates obtained from $100$ samples of $10,000$ observations each. The overall picture emerging from the simulations, as appears in the tables, points to the superiority of our proposed casual inference. This is apparent from the similarity (in the SWD sense) between the real (column $(e)$) and synthetic (column ($f$)) counterfactual results, which appears in the various tables representing different known economics models. In particular, when juxtaposed on the much different and inferior results of existing models as shown in columns ($g$) and ($h$). Basically, existing causality models hardly successfully reconstruct the real counterfactual results. Further, even when choosing a monotonic structural form like the CES in table \ref{tab:CES:Triang:Average} (appendix \ref{Appendix:Tables}), results when applying conventional casual inference (column ($g$) and ($h$)) are still inaccurate due to an improper restricted support domain. As can be be seen, our model produces real and synthetic counterfactual results (in bold) which obey SWD similarity. For instance, in the 80th quantile of $X$, we report an estimate of (standard error) (table \ref{tab:CES:Triang:Average}, column ($e$)) 13.94 (0.344) for the real counterfactual and (table \ref{tab:CES:Triang:Average}, column ($f$)) 14.13 (0.553) for the synthetic counterfactual. The indication that the empirical support domain is restricted is based on the fact that the real counterfactual estimates are SWD different from the partial means estimates (as described in criterion \ref{item:MEANS:Similarity}). Just as an example, in the 90th quantile of $X$, we report an estimate of (table \ref{tab:CES:Triang:Average}, column ($e$)) 15.73 (0.728) for the real counterfactual and (table \ref{tab:CES:Triang:Average}, column $(h)$) 20.43 (0.960) for the partial means. The high similarity between the average estimates of the control variable (column $(g)$) and the partial means (column $(h)$) indicates that indeed there is no statistical evidence for the violation of strict monotonicity (as described in criterion \ref{item:MONO:Similarity}). For instance, in the 5th quantile of $X$, we get the estimates (standard error) of (table \ref{tab:CES:Triang:Average}, column $(g)$) 6.98 (0.338) for the control variable and (table \ref{tab:CES:Triang:Average}, column $(h)$) 6.96 (0.276) for the partial means; in the 25th quantile of $X$, the estimate is (table \ref{tab:CES:Triang:Average}, column $(g)$) 8.52 (0.213) for the control variable and (table \ref{tab:CES:Triang:Average}, column $(h)$) 8.51 (0.188) for the partial means; in the 75th quantile of $X$, it is (table \ref{tab:CES:Triang:Average}, column $(g)$) 15.58 (0.338) for the control variable and (table \ref{tab:CES:Triang:Average}, column $(h)$) 15.59 (0.276) for the partial means. 
		
		In table \ref{tab:tranglog:Average} (appendix \ref{Appendix:Tables}) we apply the simulations to the known Translog function which is non-monotonic. The synthetic counterfactual results are not statistically different than the real counterfactual results. For instance, in the 25th quantile of $X$, we report an estimate of (table \ref{tab:tranglog:Average}, column ($e$)) 16.63 (0.175) for the real counterfactual and (table \ref{tab:tranglog:Average}, column ($f$)) 16.61 (0.407) for the synthetic counterfactual. However, the common practice results are extremely inaccurate in magnitudes (columns $(g)$ and $(h)$). In fact, the non-monotonicity is expressed by the disparity between the control function and the partial means estimates. Just as an example, in the 20th quantile of $X$, the estimate is (table \ref{tab:tranglog:Average}, column $(g)$) 20.95 (1.087) for the control variable and (table \ref{tab:tranglog:Average}, column $(h)$) 14.71 (0.857) for the partial means. In this specific quantile there is also a disparity between the empirical and restricted supports, captured by the gap between real counterfactual and the partial means estimates. For instance, in the 15th quantile of $X$, it is (table \ref{tab:tranglog:Average}, column ($e$)) 15.88 (0.190) for the real counterfactual and (table \ref{tab:tranglog:Average}, column $(h)$) 13.53 (0.959) for the partial means.
		
		The entries in table \ref{tab:tanhtanh:Average} (appendix \ref{Appendix:Tables}) the estimates of the hyperbolic tangent form where the strict monotonicity is violated. Applying the our offered model produces very accurate results and exhibits insignificant statistical difference between the real and synthetic interventional distributions (columns $(e)$ and $(f)$). However, the common practice results are extremely inaccurate in both magnitudes and signs (columns $(g)$ and $(h)$). The violation of monotonicity is expressed by a strong disparity between the control function and the partial means estimates. For instance, the 20th quantile of $X$, we report an estimate of (table \ref{tab:tanhtanh:Average}, column $(g)$) -0.18 (0.291) for the control variable and (table \ref{tab:tanhtanh:Average}, column $(h)$) -0.56 (0.189) for the partial means. There is also a disparity between the empirical and restricted supports, captured by the gap between real counterfactual and the partial means estimates. E.g., in the 15th quantile of $X$, we report an estimate of (table \ref{tab:tanhtanh:Average}, column ($e$)) -0.17 (0.038) for the real counterfactual and (table \ref{tab:tanhtanh:Average}, column $(h)$) -0.70 (0.212) for the partial means.
		
		The entries in table \ref{tab:AIDS:Average} (appendix \ref{Appendix:Tables}) exhibit estimates of the well-known demand function (AIDS). This very well-known functional form is intrinsically non-monotonic. Applying the our offered model produces very accurate results and exhibits insignificant statistical difference between the real and synthetic interventional distributions (columns $(e)$ and $(f)$). For instance, in the 65th quantile of $X$, we report an estimate of (table \ref{tab:AIDS:Average}, column ($e$)) 16.71 (0.188) for the real counterfactual and (table \ref{tab:AIDS:Average}, column ($f$)) 16.65 (0.348) for the synthetic counterfactual. However, as can be seen results largely demonstrate the failure to mimic the real counterfactual distribution when applying conventional casual inference (column ($g$) and ($h$)). For instance, in the 90th quantile of $X$, we report an estimate of (table \ref{tab:AIDS:Average}, column ($e$)) 22.79 (0.669) for the real counterfactual and (table \ref{tab:AIDS:Average}, column $(h)$) 26.58 (0.975) for the partial means; For this quantile, the synthetic counterfactual result (table \ref{tab:AIDS:Average}, column ($f$)) is 21.66 (0.758). For instance, in the 55th quantile of $X$, we report an estimate of (table \ref{tab:AIDS:Average}, column $(g)$) 21.16 (1.101) for the control variable and (table \ref{tab:AIDS:Average}, column $(h)$) 16.67 (0.748) for the partial means; For this quantile, the real counterfactual result is (table \ref{tab:AIDS:Average}, column ($e$)) 16.06 (0.195) and the synthetic counterfactual result is (table \ref{tab:AIDS:Average}, column ($f$)) 16.09 (0.344). There is also a disparity between the empirical and restricted supports, captured by the gap between real counterfactual and the partial means estimates.
		
		The entries in table \ref{tab:BackBending:Average} (appendix \ref{Appendix:Tables}) represent those emerging from the estimation of backward-bending supply curve. In this instance, monotonicity is intrinsically violated. This is attested to the disparity between the control function and the partial means estimates in both magnitude as well as in sign, which in reality may lead to wrong public policy. For instance, in the 35th quantile of $X$, we report an estimate of (table \ref{tab:BackBending:Average}, column $(g)$) -0.33 (0.135) for the control variable and (table \ref{tab:BackBending:Average}, column $(h)$) 0.33 (0.091) for the partial means. Applying our offered model produces very accurate results and exhibits insignificant statistical difference between the real and synthetic interventional distributions (columns $(e)$ and $(f)$). Just as an example, in the 60th quantile of $X$, the real counterfactual result (table \ref{tab:BackBending:Average}, column $(e)$) is 0.61 (0.040) and the synthetic counterfactual result (table \ref{tab:BackBending:Average}, column $(f)$) is 0.59 (0.093). Further, below the 40’th quantile it can be seen that there is a disparity between the empirical and restricted supports, captured by the gap between real counterfactual and the partial means estimates. In another instance, in the 25th quantile of $X$, we report an estimate of (table \ref{tab:BackBending:Average}, column ($e$)) 0.47 (0.033) for the real counterfactual and (table \ref{tab:BackBending:Average}, column $(h)$) 0.21 (0.087) for the partial means. The synthetic counterfactual result is (table \ref{tab:BackBending:Average}, column ($f$)) 0.44 (0.109) indicating that it is not statistically different than the real result.
		
		To sum up, the observed and counterfactual similarity criteria \ref{item:Observed:Similarity} and \ref{item:CF:Similarity}  tables \ref{tab:CES:Triang:Average}-\ref{tab:BackBending:Average} (appendix \ref{Appendix:Tables}) developed in this paper demonstrate the high accuracy of the synthetic counterfactual machinery relative to the real data counterfactual result. In terms of sensitivity to the violation of the strict monotonicity assumption, as captured by criterion \ref{item:MONO:Similarity}, our proposed estimator produces accurate results regardless of the absence or presence of monotonicity and outperforms control variable existing approach. In terms of sensitivity to the violation of unrestricted support domain, as captured by criterion \ref{item:MEANS:Similarity}, our proposed estimator outperforms both the partial means as well as the control variable estimators, by mimicking the results obtained for the real data counterfactual outcome. It emphasized the results generated by the aforementioned existing models produce results that are inaccurate both in magnitude and sign of the estimates.\footnote{In the supplementary material we provide different stability measures.} 

		\section{Conclusion}\label{Sec:Conclusion}
		Different latent contexts may affect the relations between cause and outcome and hence can have significant ramification for causal relations and inference. 
		We develop a novel identification strategy as well as a new estimator for triangular models in the presence of non-separable disturbances, which unlike the common practice, does not rely on the strict monotonicity assumption. The key result of this identifiability approach is the explicit characterization of the distributional relationship between the latent context variable and the vector of observables through Fredholm integral equations governed by generator functions and an \textit{unknown kernel function}, inducing a non-monotonic inverse problem.  The role of these generator functions is two-fold: (i) to characterize the unknown kernel function induced by these generator functions and (ii) to ensure that the estimator of the unknown quantity is a continuous function of the data given this unknown kernel function. This very formulation facilitates the establishment of uniqueness of the interventional distribution given the observables. 
		Furthermore, we develop a novel CONGAN estimator based on a feed-forward Neural network architecture generating a synthetic counterfactual distribution. This synthetic distribution represents various combinations of actions, outcomes and contexts, rarely available in finite samples. In the simulations, the proposed estimator's performance in finite samples has been validated in several aspects by testing various data generation processes each associated with the commonly employed structural functions: Cobb-Douglas, AIDS, CES, Translog and backward-bending supply models. It can be seen that our model generates synthetic data mimicking the behavior in the real data (in terms of SWD similarity). This comparison has been done for different quantiles of the action variable. Our results are also compared to both the monotonic control variable as well as partial means estimators. In some of these models, the counterfactual results obtained by the conventionally used partial means estimator, largely deviate from the benchmark real data results in quantity and sign. This may in reality have important ramifications as to the proper public policy.
		
		In future work, our framework can be incorporated in dynamic models as well as panel models to capture the dependence of the action and outcome on past events through a different Neural network architecture, such as recurrent Neural networks. 
		
		\pagebreak
		\appendix	
		\renewcommand{\thesection}{\Alph{section}.\arabic{section}}
		
							\bibliography{congan}               
		
		\bibliographystyle{apacite}
		\pagebreak		
		\begin{appendices}
			\section*{Appendices}
			\begin{subappendices}
				
				\section{An integral equation for the intervention}\label{Sec:integral}
				
				In this section we derive an integral equation relating the unknown intervention $H(y|x)$ to known quantities. 
%
Recall that Eq. (\ref{eq:def_H}) for $H(y|x)$ is fully characterized by the quantities $f_{\upeta}(\cdot)$ and $F_{Y|X=x,\upeta=\eta}(y)$. 
Unfortunately, both of these quantities are
unknown, since $\upeta$ is unobserved. 

For future use, from this point on, instead of working with the disturbance $\eta$, we make a change of variables and work with
its normalized random variable 
\begin{equation}
	\upomega = F_{\upeta}(\upeta), 
	\label{eq:omega_eta}
\end{equation}
which is distributed as $U[0,1]$. 

Towards our goal of proving that $H(y|x)$ is identifiable, we next formulate the relationship between the unknown quantity $F_{Y|X,\upomega}$ and the (known) distribution $F_{Y|X,Z}$ through an integral equation.  
This relation is stated in the following lemma is proven in Appendix \ref{Appendix:Proof}.

\begin{lemma}\label{lemma:integral}	
	Let $(X,Y,Z,\upeta,\bm\epsilon)$ be distributed according to the triangular model in \eqref{X:Structural} and \eqref{Y:Structural}. Under assumptions \ref{asu:common} and \ref{asu:continuity}
	the following integral equation holds
	$\forall (x,y,z)\in\text{Supp}(X,Y,Z)$, 
	\begin{align}
		& F_{Y|X=x, Z=z}(y) =  \text{Pr}(Y\le y|X=x,Z=z) = \int_{0}^1 F_{Y|X=x, \upomega=\omega}(y)f_{\upomega|X=x,Z=z}(\omega)d\omega.  
		\label{eq:lemma:cumulative}
	\end{align}
\end{lemma}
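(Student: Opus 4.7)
My plan is to derive the integral equation by the law of total probability, conditioning on the normalized disturbance $\upomega$, and then showing that the inner conditional distribution $F_{Y|X=x,Z=z,\upomega=\omega}(y)$ collapses to $F_{Y|X=x,\upomega=\omega}(y)$. The first step is purely mechanical: by the tower property (assumption \ref{asu:continuity} ensures $(X,Z)$ admits a joint density so conditioning is well defined),
\begin{equation*}
F_{Y|X=x,Z=z}(y)=\int_{0}^{1} F_{Y|X=x,Z=z,\upomega=\omega}(y)\,f_{\upomega|X=x,Z=z}(\omega)\,d\omega.
\end{equation*}
Since $\upomega=F_{\upeta}(\upeta)$ is a measure-preserving transformation (well defined because $\upeta$ is continuous by the structural model), changing variables from $\eta$ to $\omega$ is lossless, and the support of $\upomega$ conditioned on $X=x,Z=z$ is $[0,1]$ by assumption \ref{asu:common} (equivalently, $\text{Supp}(\upomega|X=x)=[0,1]$).

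The main work is establishing the conditional independence $Y \ci Z \mid X,\upomega$, which reduces the inner integrand to $F_{Y|X=x,\upomega=\omega}(y)$ and yields the claimed equation. For this I would argue as follows. From $Y=g(X,\bm\epsilon)$, it suffices to show that the conditional distribution of $\bm\epsilon$ given $(X=x,Z=z,\upeta=\eta)$ does not depend on $(x,z)$ beyond $\eta$. Since $X=h(Z,\upeta)$ is a measurable function of $(Z,\upeta)$, the $\sigma$-algebra generated by $(X,Z,\upeta)$ coincides with that generated by $(Z,\upeta)$ on events of positive density (this is where assumption \ref{asu:continuity}(a), namely that $\partial h/\partial \eta=0$ only on a null set, is used to guarantee well-posed conditioning). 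Therefore the conditional law of $\bm\epsilon$ given $(X,Z,\upeta)$ equals its conditional law given $(Z,\upeta)$, which by the model assumption $Z \ci (\upeta,\bm\epsilon)$ equals the conditional law given $\upeta$ alone. Replacing $\upeta$ by the deterministic transformation $\upomega=F_{\upeta}(\upeta)$ preserves this conclusion.

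Combining the two ingredients gives
\begin{equation*}
F_{Y|X=x,Z=z,\upomega=\omega}(y)=\mathbb{E}_{\bm\epsilon|\upomega=\omega}\bigl[\mathds{1}\{g(x,\bm\epsilon)\le y\}\bigr]=F_{Y|X=x,\upomega=\omega}(y),
\end{equation*}
and substitution back into the total-probability decomposition yields \eqref{eq:lemma:cumulative}. The main obstacle I anticipate is the rigorous justification of the conditional-independence step under continuous conditioning, because $X$ is a deterministic function of $(Z,\upeta)$: one must verify that conditioning on $X=x$ in addition to $(Z=z,\upeta=\eta)$ adds no information (beyond restricting to the set $\{h(z,\eta)=x\}$, which is implicit). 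Assumption \ref{asu:continuity}(a), together with assumption \ref{asu:common} guaranteeing that the conditional distribution of $\upeta\mid X$ has the full support of $\upeta$, is exactly what makes the change-of-variables and the conditional-independence argument go through cleanly.
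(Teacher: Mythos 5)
Your proposal is correct and follows essentially the same route as the paper's proof: a law-of-total-probability decomposition over the latent disturbance followed by the conditional independence $Y\ci Z\mid X,\upeta$ (equivalently $\bm\epsilon\ci Z\mid X,\upeta$, which the paper derives from $Z\ci(\upeta,\bm\epsilon)$ and the exclusion of $Z$ from the outcome equation), with the probability integral transform $\upomega=F_{\upeta}(\upeta)$ converting the integral to $[0,1]$. The only cosmetic difference is that you condition on $\upomega$ directly from the outset, whereas the paper first conditions on $\upeta$, applies Bayes' law and $Z\ci\upeta$ to rewrite the posterior density, and only then changes variables to $\upomega$; the substance is identical.
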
	
The entire variation in the estimable quantity $F_{Y|X=x,Z=z}(y)$ in the LHS of the system in \eqref{eq:lemma:cumulative} is propagated by $Z$. Such variation provides an indirect information regarding $\upeta$ through the integral equations  \eqref{eq:lemma:cumulative}. For sake of brevity, we rewrite Eq. \eqref{eq:lemma:cumulative} as follows, 
\begin{align}
	& G_{x,y}(z)=\int_{0}^1 \upgamma_{x,y}(\omega) k_{x}(z,\omega)d\omega \quad \forall (x,y)\in\text{Supp}(X,Y).
	\label{eq:integral:normalized}
\end{align}	
where $G_{x,y}(z):=F_{Y|X=x,Z=z}(y)$ is the observed quantity, $k_x(z,\omega) := f_{\upomega|X=x,Z=z}(\omega)$ is an unknown kernel function and 
\begin{equation}
	\upgamma_{x,y}(\omega):=F_{Y|X=x, \upomega=\omega}(y)
	\label{eq:gamma_def}
\end{equation}
is another unknown function. In terms of these quantities, Eq. (\ref{eq:def_H}) can be written as
\begin{align}
	H(y|x) =  \int_{0}^1 \upgamma_{x,y}(\omega)d\omega,\label{eq:H:integral}
\end{align}			
Note that Eq. (\ref{eq:integral:normalized})
is an inhomogeneous Fredholm integral equation of the first kind. In this equation, the left hand side is known, whereas on the right hand side both the kernel as well as the function $\upgamma_{x,y}$ are unknown.
In addition, the system of integral equations may be singular due to an unbounded kernel in that for any $(x,z)$ the kernel may be a weighted sum of Dirac delta functions $\updelta(\cdot)$, aggravating the identifiability challenge. Our identifiability strategy for the interventional distribution bypasses these issues. This is done by characterizing an equivalence class of triangular models sharing the same interventional distribution, despite having different kernels, and is treated in the section to follow.

\subsection{Fourier Expansion}
Our main building block in achieving identifiability relies on 
representing the intervention in terms of Fourier series. For doing so, in the following lemma we rely on Carleson's theorem \citep{carleson1966convergence} for establishing the point-wise (Lebesgue) almost everywhere convergence of Fourier series of $L_2$ functions (see proof in appendix \ref{Appendix:Proof}).
\begin{lemma}\label{lemma:H:Fourier}
	Suppose that $f_{Y,X|Z=z}$ and $f_{Y,X|Z=z}^{\text{CF}}$ satisfy squares integrability. Namely,
	\[
	\int\int f_{Y,X|Z=z}^2(y,x)dydx <  \infty \quad \forall z \quad \text{and} \quad \int\int\left[f_{Y,X|Z=z}^{\text{CF}}(y,x)\right]^2 dydx <  \infty \quad\forall z.
	\]
	Thus, $H(y|x)$ can be characterized $\forall (x,y)\in\text{Supp}(X,Y)$ independently of $Z$ as follows,
	\begin{align}
		H(y|x)=\left\{f_{X|Z=z}(x)\right\}^{-1}\int _{y^{\prime}}^y\left\{
		\int_{\xi_1}\int_{\xi_2}\mathbb{E}_{\substack{Y\sim H(\cdot|X)\\ X\sim F_{X|Z=z}}}\left[\Psi_{\xi_1}(-Y)\Psi_{\xi_2}(-X)\right]\Psi_{\xi_2}(y^{\prime})\Psi_{\xi_3}(x) d\xi_2 d\xi_3 \right\}dy^{\prime},\label{eq:box}
	\end{align}
	and it holds that $\forall z\in\text{Supp}(Z|X=x)$,
	\begin{align}
		\mathbb{E}_{\substack{Y\sim H(\cdot|X)\\ X\sim F_{X|Z=z}}}\left[\Psi_{\xi_2}\left(-Y\right)\Psi_{\xi_3}\left(-X\right)\right]=\mathbb{E}_{\upomega,\upomega^{\prime},\bm\nu}\left[\Psi_{\xi_3}\left(-g^*\left(h^*(z,\upomega),\upomega^{\prime},\bm\upnu\right)\right)\Psi_{\xi_3}\left(-h^*(z,\upomega)\right)\right].\label{eq:moments:gen}
	\end{align}	 
	
\end{lemma}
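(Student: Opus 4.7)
The plan is to prove the two claims of the lemma separately. The first is a two-dimensional Fourier representation of $H(y|x)$, and the second identifies the requisite Fourier coefficients with expectations computed under the decoupled representation $(h^*,g^*)$ of Lemma \ref{Lemma:Decoupling}.

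For the representation in Eq. \eqref{eq:box}, I would start from the factorization
$f_{Y,X|Z=z}^{\text{CF}}(y,x)=f_{X|Z=z}(x)\,\partial_y H(y|x)$,
which follows from Lemma \ref{lem:H_T} together with the independence of $\widetilde{\upomega}$ and $\upomega$ in the decoupled system: the $x$-marginal depends only on $\upomega$, while the conditional distribution of $Y$ given $X=x$ in the CF regime depends only on the independent copy $\widetilde{\upomega}$ and $\bm\nu$, hence coincides with the interventional distribution $T(y|x)=H(y|x)$. Square integrability (Assumption \ref{asu:Integrability}, extended to the CF density) places $f_{Y,X|Z=z}^{\text{CF}}$ in $L^2(\mathbb{R}^2)$, so Plancherel combined with the Carleson pointwise-convergence theorem yields, Lebesgue almost everywhere, the expansion $f_{Y,X|Z=z}^{\text{CF}}(y,x)=\int\!\!\int \widehat{f}^{\text{CF}}(\xi_1,\xi_2)\Psi_{\xi_1}(y)\Psi_{\xi_2}(x)\,d\xi_1 d\xi_2$, with coefficients $\widehat{f}^{\text{CF}}(\xi_1,\xi_2)=\mathbb{E}_{Y\sim H(\cdot|X),X\sim F_{X|Z=z}}[\Psi_{\xi_1}(-Y)\Psi_{\xi_2}(-X)]$. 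Integrating in $y'$ from a reference point up to $y$ and dividing by $f_{X|Z=z}(x)$ reproduces Eq. \eqref{eq:box}. That the left-hand side does not depend on $z$ is precisely Lemma \ref{lem:H_T}, which forces the right-hand side (computed for any fixed admissible $z\in\text{Supp}(Z|X=x)$) to be an intrinsic property of the pair $(x,y)$.

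For the second claim, Lemma \ref{Lemma:Decoupling} allows replacing the triplet $(X,Y,Z)$ by the distributionally equivalent $(h^*(Z,\upomega),g^*(h^*(Z,\upomega),\upomega,\bm\nu),Z)$ with independent uniform noises. The do-operation on $X$ corresponds, in this decoupled representation, to redrawing the noise driving $Y$ independently of the noise driving $X$; concretely, introduce $\upomega^{\prime}\sim U[0,1]$ independent of $(\upomega,\bm\nu)$ and set $Y^{\text{CF}}=g^*(h^*(z,\upomega),\upomega^{\prime},\bm\nu)$, which by Lemma \ref{lemma:Invariant:Do} has the conditional distribution $H(\cdot|h^*(z,\upomega))$. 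Substituting into the coefficient formula and using independence of $\upomega$ from $(\upomega^{\prime},\bm\nu)$ delivers Eq. \eqref{eq:moments:gen} by direct substitution, with no remaining functional-analytic content.

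The principal obstacle is justifying the pointwise a.e.\ two-dimensional Fourier inversion rigorously. Carleson's theorem is a one-dimensional statement about $L^2$ functions, so the 2D version requires either a tensor-product/iterated argument (Fubini combined with 1D Carleson applied in each variable while fixing the other on a full-measure set of slices where the slice remains $L^2$) or the Fefferman--Sj\"olin-type extension. A related subtlety is commuting the $y^{\prime}$-integration with the $(\xi_1,\xi_2)$ integrals in the sense needed for pointwise, rather than merely $L^2$, equalities; this is typically handled by restricting to a bounded interval in $y^{\prime}$, invoking an $L^1\cap L^2$ localization on the CF density, and then passing to the limit by monotone or dominated convergence using Assumption \ref{asu:Integrability}.
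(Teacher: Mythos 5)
Your proposal is correct and follows essentially the same route the paper takes: the paper's appendix does not actually contain a dedicated proof of Lemma \ref{lemma:H:Fourier} (despite the pointer), but its proofs of Lemmas \ref{lemma:Fourier:CF:representation} and \ref{lemma:Fourier:representation} use exactly your argument --- Fourier-transform the counterfactual density, identify the coefficients as expectations of $\Psi_{\xi_1}(-Y)\Psi_{\xi_2}(-X)$ under the counterfactual law, invert, integrate in $y^{\prime}$, and divide by $f_{X|Z=z}(x)$, with Eq. \eqref{eq:moments:gen} obtained by substituting the decoupled generators of Lemma \ref{Lemma:Decoupling} with an independent copy $\upomega^{\prime}$. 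Your explicit factorization $f_{Y,X|Z=z}^{\text{CF}}(y,x)=f_{X|Z=z}(x)\,\partial_y H(y|x)$ and your flag that Carleson's theorem is one-dimensional (so the pointwise inversion needs an iterated or Fefferman--Sj\"olin-type argument) are both steps the paper leaves implicit, and they strengthen rather than deviate from its argument.
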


				\section{Algorithms}\label{Appendix:Algorithms}
				\renewcommand{\theequation}{\thesection.\arabic{equation}}
				\setcounter{equation}{0}
				We present our algorithms to estimate the counterfactual distribution: Initial parameter settings and Contaminated Generative Adversarial Network (CONGAN). 
				
				Algorithm \ref{algorithm:ContIDE} is employed to provide initial estimates for $\bm{\upbeta}$ and $\bm{\uptheta}$ in order to estimate $\widetilde{h}$ and $\widetilde{g}$, respectively. This is achieved by minimizing the negative entropy:
				\begin{align} 
					\mathcal{L}(\bm{\upbeta},\bm{\uptheta},\bm{\upsigma})=-N^{-1}\sum_{i=1}^N\log \frac{1}{\upsigma_y}\frac{1}{\upsigma_x}J^{-1}\sum_{j=1}^J K\left(\frac{\widetilde{g}\left(\widetilde{h}\left(Z_i,\omega_j\right),\omega_j,\bm\nu_j\right)-Y_i^{\text{O}}}{\upsigma_y}\right)K\left(\frac{\widetilde{h}\left(Z_i,\omega_j\right)-X_i^{\text{O}}}{\upsigma_x}\right),\nonumber
				\end{align} the empirical variant of the following negative entropy: $\mathcal{L}^{*}(\bm{\upbeta},\bm{\uptheta})=-\int \log f_{Y,X|Z}^{\text{O}}(y,x|z; \widetilde{h},\widetilde{g})f_Z(z)dy dx dz$,
				with $f_{Y,X|Z}^{\text{O}}(y,x|z; \widetilde{h},\widetilde{g})=\frac{\partial^2}{\partial y \partial x}F_{Y,X|Z}^{\text{O}}(y,x|z; \widetilde{h},\widetilde{g})$ in which the right hand side is the cross-partial derivative of the conditional distribution characterized in \eqref{equ:FYX:givenZ} with respect to $x$ and $y$.
				
				Algorithm \ref{algorithm:ContGAN} of the CONGAN model in \eqref{CONT:GAN:LOSS} is employed to yield the final estimates for $\bm{\upbeta}$ and $\bm{\uptheta}$ as well as the nuisance parameter $\bm{\upsigma}=(\upsigma_x,\upsigma_y,\upsigma_z)$. This is achieved by minimizing:
				\begin{align}
					\mathcal{K}_n(\bm{\uptheta},\bm{\upbeta},\bm{\upsigma}) = \frac{1}{N}\sum_{i=1}^N\log\frac{\widehat{p}_{\bm{\upsigma}}(X_i,Y_i,Z_i)}{\widehat{p}_{\bm{\upsigma}}(X_i,Y_i,Z_i)+\widehat{q}_{\bm{\upsigma}}(X_i,Y_i,Z_i)} + \frac{1}{N}\sum_{i=1}^N\log\frac{\widehat{q}_{\bm{\upsigma}}(\widehat{X}_i,\widehat{Y}_i,Z_i)}{\widehat{p}_{\bm{\upsigma}}(\widehat{X}_i,\widehat{Y}_i,Z_i)+\widehat{q}_{\bm{\upsigma}}(\widehat{X}_i,\widehat{Y}_i,Z_i)}.
				\end{align}

				\begin{center}
					\scalebox{1}{
						\begin{minipage}{1\linewidth}
							\begin{algorithm}[H]
								\setstretch{0.5}
								\SetAlgoLined
								\SetKwInOut{KwIn}{Input}
								\SetKwInOut{KwOut}{Output}	
								\KwIn{A sequence $\left\{(X_i,Y_i,Z_i)\right\}_{i=1}^N$ of independent realizations from $F_{X,Y,Z}$.}
								\KwOut{A pair of sequential generator functions $\left\{\widetilde{h}(\cdot,\cdot;\bm{\upbeta}),\widetilde{g}(\cdot,\cdot,\cdot;\bm{\uptheta})\right\}$.}
								initialization of $\bm{\upbeta}$ and $\bm{\uptheta}, \upsigma_x, \upsigma_y$\;
								$\text{tol}  = 10^{-16}$\;      
								$\text{gap}  = 0$\;     
								$\text{iter}  = 0$\;    
								$\Lambda  = 0.01$\;
								$\text{Max.Iterations} = 10,000 $\;
								\While{$\text{gap} \ge \text{tol}$ \textbf{ and } $\text{iter}\le\text{Max.Iterations} $}{
									\For {$i=1:N$}{
										\For {$j=1:J$}{
											$\omega[i,j] = U_1, \hspace{1em} U_1\sim U[0,1]$\;       
											$\bm\nu[i,j] = U_2, \hspace{1em} U_2\sim U[0,1]$\;                                                       
											$\widehat{X}_{i,j}=\widetilde{h}(Z_i,\omega[i,j];\bm{\upbeta}^{\text{iter}})$\;
											$\widehat{Y}_{i,j}=\widetilde{g}(\widehat{X}_{i,j},\omega[i,j],\bm\nu[i,j];\bm{\uptheta}^{\text{iter}})$\;
										}              
										$\text{density}_i = \frac{1}{J}\sum_{j=1}^J \frac{1}{\upsigma_x} K\left(\frac{\widehat{X}_{i,j}-X_i}{\upsigma_x}\right)\frac{1}{\upsigma_y} K\left(\frac{\widehat{Y}_{i,j}-Y_i}{\upsigma_y}\right)$\;         
										
										$d_{i,j} =\max\left(\mathds{1}\left\{\widehat{X}_{i,j}\ne X_i\right\},\mathds{1}\left\{\widehat{Y}_{i,j}\ne Y_i\right\}\right)$\;
										$\widehat{p}_{\bm{\upsigma},i}(X_i,Y_i) = \left\{\sum_{j=1}^J d_{i,j}\right\}^{-1}\sum_{j=1}^J d_{i,j}\frac{1}{\upsigma_x} K\left(\frac{\widehat{X}_{i,j}-X_i}{\upsigma_x}\right)\frac{1}{\upsigma_y} K\left(\frac{\widehat{Y}_{i,j}-Y_i}{\upsigma_y}\right)$\;                        
									}
									
									$\mathcal{L}^{\text{iter}}=-\frac{1}{N}\sum_{i=1}^N\log\left(\max(\text{density}_i, 10^{-14})\right)$\;
									\vspace{1em}
									\textcolor{blue}{\textit{Employing leave-one out entropy estimator \citep{hall1993estimation}}}\;
									$\mathcal{L}_{\bm{\upsigma}}^{\text{iter}} = -\frac{1}{N}\sum_{i=1}^N \log\left(\max(\widehat{p}_{\bm{\upsigma},i}(X_i,Y_i), 10^{-14})\right)
									$\;
									\vspace{1em}
									$\left[\begin{matrix}
										\bm{\upbeta}^{\text{iter}+1}\\
										\bm{\uptheta}^{\text{iter}+1}\\
										\bm{\upsigma}^{\text{iter}+1}           
									\end{matrix}\right] = \left[\begin{matrix}
										\bm{\upbeta}^{\text{iter}}\\
										\bm{\uptheta}^{\text{iter}}\\
										\bm{\upsigma}^{\text{iter}}
									\end{matrix}\right] - \Lambda\left[\begin{matrix}
										\nabla_{\bm{\upbeta}}\mathcal{L}^{\text{iter}}\\
										\nabla_{\bm{\uptheta}}\mathcal{L}^{\text{iter}}\\
										\nabla_{\bm{\upsigma}}\mathcal{L}_{\bm{\upsigma}}^{\text{iter}}         
									\end{matrix}\right]$\;
									\vspace{1em}            
									$(\upsigma_x, \upsigma_y) = \bm{\upsigma}^{\text{iter}+1}$\;
									\vspace{1em}            
									\If{$\text{iter} > 0$}{
										$\text{gap} =  \mathcal{L}^{\text{iter}}-\mathcal{L}^{\text{iter-1}}$ \;
									}
									$\text{iter} = \text{iter}+1$
								}
								\caption{Initial parameter settings}\label{algorithm:ContIDE}
							\end{algorithm}
						\end{minipage}%
					}
				\end{center}
				
				

					\begin{center}
						\scalebox{1}{
							\begin{minipage}{1\linewidth}
								\begin{algorithm}[H]
									\setstretch{0.5}
									\SetAlgoLined
									\SetKwInOut{KwIn}{Input}
									\SetKwInOut{KwOut}{Output}
									\KwIn{A sequence $\left\{(X_i,Y_i,Z_i)\right\}_{i=1}^N$ of independent realizations from $F_{X,Y,Z}$.}
									\KwOut{A pair of sequential generator functions $\left\{\widetilde{h}(\cdot,\cdot;\bm{\upbeta}),\widetilde{g}(\cdot,\cdot,\cdot;\bm{\uptheta})\right\}$.}
									initialization of $\bm{\upbeta}$ and $\bm{\uptheta}, \upsigma_x, \upsigma_y, \upsigma_z$\;
									$\text{tol}  = 10^{-16}$\;      
									$\text{gap}  = 0$\;     
									$\text{iter}  = 0$\;    
									$\Lambda  = 0.01$\;
									$\text{Max.Iterations} = 10,000 $\;
									\While{$\text{gap} \ge \text{tol}$ \textbf{ and } $\text{iter}\le\text{Max.Iterations} $}{
										\For {$i=1:N$}{
											$\omega[i] = U_1, \hspace{1em} U_1\sim U[0,1]$\;       
											$\bm\nu[i] = U_2, \hspace{1em} U_2\sim U[0,1]$\;                                       
											$\widehat{X}_{i}=\widetilde{h}(Z_i,\omega[i];\bm{\upbeta}^{\text{iter}})$\;
											$\widehat{Y}_{i}=\widetilde{g}(\widehat{X}_{i},\omega[i],\bm\nu[i];\bm{\uptheta}^{\text{iter}})$\;
										}
										\For {$i=1:N$}{
											$\widehat{p}_{\bm{\upsigma}}(X_i,Y_i,Z_i) = \max\left(\frac{1}{N}\sum_{j=1}^N \frac{1}{\upsigma_x} K\left(\frac{X_{j}-X_i}{\upsigma_x}\right)\frac{1}{\upsigma_y} K\left(\frac{Y_{j}-Y_i}{\upsigma_y}\right)\frac{1}{\upsigma_z}K\left(\frac{Z_j-Z_i}{\upsigma_z}\right),10^{-14}\right)$\;         
											$\widehat{p}_{\bm{\upsigma}}(\widehat{X}_i,\widehat{Y}_i,\widehat{Z}_i) = \max\left(\frac{1}{N}\sum_{j=1}^N \frac{1}{\upsigma_x} K\left(\frac{X_{j}-\widehat{X}_i}{\upsigma_x}\right)\frac{1}{\upsigma_y} K\left(\frac{Y_{j}-\widehat{Y}_i}{\upsigma_y}\right)\frac{1}{\upsigma_z}K\left(\frac{Z_j-Z_i}{\upsigma_z}\right),10^{-14}\right)$\;                             
											$\widehat{q}_{\bm{\upsigma}}(X_i,Y_i,Z_i) = \max\left(\frac{1}{N}\sum_{j=1}^N \frac{1}{\upsigma_x} K\left(\frac{\widehat{X}_{j}-X_i}{\upsigma_x}\right)\frac{1}{\upsigma_y} K\left(\frac{\widehat{Y}_{j}-Y_i}{\upsigma_y}\right)\frac{1}{\upsigma_z}K\left(\frac{Z_j-Z_i}{\upsigma_z}\right),10^{-14}\right)$\;                     
											$\widehat{q}_{\bm{\upsigma}}(\widehat{X}_i,\widehat{Y}_i,\widehat{Z}_i) = \max\left(\frac{1}{N}\sum_{j=1}^N \frac{1}{\upsigma_x} K\left(\frac{\widehat{X}_{j}-\widehat{X}_i}{\upsigma_x}\right)\frac{1}{\upsigma_y} K\left(\frac{\widehat{Y}_{j}-\widehat{Y}_i}{\upsigma_y}\right)\frac{1}{\upsigma_z}K\left(\frac{Z_j-Z_i}{\upsigma_z}\right),10^{-14}\right)$\;                     
											$\widehat{p}_{i,\bm{\upsigma}}(X_i,Y_i,Z_i) = \max\left(\frac{1}{N-1}\sum_{j\ne i} \frac{1}{\upsigma_x} K\left(\frac{X_{j}-X_i}{\upsigma_x}\right)\frac{1}{\upsigma_y} K\left(\frac{Y_{j}-Y_i}{\upsigma_y}\right)\frac{1}{\upsigma_z}K\left(\frac{Z_j-Z_i}{\upsigma_z}\right),10^{-14}\right)$\;                     
											$\widehat{q}_{i,\bm{\upsigma}}(\widehat{X}_i,\widehat{Y}_i,\widehat{Z}_i) = \max\left(\frac{1}{N-1}\sum_{j\ne i} \frac{1}{\upsigma_x} K\left(\frac{\widehat{X}_{j}-\widehat{X}_i}{\upsigma_x}\right)\frac{1}{\upsigma_y} K\left(\frac{\widehat{Y}_{j}-\widehat{Y}_i}{\upsigma_y}\right)\frac{1}{\upsigma_z}K\left(\frac{Z_j-Z_i}{\upsigma_z}\right),10^{-14}\right)$\;                             
										}

										$\mathcal{L}^{\text{iter}} = \frac{1}{N}\sum_{i=1}^N\log\frac{\widehat{p}_{\bm{\upsigma}}(X_i,Y_i,Z_i)}{\widehat{p}_{\bm{\upsigma}}(X_i,Y_i,Z_i)+\widehat{q}_{\bm{\upsigma}}(X_i,Y_i,Z_i)} + \frac{1}{N}\sum_{i=1}^N\log\frac{\widehat{q}_{\bm{\upsigma}}(\widehat{X}_i,\widehat{Y}_i,Z_i)}{\widehat{p}_{\bm{\upsigma}}(\widehat{X}_i,\widehat{Y}_i,Z_i)+\widehat{q}_{\bm{\upsigma}}(\widehat{X}_i,\widehat{Y}_i,Z_i)}$\;          
										\vspace{1em}
										\textcolor{blue}{\textit{Employing leave-one out entropy estimator \citep{hall1993estimation}}}\;
										$\mathcal{L}_{\bm{\upsigma}}^{\text{iter}} = -\frac{1}{N}\sum_{i=1}^N \log(\widehat{p}_{\bm{\upsigma},i}(X_i,Y_i,Z_i))-\frac{1}{N}\sum_{i=1}^N \log(\widehat{q}_{i,\bm{\upsigma}}(\widehat{X}_i,\widehat{Y}_i,\widehat{Z}_i)
										)$\;
										\vspace{1em}
										$\left[\begin{matrix}
											\bm{\upbeta}^{\text{iter}+1}\\
											\bm{\uptheta}^{\text{iter}+1}\\
											\bm{\upsigma}^{\text{iter}+1}           
										\end{matrix}\right] = \left[\begin{matrix}
											\bm{\upbeta}^{\text{iter}}\\
											\bm{\uptheta}^{\text{iter}}\\
											\bm{\upsigma}^{\text{iter}}
										\end{matrix}\right] - \Lambda\left[\begin{matrix}
											\nabla_{\bm{\upbeta}}\mathcal{L}^{\text{iter}}\\
											\nabla_{\bm{\uptheta}}\mathcal{L}^{\text{iter}}\\
											\nabla_{\bm{\upsigma}}\mathcal{L}_{\bm{\upsigma}}^{\text{iter}}         
										\end{matrix}\right]$\;
										\vspace{1em}            
										$(\upsigma_x, \upsigma_y, \upsigma_z) = \bm{\upsigma}^{\text{iter}+1}$\;
										\vspace{1em}            
										\If{$\text{iter} > 0$}{
											$\text{gap} =  \mathcal{L}^{\text{iter}}-\mathcal{L}^{\text{iter-1}}$ \;
										}
										$\text{iter} = \text{iter}+1$
									}
									\caption{Contaminated Generative Adversarial Network (CONGAN)}\label{algorithm:ContGAN}
								\end{algorithm}
							\end{minipage}%
						}
					\end{center}
					
					\vspace{5em}
					\section{Tables}\label{Appendix:Tables}

					\begin{table}[H]\centering\caption{\label{tab:CES:Triang:Average}Do-intervention (CES supply function \citep{Sato1967})\\ \small Entries represent average (standard error) of estimates\\ obtained from $100$ samples of 10,000 observations each\\ 		
							$X=h_{3}\left(\zeta(Z), \zeta(\eta)\right)^{1}$; \hspace{1em} $Y=g_5(X,\bm\epsilon)^{2}$; \hspace{1em} $\upzeta(x)=1.5+1.5\tanh(0.15 x)$\\\vspace{1mm}					
							$[\upalpha_1,\upalpha_2,\upalpha_3,\uprho]=[6,0.5,0.5,0.5]$\\ $[\upbeta_1,\upbeta_2,\upbeta_3]=[2,1,-0.25]$,\hspace{1em} $[\upgamma_1,\upgamma_2,\upgamma_3]=[1,1,1]$}\vspace{1mm}
						\scaleobj{0.75}{\begin{threeparttable}	
								\begin{tabular}{|c|c|c|c|c|c|c|c|}
									\hline
									\multirow{2}{*}{\begin{tabular}[c]{@{}c@{}}$\overset{(a)}{\text{Quantile}}$\\  of  X\end{tabular}} & \multirow{2}{*}{\begin{tabular}[c]{@{}c@{}}\\ $\overset{(b)}{X}$\end{tabular}} & \multicolumn{2}{c|}{$\mathbb{E}\left[Y|X=x\right]$} & \multicolumn{4}{c|}{$\mathbb{E}\left[Y|\text{do}(X=x)\right]$}    \\ \cline{3-8} 
									&                    & $\overset{(c)}{\text{real}}$           & $\overset{(d)}{\text{synt}}$             & $\overset{(e)}{\text{real}}$  & $\overset{(f)}{\text{synt}}$    & $\overset{(g)}{\text{IN}}$ & $\overset{(h)}{\text{PM}}$ \\ \hline\multirow{2}{*}{5}                                                                                  & 2.79                                                                               & 10.45                         & 10.57$^{\dag}$                        & \textbf{9.55}                 & \textbf{10.06*}               & 6.98                                                                                                                                                                      & 6.96                                                                                                                                    \\ 
									\cline{2-8}
									& (0.066)                                                                            & (0.042)                       & (0.092)                       & (0.080)                       & (0.143)                       & (0.338)                                                                                                                                                                   & (0.276)                                                                                                                                 \\ 
									\hline
									\multirow{2}{*}{10}                                                                                 & 3.29                                                                               & 10.67                         & 10.80$^{\dag}$                          & \textbf{9.81}                 & \textbf{10.30*}               & 7.60                                                                                                                                                                      & 7.59                                                                                                                                    \\ 
									\cline{2-8}
									& (0.060)                                                                            & (0.046)                       & (0.081)                       & (0.086)                       & (0.128)                       & (0.288)                                                                                                                                                                   & (0.234)                                                                                                                                 \\ 
									\hline
									\multirow{2}{*}{15}                                                                                 & 3.56                                                                               & 10.81                         & 10.93$^{\dag}$                          & \textbf{9.95}                 & \textbf{10.43*}               & 7.98                                                                                                                                                                      & 7.97                                                                                                                                    \\ 
									\cline{2-8}
									& (0.062)                                                                            & (0.048)                       & (0.074)                       & (0.086)                       & (0.113)                       & (0.245)                                                                                                                                                                   & (0.212)                                                                                                                                 \\ 
									\hline
									\multirow{2}{*}{20}                                                                                 & 3.77                                                                               & 10.91                         & 11.02$^{\dag}$                          & \textbf{10.07}                & \textbf{10.54*}               & 8.27                                                                                                                                                                      & 8.25                                                                                                                                    \\ 
									\cline{2-8}
									& (0.056)                                                                            & (0.045)                       & (0.073)                       & (0.084)                       & (0.105)                       & (0.231)                                                                                                                                                                   & (0.198)                                                                                                                                 \\ 
									\hline
									\multirow{2}{*}{25}                                                                                 & 3.94                                                                               & 11.00                         & 11.11$^{\dag}$                          & \textbf{10.16}                & \textbf{10.62*}               & 8.52                                                                                                                                                                      & 8.51                                                                                                                                    \\ 
									\cline{2-8}
									& (0.053)                                                                            & (0.046)                       & (0.075)                       & (0.086)                       & (0.101)                       & (0.213)                                                                                                                                                                   & (0.188)                                                                                                                                 \\ 
									\hline
									\multirow{2}{*}{30}                                                                                 & 4.11                                                                               & 11.10                         & 11.20$^{\dag}$                          & \textbf{10.26}                & \textbf{10.71*}               & 8.79                                                                                                                                                                      & 8.77*                                                                                                                                   \\ 
									\cline{2-8}
									& (0.064)                                                                            & (0.051)                       & (0.080)                       & (0.096)                       & (0.100)                       & (0.201)                                                                                                                                                                   & (0.189)                                                                                                                                 \\ 
									\hline
									\multirow{2}{*}{35}                                                                                 & 4.37                                                                               & 11.24                         & 11.33$^{\dag}$                          & \textbf{10.42}                & \textbf{10.86*}               & 9.19                                                                                                                                                                      & 9.17*                                                                                                                                   \\ 
									\cline{2-8}
									& (0.097)                                                                            & (0.076)                       & (0.100)                       & (0.115)                       & (0.109)                       & (0.194)                                                                                                                                                                   & (0.228)                                                                                                                                 \\ 
									\hline
									\multirow{2}{*}{40}                                                                                 & 4.87                                                                               & 11.55                         & 11.62$^{\dag}$                          & \textbf{10.74}                & \textbf{11.16*}               & 10.00                                                                                                                                                                     & 9.98*                                                                                                                                   \\ 
									\cline{2-8}
									& (0.177)                                                                            & (0.125)                       & (0.160)                       & (0.169)                       & (0.145)                       & (0.273)                                                                                                                                                                   & (0.329)                                                                                                                                 \\ 
									\hline
									\multirow{2}{*}{45}                                                                                 & 5.24                                                                               & 11.79                         & 11.85$^{\dag}$                          & \textbf{10.99}                & \textbf{11.39*}               & 10.60                                                                                                                                                                     & 10.58*                                                                                                                                  \\ 
									\cline{2-8}
									& (0.144)                                                                            & (0.111)                       & (0.179)                       & (0.138)                       & (0.151)                       & (0.270)                                                                                                                                                                   & (0.260)                                                                                                                                 \\ 
									\hline
									\multirow{2}{*}{50}                                                                                 & 5.53                                                                               & 12.00                         & 12.04$^{\dag}$                          & \textbf{11.20}                & \textbf{11.58*}               & 11.12                                                                                                                                                                     & 11.08*                                                                                                                                  \\ 
									\cline{2-8}
									& (0.189)                                                                            & (0.150)                       & (0.221)                       & (0.190)                       & (0.184)                       & (0.350)                                                                                                                                                                   & (0.320)                                                                                                                                 \\ 
									\hline
									\multirow{2}{*}{55}                                                                                 & 6.04                                                                               & 12.39                         & 12.39$^{\dag}$                          & \textbf{11.58}                & \textbf{11.93*}               & 12.01                                                                                                                                                                     & 11.96                                                                                                                                   \\ 
									\cline{2-8}
									& (0.137)                                                                            & (0.125)                       & (0.226)                       & (0.174)                       & (0.194)                       & (0.271)                                                                                                                                                                   & (0.253)                                                                                                                                 \\ 
									\hline
									\multirow{2}{*}{60}                                                                                 & 6.44                                                                               & 12.71                         & 12.68$^{\dag}$                          & \textbf{11.89}                & \textbf{12.22*}               & 12.72                                                                                                                                                                     & 12.68                                                                                                                                   \\ 
									\cline{2-8}
									& (0.125)                                                                            & (0.127)                       & (0.257)                       & (0.166)                       & (0.218)                       & (0.247)                                                                                                                                                                   & (0.257)                                                                                                                                 \\ 
									\hline
									\multirow{2}{*}{65}                                                                                 & 6.86                                                                               & 13.07                         & 13.00$^{\dag}$                          & \textbf{12.23}                & \textbf{12.54*}               & 13.48                                                                                                                                                                     & 13.45                                                                                                                                   \\ 
									\cline{2-8}
									& (0.127)                                                                            & (0.135)                       & (0.315)                       & (0.181)                       & (0.264)                       & (0.282)                                                                                                                                                                   & (0.277)                                                                                                                                 \\ 
									\hline
									\multirow{2}{*}{70}                                                                                 & 7.37                                                                               & 13.56                         & 13.43$^{\dag}$                         & \textbf{12.69}                & \textbf{12.97*}               & 14.43                                                                                                                                                                     & 14.41                                                                                                                                   \\ 
									\cline{2-8}
									& (0.158)                                                                            & (0.176)                       & (0.403)                       & (0.227)                       & (0.335)                       & (0.346)                                                                                                                                                                   & (0.334)                                                                                                                                 \\ 
									\hline
									\multirow{2}{*}{75}                                                                                 & 7.98                                                                               & 14.20                         & 13.96$^{\dag}$                          & \textbf{13.25}                & \textbf{13.49*}               & 15.58                                                                                                                                                                     & 15.59                                                                                                                                   \\ 
									\cline{2-8}
									& (0.153)                                                                            & (0.198)                       & (0.502)                       & (0.260)                       & (0.425)                       & (0.385)                                                                                                                                                                   & (0.341)                                                                                                                                 \\ 
									\hline
									\multirow{2}{*}{80}                                                                                 & 8.67                                                                               & 15.02                         & 14.62$^{\dag}$                          & \textbf{13.94}                & \textbf{14.13*}               & 16.89                                                                                                                                                                     & 16.98                                                                                                                                   \\ 
									\cline{2-8}
									& (0.197)                                                                            & (0.277)                       & (0.650)                       & (0.344)                       & (0.553)                       & (0.501)                                                                                                                                                                   & (0.448)                                                                                                                                 \\ 
									\hline
									\multirow{2}{*}{85}                                                                                 & 9.44                                                                               & 16.03                         & 15.40$^{\dag}$                          & \textbf{14.74}                & \textbf{14.88}                & 18.37                                                                                                                                                                     & 18.55                                                                                                                                   \\ 
									\cline{2-8}
									& (0.294)                                                                            & (0.428)                       & (0.868)                       & (0.497)                       & (0.744)                       & (0.714)                                                                                                                                                                   & (0.643)                                                                                                                                 \\ 
									\hline
									\multirow{2}{*}{90}                                                                                 & 10.35                                                                              & 17.41                         & 16.37$^{\dag}$                          & \textbf{15.73}                & \textbf{15.80}                & 20.13                                                                                                                                                                     & 20.43                                                                                                                                   \\ 
									\cline{2-8}
									& (0.438)                                                                            & (0.701)                       & (1.181)                       & (0.728)                       & (1.010)                       & (1.137)                                                                                                                                                                   & (0.960)                                                                                                                                 \\ 
									\hline
									\multirow{2}{*}{95}                                                                                 & 11.83                                                                              & 20.08                         & 17.93$^{\dag}$                          & \textbf{17.34}                & \textbf{17.26}                & 23.18                                                                                                                                                                     & 23.76                                                                                                                                   \\ 
									\cline{2-8}
									& (0.606)                                                                            & (1.111)                       & (1.642)                       & (1.044)                       & (1.379)                       & (1.746)                                                                                                                                                                   & (1.481)                                                                                                                                 \\
									\hline
								\end{tabular}
								\begin{tablenotes}
									\item[(*)]\small No significant difference when comparing the  empirical distributions of $((f),(g),(h))$ and benchmark $(e)$ sequences of estimates in quantiles $[\Lambda,1-\Lambda]$ with $\Lambda=0.05$ at significance level of $95\%$ (Sliced Wasserstein Distance).
									\item[($\dag$)]  No significant difference when comparing the empirical distributions of $(d)$ and benchmark $(c)$ sequences of estimates in quantiles $[\Lambda,1-\Lambda]$ with $\Lambda=0.05$  at significance level of $95\%$ (Sliced Wasserstein Distance).				
									\item[1] $\scaleobj{0.9}{X=h_3(\zeta(Z), \zeta(\eta)) = \upalpha_1\left[\upalpha_2 \zeta(Z)^{-\uprho}+\upalpha_3 \zeta(\eta)^{-\uprho}\right]^{-\upalpha_4/\uprho}.  }$
									\item[2] $\scaleobj{0.9}{Y=g_5(X,\bm\epsilon)  =  \upbeta_1 X + \upbeta_2\epsilon +\upbeta_3 X\epsilon.}$	
									\item[(c)] $\scaleobj{0.9}{\widehat{\mathbb{E}}\left[Y_{\text{real}}^{\text{O}}|X_{\text{real}}^{\text{O}}=x\right]}$ is the empirical variant of $\mathds{E}\left[Y|X=x\right]$ in the real data.				
									\item[(d)] $\scaleobj{0.9}{\widehat{\mathbb{E}}\left[Y_{\text{synt}}^{\text{O}}|X_{\text{synt}}^{\text{O}}=x\right]}$ is the empirical variant of $\mathds{E}\left[Y|X=x\right]$ in the synthetic data.
									\item[(e)] $\scaleobj{0.9}{\widehat{\mathbb{E}}\left[Y_{\text{real}}^{\text{CF}}|X_{\text{real}}^{\text{CF}}=x\right]}$ is the empirical variant of $\mathds{E}\left[Y|\text{do}(X=x)\right]$ in the real data.
									\item[(f)] $\scaleobj{0.9}{\widehat{\mathbb{E}}\left[Y_{\text{synt}}^{\text{CF}}|X_{\text{synt}}^{\text{CF}}=x\right]}$ is the empirical variant of $\mathds{E}\left[Y|\text{do}(X=x)\right]$ in the synthetic data.		
									\item [(g)] Control variable estimator (IN)  \citep{imbens2009identification}.
									\item [(h)] Partial Means estimator (PM) \citep{newey1994kernel}.
								\end{tablenotes}
						\end{threeparttable}        }
					\end{table}
					
					\begin{table}[H]\centering\caption{\label{tab:tranglog:Average}Do-intervention (Translog utility \citep{christensen1975})\\ \small Entries represent average (standard error) of estimates\\ obtained from $100$ samples of 10,000 observations each\\ 
							$X=h_{1}\left(\zeta(Z), \zeta(\eta)\right)^{1}$; \hspace{1em} $Y=g_5(X,\bm\epsilon)^{2}$; \hspace{1em}
							$\upzeta(x)=1.5+1.5\tanh(0.15 x)$\\\vspace{1mm}		
							$[\upalpha_0,\upalpha_1,\upalpha_2,\upalpha_3,\upalpha_4,\upalpha_5]=[0,5,10,0,-26.25,3.25]$\\ $[\upbeta_1,\upbeta_2,\upbeta_3]=[8,-1,6]$,\hspace{1em} $[\upgamma_1,\upgamma_2,\upgamma_3]=[1,1,-3]$}\vspace{1mm}
						\scaleobj{0.80}{\begin{threeparttable}	
								\begin{tabular}{|c|c|c|c|c|c|c|c|}
									\hline
									\multirow{2}{*}{\begin{tabular}[c]{@{}c@{}}$\overset{(a)}{\text{Quantile}}$\\  of  X\end{tabular}} & \multirow{2}{*}{\begin{tabular}[c]{@{}c@{}}\\ $\overset{(b)}{X}$\end{tabular}} & \multicolumn{2}{c|}{$\mathbb{E}\left[Y|X=x\right]$} & \multicolumn{4}{c|}{$\mathbb{E}\left[Y|\text{do}(X=x)\right]$}    \\ \cline{3-8} 
									&                    & $\overset{(c)}{\text{real}}$           & $\overset{(d)}{\text{synt}}$             & $\overset{(e)}{\text{real}}$  & $\overset{(f)}{\text{synt}}$    & $\overset{(g)}{\text{IN}}$ & $\overset{(h)}{\text{PM}}$ \\ \hline\multirow{2}{*}{5}                                                                                  & 1.06                                                                               & 15.92                         & 16.29$^{\dag}$                          & \textbf{14.32}                & \textbf{14.64*}               & 10.17                                                                                                                                                                     & 8.55                                                                                                                                    \\ 
									\cline{2-8}
									& (0.050)                                                                            & (0.393)                       & (0.589)                       & (0.272)                       & (0.543)                       & (1.809)                                                                                                                                                                   & (1.234)                                                                                                                                 \\ 
									\hline
									\multirow{2}{*}{10}                                                                                 & 1.40                                                                               & 17.66                         & 17.81$^{\dag}$                          & \textbf{15.31}                & \textbf{15.49*}               & 15.05                                                                                                                                                                     & 11.66                                                                                                                                   \\ 
									\cline{2-8}
									& (0.030)                                                                            & (0.229)                       & (0.481)                       & (0.207)                       & (0.460)                       & (1.365)                                                                                                                                                                   & (0.976)                                                                                                                                 \\ 
									\hline
									\multirow{2}{*}{15}                                                                                 & 1.61                                                                               & 18.50                         & 18.55$^{\dag}$                          & \textbf{15.88}                & \textbf{15.97*}               & 18.39                                                                                                                                                                     & 13.53                                                                                                                                   \\ 
									\cline{2-8}
									& (0.028)                                                                            & (0.185)                       & (0.458)                       & (0.190)                       & (0.439)                       & (1.234)                                                                                                                                                                   & (0.959)                                                                                                                                 \\ 
									\hline
									\multirow{2}{*}{20}                                                                                 & 1.75                                                                               & 19.00                         & 19.00$^{\dag}$                          & \textbf{16.27}                & \textbf{16.30*}               & 20.95                                                                                                                                                                     & 14.71                                                                                                                                   \\ 
									\cline{2-8}
									& (0.023)                                                                            & (0.168)                       & (0.432)                       & (0.179)                       & (0.420)                       & (1.087)                                                                                                                                                                   & (0.857)                                                                                                                                 \\ 
									\hline
									\multirow{2}{*}{25}                                                                                 & 1.88                                                                               & 19.41                         & 19.37$^{\dag}$                          & \textbf{16.63}                & \textbf{16.61*}               & 22.62                                                                                                                                                                     & 15.68                                                                                                                                   \\ 
									\cline{2-8}
									& (0.027)                                                                            & (0.168)                       & (0.417)                       & (0.175)                       & (0.407)                       & (0.963)                                                                                                                                                                   & (0.803)                                                                                                                                 \\ 
									\hline
									\multirow{2}{*}{30}                                                                                 & 1.99                                                                               & 19.71                         & 19.64$^{\dag}$                          & \textbf{16.93}                & \textbf{16.86*}               & 23.22                                                                                                                                                                     & 16.43                                                                                                                                   \\ 
									\cline{2-8}
									& (0.030)                                                                            & (0.165)                       & (0.406)                       & (0.167)                       & (0.395)                       & (0.880)                                                                                                                                                                   & (0.823)                                                                                                                                 \\ 
									\hline
									\multirow{2}{*}{35}                                                                                 & 2.05                                                                               & 19.86                         & 19.78$^{\dag}$                          & \textbf{17.09}                & \textbf{17.00*}               & 23.28                                                                                                                                                                     & 16.81*                                                                                                                                  \\ 
									\cline{2-8}
									& (0.017)                                                                            & (0.133)                       & (0.402)                       & (0.157)                       & (0.392)                       & (0.880)                                                                                                                                                                   & (0.835)                                                                                                                                 \\ 
									\hline
									\multirow{2}{*}{40}                                                                                 & 2.07                                                                               & 19.93                         & 19.84$^{\dag}$                          & \textbf{17.18}                & \textbf{17.07*}               & 23.24                                                                                                                                                                     & 16.99*                                                                                                                                  \\ 
									\cline{2-8}
									& (0.017)                                                                            & (0.124)                       & (0.402)                       & (0.162)                       & (0.396)                       & (0.893)                                                                                                                                                                   & (0.845)                                                                                                                                 \\ 
									\hline
									\multirow{2}{*}{45}                                                                                 & 2.11                                                                               & 20.01                         & 19.91$^{\dag}$                          & \textbf{17.27}                & \textbf{17.15*}               & 23.16                                                                                                                                                                     & 17.18*                                                                                                                                  \\ 
									\cline{2-8}
									& (0.018)                                                                            & (0.121)                       & (0.399)                       & (0.163)                       & (0.399)                       & (0.916)                                                                                                                                                                   & (0.852)                                                                                                                                 \\ 
									\hline
									\multirow{2}{*}{50}                                                                                 & 2.16                                                                               & 20.13                         & 20.02$^{\dag}$                          & \textbf{17.42}                & \textbf{17.28*}               & 22.99                                                                                                                                                                     & 17.48*                                                                                                                                  \\ 
									\cline{2-8}
									& (0.026)                                                                            & (0.118)                       & (0.406)                       & (0.177)                       & (0.418)                       & (0.946)                                                                                                                                                                   & (0.835)                                                                                                                                 \\ 
									\hline
									\multirow{2}{*}{55}                                                                                 & 2.25                                                                               & 20.32                         & 20.20$^{\dag}$                          & \textbf{17.68}                & \textbf{17.50*}               & 22.58                                                                                                                                                                     & 18.04*                                                                                                                                  \\ 
									\cline{2-8}
									& (0.039)                                                                            & (0.135)                       & (0.419)                       & (0.202)                       & (0.444)                       & (0.938)                                                                                                                                                                   & (0.793)                                                                                                                                 \\ 
									\hline
									\multirow{2}{*}{60}                                                                                 & 2.37                                                                               & 20.59                         & 20.44$^{\dag}$                          & \textbf{18.08}                & \textbf{17.84*}               & 22.05                                                                                                                                                                     & 18.85                                                                                                                                   \\ 
									\cline{2-8}
									& (0.028)                                                                            & (0.127)                       & (0.395)                       & (0.196)                       & (0.424)                       & (0.898)                                                                                                                                                                   & (0.728)                                                                                                                                 \\ 
									\hline
									\multirow{2}{*}{65}                                                                                 & 2.50                                                                               & 20.83                         & 20.66$^{\dag}$                          & \textbf{18.49}                & \textbf{18.19*}               & 21.82                                                                                                                                                                     & 19.68                                                                                                                                   \\ 
									\cline{2-8}
									& (0.031)                                                                            & (0.120)                       & (0.382)                       & (0.203)                       & (0.419)                       & (0.865)                                                                                                                                                                   & (0.696)                                                                                                                                 \\ 
									\hline
									\multirow{2}{*}{70}                                                                                 & 2.66                                                                               & 21.11                         & 20.93$^{\dag}$                          & \textbf{19.07}                & \textbf{18.67*}               & 22.01                                                                                                                                                                     & 20.84                                                                                                                                   \\ 
									\cline{2-8}
									& (0.032)                                                                            & (0.123)                       & (0.370)                       & (0.232)                       & (0.440)                       & (0.830)                                                                                                                                                                   & (0.701)                                                                                                                                 \\ 
									\hline
									\multirow{2}{*}{75}                                                                                 & 2.82                                                                               & 21.37                         & 21.17$^{\dag}$                          & \textbf{19.68}                & \textbf{19.18*}               & 22.61                                                                                                                                                                     & 21.99                                                                                                                                   \\ 
									\cline{2-8}
									& (0.041)                                                                            & (0.119)                       & (0.360)                       & (0.253)                       & (0.440)                       & (0.776)                                                                                                                                                                   & (0.666)                                                                                                                                 \\ 
									\hline
									\multirow{2}{*}{80}                                                                                 & 3.07                                                                               & 21.81                         & 21.58$^{\dag}$                          & \textbf{20.79}                & \textbf{20.10*}               & 24.04                                                                                                                                                                     & 23.83                                                                                                                                   \\ 
									\cline{2-8}
									& (0.052)                                                                            & (0.134)                       & (0.392)                       & (0.343)                       & (0.584)                       & (0.747)                                                                                                                                                                   & (0.750)                                                                                                                                 \\ 
									\hline
									\multirow{2}{*}{85}                                                                                 & 3.34                                                                               & 22.33                         & 22.07$^{\dag}$                          & \textbf{22.16}                & \textbf{21.20}                & 25.70                                                                                                                                                                     & 25.59                                                                                                                                   \\ 
									\cline{2-8}
									& (0.064)                                                                            & (0.165)                       & (0.374)                       & (0.468)                       & (0.608)                       & (0.775)                                                                                                                                                                   & (0.782)                                                                                                                                 \\ 
									\hline
									\multirow{2}{*}{90}                                                                                 & 3.79                                                                               & 23.58                         & 23.24$^{\dag}$                          & \textbf{25.03}                & \textbf{23.41}                & 28.79                                                                                                                                                                     & 28.73                                                                                                                                   \\ 
									\cline{2-8}
									& (0.066)                                                                            & (0.257)                       & (0.584)                       & (0.578)                       & (0.971)                       & (0.903)                                                                                                                                                                   & (0.862)                                                                                                                                 \\ 
									\hline
									\multirow{2}{*}{95}                                                                                 & 4.42                                                                               & 26.48                         & 25.55$^{\dag}$                          & \textbf{29.77}                & \textbf{26.55}                & 33.16                                                                                                                                                                     & 33.21                                                                                                                                   \\ 
									\cline{2-8}
									& (0.107)                                                                            & (0.670)                       & (1.090)                       & (1.021)                       & (1.531)                       & (1.312)                                                                                                                                                                   & (1.342)                                                                                                                                 \\
									\hline
								\end{tabular}%
								\begin{tablenotes}
									\item[(*)]\small No significant difference when comparing the  empirical distributions of $((f),(g),(h))$ and benchmark $(e)$ sequences of estimates in quantiles $[\Lambda,1-\Lambda]$ with $\Lambda=0.05$  at significance level of $95\%$ (Sliced Wasserstein Distance).
									\item[($\dag$)]  No significant difference when comparing the empirical distributions of $(d)$ and benchmark $(c)$ sequences of estimates in quantiles $[\Lambda,1-\Lambda]$ with $\Lambda=0.05$  at significance level of $95\%$ (Sliced Wasserstein Distance).				
									\item[1] $\scaleobj{0.9}{X=h_1(\zeta(Z), \zeta(\eta)) =  \upalpha_0 + \upalpha_1 \log(\zeta(Z))  + \upalpha_2\log(\zeta(\eta) ) + \upalpha_3\log^2(\zeta(Z))+ \upalpha_4\log^2(\zeta(\eta))}$
									\[ \scaleobj{0.9}{+  \upalpha_5\log(\zeta(Z))\log(\zeta(\eta)).}\]\vspace{-1.5em}
									\item[2] $\scaleobj{0.9}{Y=g_5(X,\bm\epsilon)  =  \upbeta_1 X + \upbeta_2\epsilon +\upbeta_3 X\epsilon.}$	
									\item[(c)] $\scaleobj{0.9}{\widehat{\mathbb{E}}\left[Y_{\text{real}}^{\text{O}}|X_{\text{real}}^{\text{O}}=x\right]}$ is the empirical variant of $\mathds{E}\left[Y|X=x\right]$ in the real data.								
									\item[(d)] $\scaleobj{0.9}{\widehat{\mathbb{E}}\left[Y_{\text{synt}}^{\text{O}}|X_{\text{synt}}^{\text{O}}=x\right]}$ is the empirical variant of $\mathds{E}\left[Y|X=x\right]$ in the synthetic data.
									\item[(e)] $\scaleobj{0.9}{\widehat{\mathbb{E}}\left[Y_{\text{real}}^{\text{CF}}|X_{\text{real}}^{\text{CF}}=x\right]}$ is the empirical variant of $\mathds{E}\left[Y|\text{do}(X=x)\right]$ in the real data.
									\item[(f)] $\scaleobj{0.9}{\widehat{\mathbb{E}}\left[Y_{\text{synt}}^{\text{CF}}|X_{\text{synt}}^{\text{CF}}=x\right]}$ is the empirical variant of $\mathds{E}\left[Y|\text{do}(X=x)\right]$ in the synthetic data.		
									\item [(g)] Control variable estimator (IN)  \citep{imbens2009identification}.
									\item [(h)] Partial Means estimator (PM) \citep{newey1994kernel}.						
								\end{tablenotes}
						\end{threeparttable}        }
					\end{table}

					\begin{table}[H]\centering\caption{\label{tab:tanhtanh:Average}Do-intervention ($\tanh$-based non-monotonic function)\\ \small Entries represent average (standard error) of estimates\\ obtained from $100$ samples of 10,000 observations each\\ 
							$X=h_{6}\left(Z,\eta\right)^{1}$; \hspace{1em} $Y=g_6(X,\bm\epsilon)^{2}$\\\vspace{1mm}		
							$[\upalpha_1,\upalpha_2,\upalpha_3,\upalpha_4,\upalpha_5]=[6,0.25,-0.5,5,0.5]$\\ $[\upbeta_1,\upbeta_2,\upbeta_3,\upbeta_4,\upbeta_5]=[6,0.25,-0.5,10,0.5]$,\hspace{1em} $[\upgamma_1,\upgamma_2,\upgamma_3]=[1,1,-3]$}\vspace{1mm}
						\scaleobj{0.8}{\begin{threeparttable}	
								\begin{tabular}{|c|c|c|c|c|c|c|c|}
									\hline
									\multirow{2}{*}{\begin{tabular}[c]{@{}c@{}}$\overset{(a)}{\text{Quantile}}$\\  of  X\end{tabular}} & \multirow{2}{*}{\begin{tabular}[c]{@{}c@{}}\\ $\overset{(b)}{X}$\end{tabular}} & \multicolumn{2}{c|}{$\mathbb{E}\left[Y|X=x\right]$} & \multicolumn{4}{c|}{$\mathbb{E}\left[Y|\text{do}(X=x)\right]$}    \\ \cline{3-8} 
									&                    & $\overset{(c)}{\text{real}}$           & $\overset{(d)}{\text{synt}}$             & $\overset{(e)}{\text{real}}$  & $\overset{(f)}{\text{synt}}$    & $\overset{(g)}{\text{IN}}$ & $\overset{(h)}{\text{PM}}$ \\ \hline\multirow{2}{*}{5}                                                                                   & -        2.02                                                                      & -        1.05                 & -        0.95$^{\dag}$                 & \textbf{- 0.36}               & \textbf{- 0.31*}              & -        1.35                                                                                                                                                             & -        1.11                                                                                                                           \\ 
									\cline{2-8}
									& (0.034)                                                                            & (0.042)                       & (0.064)                       & (0.042)                       & (0.116)                       & (0.748)                                                                                                                                                                   & (0.261)                                                                                                                                 \\ 
									\hline
									\multirow{2}{*}{10}                                                                                  & -        1.63                                                                      & -        0.85                 & -        0.75$^{\dag}$                 & \textbf{- 0.25}               & \textbf{- 0.22*}              & -        1.02                                                                                                                                                             & -        0.88                                                                                                                           \\ 
									\cline{2-8}
									& (0.028)                                                                            & (0.043)                       & (0.065)                       & (0.040)                       & (0.107)                       & (0.538)                                                                                                                                                                   & (0.242)                                                                                                                                 \\ 
									\hline
									\multirow{2}{*}{15}                                                                                  & -        1.34                                                                      & -        0.67                 & -        0.59$^{\dag}$                  & \textbf{- 0.17}               & \textbf{- 0.14*}              & -        0.56                                                                                                                                                             & -        0.70                                                                                                                           \\ 
									\cline{2-8}
									& (0.028)                                                                            & (0.044)                       & (0.065)                       & (0.038)                       & (0.098)                       & (0.387)                                                                                                                                                                   & (0.212)                                                                                                                                 \\ 
									\hline
									\multirow{2}{*}{20}                                                                                  & -        1.10                                                                      & -        0.53                 & -        0.46$^{\dag}$                  & \textbf{- 0.10}               & \textbf{- 0.08*}              & -        0.18                                                                                                                                                             & -        0.56                                                                                                                           \\ 
									\cline{2-8}
									& (0.025)                                                                            & (0.042)                       & (0.062)                       & (0.037)                       & (0.089)                       & (0.291)                                                                                                                                                                   & (0.189)                                                                                                                                 \\ 
									\hline
									\multirow{2}{*}{25}                                                                                  & -        0.87                                                                      & -        0.38                 & -        0.32$^{\dag}$                  & \textbf{- 0.04}               & \textbf{- 0.01*}              & 0.05                                                                                                                                                                      & -        0.40                                                                                                                           \\ 
									\cline{2-8}
									& (0.026)                                                                            & (0.045)                       & (0.065)                       & (0.036)                       & (0.086)                       & (0.233)                                                                                                                                                                   & (0.183)                                                                                                                                 \\ 
									\hline
									\multirow{2}{*}{30}                                                                                  & -        0.68                                                                      & -        0.26                 & -        0.20$^{\dag}$                  & \textbf{0.02}                 & \textbf{0.04*}                & 0.14                                                                                                                                                                      & -        0.27                                                                                                                           \\ 
									\cline{2-8}
									& (0.028)                                                                            & (0.045)                       & (0.066)                       & (0.037)                       & (0.082)                       & (0.221)                                                                                                                                                                   & (0.186)                                                                                                                                 \\ 
									\hline
									\multirow{2}{*}{35}                                                                                  & -        0.49                                                                      & -        0.13                 & -        0.08$^{\dag}$                  & \textbf{0.07}                 & \textbf{0.09*}                & 0.18                                                                                                                                                                      & -        0.13                                                                                                                           \\ 
									\cline{2-8}
									& (0.035)                                                                            & (0.046)                       & (0.065)                       & (0.037)                       & (0.078)                       & (0.222)                                                                                                                                                                   & (0.186)                                                                                                                                 \\ 
									\hline
									\multirow{2}{*}{40}                                                                                  & -        0.34                                                                      & -        0.03                 & 0.01$^{\dag}$                           & \textbf{0.12}                 & \textbf{0.14*}                & 0.19                                                                                                                                                                      & -        0.03                                                                                                                           \\ 
									\cline{2-8}
									& (0.029)                                                                            & (0.043)                       & (0.064)                       & (0.039)                       & (0.070)                       & (0.220)                                                                                                                                                                   & (0.181)                                                                                                                                 \\ 
									\hline
									\multirow{2}{*}{45}                                                                                  & -        0.15                                                                      & 0.10                          & 0.14$^{\dag}$                           & \textbf{0.17}                 & \textbf{0.19*}                & 0.22                                                                                                                                                                      & 0.10*                                                                                                                                   \\ 
									\cline{2-8}
									& (0.040)                                                                            & (0.052)                       & (0.071)                       & (0.039)                       & (0.072)                       & (0.215)                                                                                                                                                                   & (0.183)                                                                                                                                 \\ 
									\hline
									\multirow{2}{*}{50}                                                                                  & 0.03                                                                               & 0.23                          & 0.26$^{\dag}$                           & \textbf{0.23}                 & \textbf{0.25*}                & 0.25                                                                                                                                                                      & 0.24*                                                                                                                                  \\ 
									\cline{2-8}
									& (0.028)                                                                            & (0.043)                       & (0.068)                       & (0.039)                       & (0.067)                       & (0.217)                                                                                                                                                                   & (0.180)                                                                                                                                 \\ 
									\hline
									\multirow{2}{*}{55}                                                                                  & 0.12                                                                               & 0.29                          & 0.31$^{\dag}$                           & \textbf{0.26}                 & \textbf{0.27*}                & 0.26                                                                                                                                                                      & 0.30*                                                                                                                                   \\ 
									\cline{2-8}
									& (0.035)                                                                            & (0.044)                       & (0.065)                       & (0.041)                       & (0.062)                       & (0.218)                                                                                                                                                                   & (0.180)                                                                                                                                 \\ 
									\hline
									\multirow{2}{*}{60}                                                                                  & 0.32                                                                               & 0.43                          & 0.44$^{\dag}$                           & \textbf{0.32}                 & \textbf{0.33*}                & 0.31                                                                                                                                                                      & 0.45*                                                                                                                                   \\ 
									\cline{2-8}
									& (0.034)                                                                            & (0.045)                       & (0.070)                       & (0.038)                       & (0.064)                       & (0.222)                                                                                                                                                                   & (0.183)                                                                                                                                 \\ 
									\hline
									\multirow{2}{*}{65}                                                                                  & 0.52                                                                               & 0.56                          & 0.57$^{\dag}$                           & \textbf{0.37}                 & \textbf{0.39*}                & 0.34                                                                                                                                                                      & 0.60                                                                                                                                    \\ 
									\cline{2-8}
									& (0.027)                                                                            & (0.042)                       & (0.067)                       & (0.039)                       & (0.063)                       & (0.225)                                                                                                                                                                   & (0.181)                                                                                                                                 \\ 
									\hline
									\multirow{2}{*}{70}                                                                                  & 0.70                                                                               & 0.69                          & 0.68$^{\dag}$                           & \textbf{0.43}                 & \textbf{0.44*}                & 0.40                                                                                                                                                                      & 0.73                                                                                                                                    \\ 
									\cline{2-8}
									& (0.030)                                                                            & (0.040)                       & (0.067)                       & (0.039)                       & (0.059)                       & (0.237)                                                                                                                                                                   & (0.183)                                                                                                                                 \\ 
									\hline
									\multirow{2}{*}{75}                                                                                  & 0.90                                                                               & 0.82                          & 0.80$^{\dag}$                           & \textbf{0.49}                 & \textbf{0.49*}                & 0.50                                                                                                                                                                      & 0.88                                                                                                                                    \\ 
									\cline{2-8}
									& (0.024)                                                                            & (0.040)                       & (0.066)                       & (0.040)                       & (0.061)                       & (0.268)                                                                                                                                                                   & (0.187)                                                                                                                                 \\ 
									\hline
									\multirow{2}{*}{80}                                                                                  & 1.09                                                                               & 0.94                          & 0.92$^{\dag}$                           & \textbf{0.54}                 & \textbf{0.55*}                & 0.69                                                                                                                                                                      & 1.01                                                                                                                                    \\ 
									\cline{2-8}
									& (0.026)                                                                            & (0.040)                       & (0.068)                       & (0.040)                       & (0.062)                       & (0.341)                                                                                                                                                                   & (0.196)                                                                                                                                 \\ 
									\hline
									\multirow{2}{*}{85}                                                                                  & 1.33                                                                               & 1.09                          & 1.05$^{\dag}$                           & \textbf{0.61}                 & \textbf{0.61*}                & 1.04                                                                                                                                                                      & 1.16                                                                                                                                    \\ 
									\cline{2-8}
									& (0.027)                                                                            & (0.041)                       & (0.071)                       & (0.042)                       & (0.066)                       & (0.454)                                                                                                                                                                   & (0.216)                                                                                                                                 \\ 
									\hline
									\multirow{2}{*}{90}                                                                                  & 1.63                                                                               & 1.26                          & 1.21$^{\dag}$                           & \textbf{0.70}                 & \textbf{0.69*}                & 1.47                                                                                                                                                                      & 1.35                                                                                                                                    \\ 
									\cline{2-8}
									& (0.030)                                                                            & (0.041)                       & (0.072)                       & (0.042)                       & (0.069)                       & (0.549)                                                                                                                                                                   & (0.255)                                                                                                                                 \\ 
									\hline
									\multirow{2}{*}{95}                                                                                  & 2.02                                                                               & 1.47                          & 1.40$^{\dag}$                           & \textbf{0.81}                 & \textbf{0.79*}                & 1.84                                                                                                                                                                      & 1.64                                                                                                                                    \\ 
									\cline{2-8}
									& (0.029)                                                                            & (0.045)                       & (0.073)                       & (0.045)                       & (0.076)                       & (0.671)                                                                                                                                                                   & (0.300)                                                                                                                                 \\
									\hline
								\end{tabular}%
								\begin{tablenotes}
									\item[(*)]\small No significant difference when comparing the  empirical distributions of $((f),(g),(h))$ and benchmark $(e)$ sequences of estimates in quantiles $[\Lambda,1-\Lambda]$ with $\Lambda=0.05$  at significance level of $95\%$ (Sliced Wasserstein Distance).
									\item[($\dag$)]  No significant difference when comparing the empirical distributions of $(d)$ and benchmark $(c)$ sequences of estimates in quantiles $[\Lambda,1-\Lambda]$ with $\Lambda=0.05$  at significance level of $95\%$ (Sliced Wasserstein Distance).						
									\item[1] $\scaleobj{0.9}{X=h_6(Z,\eta) =  \upalpha_1\tanh(\upalpha_2 Z+\upalpha_3 \upeta)+\upalpha_4\tanh(\upalpha_5 \upeta).}$
									\item[2] $\scaleobj{0.9}{Y=g_6(X,\bm\epsilon)  =  \upbeta_1\tanh(\upbeta_2 Z+\upbeta_3 \epsilon)+\upbeta_4\tanh(\upbeta_5 \epsilon). }$	
									\item[(c)] $\scaleobj{0.9}{\widehat{\mathbb{E}}\left[Y_{\text{real}}^{\text{O}}|X_{\text{real}}^{\text{O}}=x\right]}$ is the empirical variant of $\mathds{E}\left[Y|X=x\right]$ in the real data.				
									
									\item[(d)] $\scaleobj{0.9}{\widehat{\mathbb{E}}\left[Y_{\text{synt}}^{\text{O}}|X_{\text{synt}}^{\text{O}}=x\right]}$ is the empirical variant of $\mathds{E}\left[Y|X=x\right]$ in the synthetic data.
									\item[(e)] $\scaleobj{0.9}{\widehat{\mathbb{E}}\left[Y_{\text{real}}^{\text{CF}}|X_{\text{real}}^{\text{CF}}=x\right]}$ is the empirical variant of $\mathds{E}\left[Y|\text{do}(X=x)\right]$ in the real data.
									\item[(f)] $\scaleobj{0.9}{\widehat{\mathbb{E}}\left[Y_{\text{synt}}^{\text{CF}}|X_{\text{synt}}^{\text{CF}}=x\right]}$ is the empirical variant of $\mathds{E}\left[Y|\text{do}(X=x)\right]$ in the synthetic data.	
									\item [(g)] Control variable estimator (IN)  \citep{imbens2009identification}.
									\item [(h)] Partial Means estimator (PM) \citep{newey1994kernel}.	
								\end{tablenotes}
						\end{threeparttable}        }
					\end{table}

					\begin{table}[H]\centering\caption{\label{tab:AIDS:Average}Do-intervention (AIDS function \citep{deaton1980almost})\\ \small Entries represent average (standard error) of estimates\\ obtained from $100$ samples of 10,000 observations each\\ $X=h_{2}\left(\zeta(Z), \zeta(\eta)\right)^{1}$; \hspace{1em} $Y=g_5(X,\bm\epsilon)^{2}$; \hspace{1em} $\upzeta(x)=1.5+1.5\tanh(0.15 x)$\\\vspace{1mm}
							$[\upalpha_0,\upalpha_1,\upalpha_2,\upalpha_3,\upalpha_4,\upalpha_5,\upalpha_6,\uprho]=[0,5,10,0,-26.25,3.25,-0.15,0.5]$\\ $[\upbeta_1,\upbeta_2,\upbeta_3]=[8,-1,6]$,\hspace{1em} $[\upgamma_1,\upgamma_2,\upgamma_3]=[1,1,1]$}\vspace{1mm}
						\scaleobj{0.8}{\begin{threeparttable}	
								\begin{tabular}{|c|c|c|c|c|c|c|c|}
									\hline
									\multirow{2}{*}{\begin{tabular}[c]{@{}c@{}}$\overset{(a)}{\text{Quantile}}$\\  of  X\end{tabular}} & \multirow{2}{*}{\begin{tabular}[c]{@{}c@{}}\\ $\overset{(b)}{X}$\end{tabular}} & \multicolumn{2}{c|}{$\mathbb{E}\left[Y|X=x\right]$} & \multicolumn{4}{c|}{$\mathbb{E}\left[Y|\text{do}(X=x)\right]$}    \\ \cline{3-8} 
									&                    & $\overset{(c)}{\text{real}}$           & $\overset{(d)}{\text{synt}}$             & $\overset{(e)}{\text{real}}$  & $\overset{(f)}{\text{synt}}$    & $\overset{(g)}{\text{IN}}$ & $\overset{(h)}{\text{PM}}$ \\ \hline\multirow{2}{*}{5}                                                                                  & 0.82                                                                               & 13.76                         & 14.34$^{\dag}$                          & \textbf{12.29}                & \textbf{12.56*}               & 8.17                                                                                                                                                                      & 6.59                                                                                                                                    \\ 
									\cline{2-8}
									& (0.041)                                                                            & (0.389)                       & (0.581)                       & (0.262)                       & (0.489)                       & (1.816)                                                                                                                                                                   & (1.252)                                                                                                                                 \\ 
									\hline
									\multirow{2}{*}{10}                                                                                 & 1.14                                                                               & 15.69                         & 16.00$^{\dag}$                          & \textbf{13.32}                & \textbf{13.60*}               & 12.45*                                                                                                                                                                    & 9.53                                                                                                                                    \\ 
									\cline{2-8}
									& (0.036)                                                                            & (0.254)                       & (0.515)                       & (0.225)                       & (0.436)                       & (1.416)                                                                                                                                                                   & (1.133)                                                                                                                                 \\ 
									\hline
									\multirow{2}{*}{15}                                                                                 & 1.37                                                                               & 16.73                         & 16.92$^{\dag}$                          & \textbf{13.98}                & \textbf{14.23*}               & 16.13                                                                                                                                                                     & 11.54                                                                                                                                   \\ 
									\cline{2-8}
									& (0.033)                                                                            & (0.210)                       & (0.469)                       & (0.206)                       & (0.407)                       & (1.175)                                                                                                                                                                   & (1.016)                                                                                                                                 \\ 
									\hline
									\multirow{2}{*}{20}                                                                                 & 1.51                                                                               & 17.30                         & 17.42$^{\dag}$                          & \textbf{14.40}                & \textbf{14.61*}               & 18.94                                                                                                                                                                     & 12.77                                                                                                                                   \\ 
									\cline{2-8}
									& (0.026)                                                                            & (0.183)                       & (0.435)                       & (0.189)                       & (0.382)                       & (0.911)                                                                                                                                                                   & (0.909)                                                                                                                                 \\ 
									\hline
									\multirow{2}{*}{25}                                                                                 & 1.64                                                                               & 17.72                         & 17.80$^{\dag}$                          & \textbf{14.75}                & \textbf{14.94*}               & 20.87                                                                                                                                                                     & 13.72                                                                                                                                   \\ 
									\cline{2-8}
									& (0.027)                                                                            & (0.182)                       & (0.403)                       & (0.176)                       & (0.352)                       & (0.819)                                                                                                                                                                   & (0.852)                                                                                                                                 \\ 
									\hline
									\multirow{2}{*}{30}                                                                                 & 1.74                                                                               & 18.04                         & 18.08$^{\dag}$                          & \textbf{15.05}                & \textbf{15.20*}               & 21.74                                                                                                                                                                     & 14.47*                                                                                                                                  \\ 
									\cline{2-8}
									& (0.022)                                                                            & (0.168)                       & (0.398)                       & (0.169)                       & (0.342)                       & (0.824)                                                                                                                                                                   & (0.802)                                                                                                                                 \\ 
									\hline
									\multirow{2}{*}{35}                                                                                 & 1.82                                                                               & 18.27                         & 18.29$^{\dag}$                          & \textbf{15.28}                & \textbf{15.41*}               & 21.95                                                                                                                                                                     & 15.03*                                                                                                                                  \\ 
									\cline{2-8}
									& (0.017)                                                                            & (0.155)                       & (0.396)                       & (0.165)                       & (0.339)                       & (0.911)                                                                                                                                                                   & (0.793)                                                                                                                                 \\ 
									\hline
									\multirow{2}{*}{40}                                                                                 & 1.85                                                                               & 18.37                         & 18.38$^{\dag}$                          & \textbf{15.39}                & \textbf{15.50*}               & 21.94                                                                                                                                                                     & 15.28*                                                                                                                                  \\ 
									\cline{2-8}
									& (0.010)                                                                            & (0.147)                       & (0.391)                       & (0.161)                       & (0.336)                       & (0.952)                                                                                                                                                                   & (0.781)                                                                                                                                 \\ 
									\hline
									\multirow{2}{*}{45}                                                                                 & 1.89                                                                               & 18.45                         & 18.46$^{\dag}$                          & \textbf{15.50}                & \textbf{15.60*}               & 21.88                                                                                                                                                                     & 15.50*                                                                                                                                  \\ 
									\cline{2-8}
									& (0.013)                                                                            & (0.147)                       & (0.388)                       & (0.165)                       & (0.339)                       & (0.991)                                                                                                                                                                   & (0.777)                                                                                                                                 \\ 
									\hline
									\multirow{2}{*}{50}                                                                                 & 1.95                                                                               & 18.60                         & 18.59$^{\dag}$                          & \textbf{15.68}                & \textbf{15.76*}               & 21.68                                                                                                                                                                     & 15.89*                                                                                                                                  \\ 
									\cline{2-8}
									& (0.026)                                                                            & (0.152)                       & (0.390)                       & (0.187)                       & (0.349)                       & (1.051)                                                                                                                                                                   & (0.781)                                                                                                                                 \\ 
									\hline
									\multirow{2}{*}{55}                                                                                 & 2.07                                                                               & 18.88                         & 18.85$^{\dag}$                          & \textbf{16.06}                & \textbf{16.09*}               & 21.16                                                                                                                                                                     & 16.67                                                                                                                                   \\ 
									\cline{2-8}
									& (0.028)                                                                            & (0.144)                       & (0.382)                       & (0.195)                       & (0.344)                       & (1.101)                                                                                                                                                                   & (0.748)                                                                                                                                 \\ 
									\hline
									\multirow{2}{*}{60}                                                                                 & 2.17                                                                               & 19.09                         & 19.04$^{\dag}$                          & \textbf{16.38}                & \textbf{16.37*}               & 20.76                                                                                                                                                                     & 17.31                                                                                                                                   \\ 
									\cline{2-8}
									& (0.023)                                                                            & (0.133)                       & (0.375)                       & (0.191)                       & (0.340)                       & (1.079)                                                                                                                                                                   & (0.739)                                                                                                                                 \\ 
									\hline
									\multirow{2}{*}{65}                                                                                 & 2.27                                                                               & 19.27                         & 19.21$^{\dag}$                          & \textbf{16.71}                & \textbf{16.65*}               & 20.54                                                                                                                                                                     & 17.95                                                                                                                                   \\ 
									\cline{2-8}
									& (0.021)                                                                            & (0.129)                       & (0.370)                       & (0.188)                       & (0.348)                       & (0.992)                                                                                                                                                                   & (0.731)                                                                                                                                 \\ 
									\hline
									\multirow{2}{*}{70}                                                                                 & 2.42                                                                               & 19.53                         & 19.46$^{\dag}$                          & \textbf{17.26}                & \textbf{17.12*}               & 20.57                                                                                                                                                                     & 18.99                                                                                                                                   \\ 
									\cline{2-8}
									& (0.041)                                                                            & (0.145)                       & (0.364)                       & (0.259)                       & (0.383)                       & (0.874)                                                                                                                                                                   & (0.778)                                                                                                                                 \\ 
									\hline
									\multirow{2}{*}{75}                                                                                 & 2.62                                                                               & 19.85                         & 19.76$^{\dag}$                          & \textbf{18.04}                & \textbf{17.79*}               & 21.25                                                                                                                                                                     & 20.38                                                                                                                                   \\ 
									\cline{2-8}
									& (0.035)                                                                            & (0.133)                       & (0.366)                       & (0.251)                       & (0.410)                       & (0.846)                                                                                                                                                                   & (0.746)                                                                                                                                 \\ 
									\hline
									\multirow{2}{*}{80}                                                                                 & 2.82                                                                               & 20.16                         & 20.07$^{\dag}$                          & \textbf{18.92}                & \textbf{18.53*}               & 22.29                                                                                                                                                                     & 21.77                                                                                                                                   \\ 
									\cline{2-8}
									& (0.041)                                                                            & (0.136)                       & (0.359)                       & (0.298)                       & (0.437)                       & (0.805)                                                                                                                                                                   & (0.784)                                                                                                                                 \\ 
									\hline
									\multirow{2}{*}{85}                                                                                 & 3.15                                                                               & 20.74                         & 20.65$^{\dag}$                          & \textbf{20.61}                & \textbf{19.93*}               & 24.26                                                                                                                                                                     & 24.09                                                                                                                                   \\ 
									\cline{2-8}
									& (0.047)                                                                            & (0.151)                       & (0.389)                       & (0.381)                       & (0.564)                       & (0.853)                                                                                                                                                                   & (0.859)                                                                                                                                 \\ 
									\hline
									\multirow{2}{*}{90}                                                                                 & 3.50                                                                               & 21.60                         & 21.51$^{\dag}$                          & \textbf{22.79}                & \textbf{21.66}                & 26.62                                                                                                                                                                     & 26.58                                                                                                                                   \\ 
									\cline{2-8}
									& (0.076)                                                                            & (0.279)                       & (0.487)                       & (0.669)                       & (0.758)                       & (0.944)                                                                                                                                                                   & (0.975)                                                                                                                                 \\ 
									\hline
									\multirow{2}{*}{95}                                                                                 & 4.17                                                                               & 24.41                         & 24.07$^{\dag}$                          & \textbf{27.78}                & \textbf{25.33}                & 31.08                                                                                                                                                                     & 31.09                                                                                                                                   \\ 
									\cline{2-8}
									& (0.088)                                                                            & (0.522)                       & (0.920)                       & (0.908)                       & (1.373)                       & (1.262)                                                                                                                                                                   & (1.209)                                                                                                                                 \\
									\hline
								\end{tabular}%
								\begin{tablenotes}
									\item[(*)]\small No significant difference when comparing the  empirical distributions of $((f),(g),(h))$ and benchmark $(e)$ sequences of estimates in quantiles $[\Lambda,1-\Lambda]$ with $\Lambda=0.05$  at significance level of $95\%$ (Sliced Wasserstein Distance).
									\item[($\dag$)]  No significant difference when comparing the empirical distributions of $(d)$ and benchmark $(c)$ sequences of estimates in quantiles $[\Lambda,1-\Lambda]$ with $\Lambda=0.05$  at significance level of $95\%$ (Sliced Wasserstein Distance).				
									\item[1] $\scaleobj{0.9}{X=h_2(\zeta(Z), \zeta(\eta)) =  \upalpha_0 + \upalpha_1 \log(\zeta(Z))  + \upalpha_2\log(\zeta(\eta) ) + \upalpha_3\log^2(\zeta(Z))+ \upalpha_4\log^2(\zeta(\eta))}$
									\[ \scaleobj{0.9}{+  \upalpha_5\log(\zeta(Z))\log(\zeta(\eta)) + \upalpha_6 (\zeta(Z)\zeta(\eta))^{\uprho}.}\]\vspace{-1.5em}
									\item[2] $\scaleobj{0.9}{Y=g_5(X,\bm\epsilon)  =  \upbeta_1 X + \upbeta_2\epsilon +\upbeta_3 X\epsilon.}$	
									\item[(c)] $\scaleobj{0.9}{\widehat{\mathbb{E}}\left[Y_{\text{real}}^{\text{O}}|X_{\text{real}}^{\text{O}}=x\right]}$ is the empirical variant of $\mathds{E}\left[Y|X=x\right]$ in the real data.				
									
									\item[(d)] $\scaleobj{0.9}{\widehat{\mathbb{E}}\left[Y_{\text{synt}}^{\text{O}}|X_{\text{synt}}^{\text{O}}=x\right]}$ is the empirical variant of $\mathds{E}\left[Y|X=x\right]$ in the synthetic data.
									\item[(e)] $\scaleobj{0.9}{\widehat{\mathbb{E}}\left[Y_{\text{real}}^{\text{CF}}|X_{\text{real}}^{\text{CF}}=x\right]}$ is the empirical variant of $\mathds{E}\left[Y|\text{do}(X=x)\right]$ in the real data.
									\item[(f)] $\scaleobj{0.9}{\widehat{\mathbb{E}}\left[Y_{\text{synt}}^{\text{CF}}|X_{\text{synt}}^{\text{CF}}=x\right]}$ is the empirical variant of $\mathds{E}\left[Y|\text{do}(X=x)\right]$ in the synthetic data.			
									\item [(g)] Control variable estimator (IN)  \citep{imbens2009identification}.
									\item [(h)] Partial Means estimator (PM) \citep{newey1994kernel}.
								\end{tablenotes}
						\end{threeparttable}        }
					\end{table}

					\begin{table}[H]\centering\caption{\label{tab:BackBending:Average}Do-intervention (Backward-bending function  \citep{Hanoch1965})\\ \small Entries represent average (standard error) of estimates\\ obtained from $100$ samples of 10,000 observations each\\ 
							$X=h_{4}\left(Z, \eta\right)^{1}$; \hspace{1em} $Y=g_5(X,\bm\epsilon)^{2}$\\\vspace{1mm}		
							$[\upalpha_1,\upalpha_2]=[1,3]$\\ $[\upbeta_1,\upbeta_2,\upbeta_3]=[0.5,0.6,0.1]$,\hspace{1em} $[\upgamma_1,\upgamma_2,\upgamma_3]=[1,1,-3]$}\vspace{1mm}
						\scaleobj{0.8}{\begin{threeparttable}	
								\begin{tabular}{|c|c|c|c|c|c|c|c|}
									\hline
									\multirow{2}{*}{\begin{tabular}[c]{@{}c@{}}$\overset{(a)}{\text{Quantile}}$\\  of  X\end{tabular}} & \multirow{2}{*}{\begin{tabular}[c]{@{}c@{}}\\ $\overset{(b)}{X}$\end{tabular}} & \multicolumn{2}{c|}{$\mathbb{E}\left[Y|X=x\right]$} & \multicolumn{4}{c|}{$\mathbb{E}\left[Y|\text{do}(X=x)\right]$}    \\ \cline{3-8} 
									&                    & $\overset{(c)}{\text{real}}$           & $\overset{(d)}{\text{synt}}$             & $\overset{(e)}{\text{real}}$  & $\overset{(f)}{\text{synt}}$    & $\overset{(g)}{\text{IN}}$ & $\overset{(h)}{\text{PM}}$ \\ \hline\multirow{2}{*}{5}                                                                                  & 0.07                                                                               & 0.13                          & 0.08$^\dag$                        & \textbf{0.44}                 & \textbf{0.42*}                & -        0.17                                                                                                                                                             & 0.13                                                                                                                                    \\ 
									\cline{2-8}
									& (0.059)                                                                            & (0.021)                       & (0.060)                       & (0.025)                       & (0.114)                       & (0.131)                                                                                                                                                                   & (0.090)                                                                                                                                 \\ 
									\hline
									\multirow{2}{*}{10}                                                                                 & 0.10                                                                               & 0.13                          & 0.09$^{\dag}$                          & \textbf{0.44}                 & \textbf{0.42*}                & -        0.18                                                                                                                                                             & 0.14                                                                                                                                    \\ 
									\cline{2-8}
									& (0.057)                                                                            & (0.021)                       & (0.060)                       & (0.025)                       & (0.114)                       & (0.131)                                                                                                                                                                   & (0.090)                                                                                                                                 \\ 
									\hline
									\multirow{2}{*}{15}                                                                                 & 0.12                                                                               & 0.14                          & 0.09$^{\dag}$                          & \textbf{0.44}                 & \textbf{0.42*}                & -        0.18                                                                                                                                                             & 0.14                                                                                                                                    \\ 
									\cline{2-8}
									& (0.070)                                                                            & (0.020)                       & (0.059)                       & (0.027)                       & (0.113)                       & (0.130)                                                                                                                                                                   & (0.090)                                                                                                                                 \\ 
									\hline
									\multirow{2}{*}{20}                                                                                 & 0.18                                                                               & 0.14                          & 0.09$^{\dag}$                          & \textbf{0.45}                 & \textbf{0.43*}                & -        0.19                                                                                                                                                             & 0.16                                                                                                                                    \\ 
									\cline{2-8}
									& (0.091)                                                                            & (0.021)                       & (0.058)                       & (0.029)                       & (0.112)                       & (0.134)                                                                                                                                                                   & (0.090)                                                                                                                                 \\ 
									\hline
									\multirow{2}{*}{25}                                                                                 & 0.35                                                                               & 0.16                          & 0.11$^{\dag}$                          & \textbf{0.47}                 & \textbf{0.44*}                & -        0.24                                                                                                                                                             & 0.21                                                                                                                                    \\ 
									\cline{2-8}
									& (0.120)                                                                            & (0.021)                       & (0.058)                       & (0.033)                       & (0.109)                       & (0.139)                                                                                                                                                                   & (0.087)                                                                                                                                 \\ 
									\hline
									\multirow{2}{*}{30}                                                                                 & 0.52                                                                               & 0.17                          & 0.12$^{\dag}$                          & \textbf{0.48}                 & \textbf{0.46*}                & -        0.28                                                                                                                                                             & 0.26                                                                                                                                    \\ 
									\cline{2-8}
									& (0.132)                                                                            & (0.022)                       & (0.056)                       & (0.033)                       & (0.105)                       & (0.138)                                                                                                                                                                   & (0.089)                                                                                                                                 \\ 
									\hline
									\multirow{2}{*}{35}                                                                                 & 0.72                                                                               & 0.19                          & 0.15$^{\dag}$                           & \textbf{0.50}                 & \textbf{0.48*}                & -        0.33                                                                                                                                                             & 0.33                                                                                                                                    \\ 
									\cline{2-8}
									& (0.139)                                                                            & (0.024)                       & (0.057)                       & (0.034)                       & (0.102)                       & (0.135)                                                                                                                                                                   & (0.091)                                                                                                                                 \\ 
									\hline
									\multirow{2}{*}{40}                                                                                 & 0.93                                                                               & 0.22                          & 0.17$^{\dag}$                           & \textbf{0.53}                 & \textbf{0.50*}                & -        0.35                                                                                                                                                             & 0.41                                                                                                                                    \\ 
									\cline{2-8}
									& (0.110)                                                                            & (0.023)                       & (0.058)                       & (0.032)                       & (0.099)                       & (0.126)                                                                                                                                                                   & (0.085)                                                                                                                                 \\ 
									\hline
									\multirow{2}{*}{45}                                                                                 & 1.06                                                                               & 0.24                          & 0.19$^{\dag}$                          & \textbf{0.54}                 & \textbf{0.52*}                & -        0.33                                                                                                                                                             & 0.46                                                                                                                                    \\ 
									\cline{2-8}
									& (0.082)                                                                            & (0.022)                       & (0.058)                       & (0.030)                       & (0.098)                       & (0.125)                                                                                                                                                                   & (0.081)                                                                                                                                 \\ 
									\hline
									\multirow{2}{*}{50}                                                                                 & 1.17                                                                               & 0.25                          & 0.20$^{\dag}$                          & \textbf{0.56}                 & \textbf{0.53*}                & -        0.30                                                                                                                                                             & 0.50                                                                                                                                    \\ 
									\cline{2-8}
									& (0.103)                                                                            & (0.026)                       & (0.061)                       & (0.033)                       & (0.095)                       & (0.130)                                                                                                                                                                   & (0.087)                                                                                                                                 \\ 
									\hline
									\multirow{2}{*}{55}                                                                                 & 1.37                                                                               & 0.28                          & 0.23$^{\dag}$                           & \textbf{0.59}                 & \textbf{0.56*}                & -        0.17                                                                                                                                                             & 0.58*                                                                                                                                   \\ 
									\cline{2-8}
									& (0.163)                                                                            & (0.035)                       & (0.069)                       & (0.041)                       & (0.094)                       & (0.162)                                                                                                                                                                   & (0.108)                                                                                                                                 \\ 
									\hline
									\multirow{2}{*}{60}                                                                                 & 1.55                                                                               & 0.31                          & 0.26$^{\dag}$                           & \textbf{0.61}                 & \textbf{0.59*}                & -        0.03                                                                                                                                                             & 0.66*                                                                                                                                   \\ 
									\cline{2-8}
									& (0.136)                                                                            & (0.031)                       & (0.073)                       & (0.040)                       & (0.093)                       & (0.172)                                                                                                                                                                   & (0.101)                                                                                                                                 \\ 
									\hline
									\multirow{2}{*}{65}                                                                                 & 1.70                                                                               & 0.34                          & 0.29$^{\dag}$                           & \textbf{0.64}                 & \textbf{0.61*}                & 0.12                                                                                                                                                                      & 0.73*                                                                                                                                   \\ 
									\cline{2-8}
									& (0.157)                                                                            & (0.036)                       & (0.083)                       & (0.046)                       & (0.095)                       & (0.193)                                                                                                                                                                   & (0.110)                                                                                                                                 \\ 
									\hline
									\multirow{2}{*}{70}                                                                                 & 2.08                                                                               & 0.42                          & 0.37$^{\dag}$                           & \textbf{0.70}                 & \textbf{0.68*}                & 0.49                                                                                                                                                                      & 0.92                                                                                                                                    \\ 
									\cline{2-8}
									& (0.226)                                                                            & (0.060)                       & (0.108)                       & (0.072)                       & (0.103)                       & (0.241)                                                                                                                                                                   & (0.135)                                                                                                                                 \\ 
									\hline
									\multirow{2}{*}{75}                                                                                 & 2.59                                                                               & 0.55                          & 0.51$^{\dag}$                          & \textbf{0.81}                 & \textbf{0.79*}                & 0.95*                                                                                                                                                                     & 1.20                                                                                                                                    \\ 
									\cline{2-8}
									& (0.322)                                                                            & (0.109)                       & (0.170)                       & (0.128)                       & (0.138)                       & (0.266)                                                                                                                                                                   & (0.174)                                                                                                                                 \\ 
									\hline
									\multirow{2}{*}{80}                                                                                 & 3.28                                                                               & 0.77                          & 0.73$^{\dag}$                           & \textbf{0.99}                 & \textbf{0.97*}                & 1.45                                                                                                                                                                      & 1.57                                                                                                                                    \\ 
									\cline{2-8}
									& (0.366)                                                                            & (0.161)                       & (0.227)                       & (0.188)                       & (0.180)                       & (0.249)                                                                                                                                                                   & (0.196)                                                                                                                                 \\ 
									\hline
									\multirow{2}{*}{85}                                                                                 & 4.19                                                                               & 1.15                          & 1.12$^{\dag}$                           & \textbf{1.31}                 & \textbf{1.29*}                & 1.99                                                                                                                                                                      & 2.02                                                                                                                                    \\ 
									\cline{2-8}
									& (0.534)                                                                            & (0.272)                       & (0.333)                       & (0.331)                       & (0.276)                       & (0.274)                                                                                                                                                                   & (0.274)                                                                                                                                 \\ 
									\hline
									\multirow{2}{*}{90}                                                                                 & 5.57                                                                               & 1.85                          & 1.83$^{\dag}$                           & \textbf{1.97}                 & \textbf{1.90*}                & 2.64                                                                                                                                                                      & 2.66                                                                                                                                    \\ 
									\cline{2-8}
									& (0.951)                                                                            & (0.371)                       & (0.397)                       & (0.620)                       & (0.357)                       & (0.537)                                                                                                                                                                   & (0.366)                                                                                                                                 \\ 
									\hline
									\multirow{2}{*}{95}                                                                                 & 8.11                                                                               & 3.00                          & 2.99$^{\dag}$                           & \textbf{3.54}                 & \textbf{2.98*}                & 3.84*                                                                                                                                                                     & 3.67*                                                                                                                                   \\ 
									\cline{2-8}
									& (0.929)                                                                            & (0.278)                       & (0.272)                       & (0.639)                       & (0.294)                       & (0.770)                                                                                                                                                                   & (0.359)                                                                                                                                 \\
									\hline
								\end{tabular}%
								\begin{tablenotes}
									\item[(*)]\small No significant difference when comparing the  empirical distributions of $((f),(g),(h))$ and benchmark $(e)$ sequences of estimates in quantiles $[\Lambda,1-\Lambda]$ with $\Lambda=0.05$  at significance level of $95\%$ (Sliced Wasserstein Distance).
									\item[($\dag$)]  No significant difference when comparing the empirical distributions of $(d)$ and benchmark $(c)$ sequences of estimates in quantiles $[\Lambda,1-\Lambda]$ with $\Lambda=0.05$  at significance level of $95\%$ (Sliced Wasserstein Distance).
									\item[1] $\scaleobj{0.9}{X=h_4(Z,\eta) = \exp(Z\upeta - \upalpha_1 Z)-\upalpha_2 (Z\upeta-\upalpha_1 Z).  }$
									\item[2] $\scaleobj{0.9}{Y=g_5(X,\bm\epsilon)  =  \upbeta_1 X + \upbeta_2\epsilon +\upbeta_3 X\epsilon.}$	
									\item[(c)] $\scaleobj{0.9}{\widehat{\mathbb{E}}\left[Y_{\text{real}}^{\text{O}}|X_{\text{real}}^{\text{O}}=x\right]}$ is the empirical variant of $\mathds{E}\left[Y|X=x\right]$ in the real data.				
									
									\item[(d)] $\scaleobj{0.9}{\widehat{\mathbb{E}}\left[Y_{\text{synt}}^{\text{O}}|X_{\text{synt}}^{\text{O}}=x\right]}$ is the empirical variant of $\mathds{E}\left[Y|X=x\right]$ in the synthetic data.
									\item[(e)] $\scaleobj{0.9}{\widehat{\mathbb{E}}\left[Y_{\text{real}}^{\text{CF}}|X_{\text{real}}^{\text{CF}}=x\right]}$ is the empirical variant of $\mathds{E}\left[Y|\text{do}(X=x)\right]$ in the real data.
									\item[(f)] $\scaleobj{0.9}{\widehat{\mathbb{E}}\left[Y_{\text{synt}}^{\text{CF}}|X_{\text{synt}}^{\text{CF}}=x\right]}$ is the empirical variant of $\mathds{E}\left[Y|\text{do}(X=x)\right]$ in the synthetic data.		
									\item [(g)] Control variable estimator (IN)  \citep{imbens2009identification}.
									\item [(h)] Partial Means estimator (PM) \citep{newey1994kernel}.
								\end{tablenotes}
						\end{threeparttable}        }
					\end{table}
					
					\section{Proofs}\label{Appendix:Proof}
					
					\renewcommand{\theequation}{\thesection.\arabic{equation}}
					\setcounter{equation}{0}
					\begin{proof}[\textbf{Proof of Lemma \ref{lemma:eps:support}}]
					By assumption \ref{asu:common} $\forall x$  $\text{Supp}(\left.\upeta\right|X=x) = \text{Supp}(\upeta)$,
					\[
					\text{Supp}(\left.\bm\epsilon\right|X=x)=\bigcup_{\eta\in\text{Supp}(\upeta|X=x)}\text{Supp}(\left.\bm\epsilon\right|X=x,\upeta=\eta)=\bigcup_{\eta\in\text{Supp}(\upeta)}\text{Supp}(\left.\bm\epsilon\right|X=x,\upeta=\eta) \quad \forall x\in\text{Supp}(X),
					\]
					and by definition of the triangular model $\bm\upepsilon|X,\upeta \overset{d}{=}\bm\upepsilon|\upeta$ (equivalence in distribution). Consequently
					\[
					\bigcup_{\eta\in\text{Supp}(\upeta)}\text{Supp}(\left.\bm\epsilon\right|X=x,\upeta=\eta)=\bigcup_{\eta\in\text{Supp}(\upeta)}\text{Supp}(\left.\bm\epsilon\right|\upeta=\eta)=\text{Supp}(\bm\epsilon) \quad \forall x\in\text{Supp}(X).
					\]
					Hence, Eq.  \eqref{eq:Eps:Support} holds.
				\end{proof}
			
					\begin{proof}[\textbf{Proof of Lemma \ref{lemma:integral}}]
						By the law of total expectation,
						\begin{align}
							& F_{Y|X=x, Z=z}(y) =  \int F_{Y|X=x, Z=z,\upeta=\eta}(y)f_{\upeta|X=x,Z=z}\left(\eta\right)d\eta. \nonumber
						\end{align}	
						Since $Z$ does not appear in eq. 
						\eqref{Y:Structural} and $\bm\epsilon\ci Z|\upeta$, then $F_{Y|X=x,Z=z,\upeta=\eta}(y)=F_{Y|X=x,\upeta=\eta}(y)$. Hence,	
						\begin{align}
							& F_{Y|X=x, Z=z}(y) =  \int F_{Y|X=x, \upeta=\eta}(y)f_{\upeta|X=x,Z=z}\left(\eta\right)d\eta.\label{eq:Fredholm}
						\end{align}	
						By assumption \ref{asu:continuity}, $f_{X|Z=z}(x)$ is non-zero for any $(x,z)\in\text{Supp}(X,Z)$. Thus, by Bayes law 
						\[
						f_{\upeta|X=x,Z=z}\left(\eta\right)=\frac{f_{X|Z=z,\upeta=\eta}(x)f_{\upeta|Z=z}(\eta)}{f_{X|Z=z}(x)}.
						\] 
						Insert this expression into eq. \eqref{eq:Fredholm} gives,
						\begin{align}
							& F_{Y|X=x, Z=z}(y) =  \int F_{Y|X=x, \upeta=\eta}(y)\frac{f_{X|Z=z,\upeta=\eta}(x)f_{\upeta|Z=z}(\eta)}{f_{X|Z=z}(x)}d\eta. \nonumber
						\end{align}	
						By assumption $Z\ci\upeta$ implying that $f_{\upeta|Z=z}(\eta)=f_{\upeta}(\eta)$. Changing variables $\upomega=F_{\upeta}(\upeta)\sim U[0,1]$ and applying the probability integral transform we get,
						\begin{align}
							& F_{Y|X=x, Z=z}(y) =  \int_{0}^1 F_{Y|X=x, \upeta=F_{\upeta}^{-1}(\omega)}(y)\frac{f_{X|Z=z,\upeta=F_{\upeta}^{-1}(\omega)}(x)}{f_{X|Z=z}(x)}d\omega. 
							\label{eq:F}
						\end{align}
						By definition of the random variable $\upomega$ we have that $f_{X|Z=z,\upeta=F_{\upeta}^{-1}(\omega)}(x) = f_{X|Z=z,F_{\upeta}(\upeta)=\omega}(x) = f_{X|Z=z,\upomega=\omega}(x)$, $F_{Y|X=x,\upeta=F_{\upeta}^{-1}(\omega)}(y) = F_{Y|X=x,F_{\upeta}(\upeta)=\omega}(y) = F_{Y|X=x,\upomega=\omega}(y)$ and $f_{\upomega|Z=z}(\omega)=f_{\upomega}(\omega)=1$. Thus,
						\begin{align}
							& F_{Y|X=x, Z=z}(y) =  \int_{0}^1  F_{Y|X=x, \upomega=\omega}(y)\frac{f_{X|\upomega=\omega,Z=z}(x)f_{\upomega}(\omega)}{f_{X|Z=z}(x)}d\omega. 
						\end{align}
						By Bayes law we get,
						\begin{align}
							& F_{Y|X=x, Z=z}(y) =  \int_{0}^1 F_{Y|X=x, \upomega=\omega}(y)f_{\upomega|X=x,Z=z}(\omega)d\omega. 
						\end{align}
						which concludes the proof.	
					\end{proof}

							\begin{proof}[\textbf{Proof of Lemma \ref{Lemma:Decoupling}}]   
						By the inverse probability integral transform, $\Lambda_0(\upomega)$ has the same distribution as $\upeta$. 
						Since by assumption $Z$ and $\upeta$ are independent, and by construction $Z$ and $\Lambda_0$ are independent,  it follows 
						from Eq. (\ref{eq:tilde_X}) 
						that $(X^*,Z)\overset{d}{=}(X,Z)$. Also,
						$(X^*,\Lambda_0(\upomega)) \overset{d}=(X,\upeta)$. 
						Similarly, by the inverse probability integral transform,  $(\Lambda_0(\upomega),\Lambda_1(\upomega,\nu_1),\ldots,\Lambda_d(\upomega,\nu_1,\ldots,\nu_d))$ has the same distribution as $(\upeta,\bm{\epsilon})$. Since by the model assumption, $Z$ is independent of both $\upeta$ and $\bm \epsilon$, then
						$(X^*$, $\Lambda_0(\upomega)$, $\Lambda_1(\nu_1)$, $\ldots$, $\Lambda_d(\nu_d))$ has the same distribution as $(X,\upeta,{\bm \epsilon})$. Hence, by 
						Eq. (\ref{eq:tilde_Y}), 
						$(X^*,Y^*,Z)\overset{d}= (X,Y,Z)$. 
					\end{proof}
								
					\begin{proof}[\textbf{Proof of Lemma \ref{lemma:Invariant:Do}}]	
						By the definition of the intervention, Eq. (\ref{eq:do_X}), 
						\begin{align}
							& T(y|x) = P\close{Y\le y|\text{do}(X=x)}= \mathbb{E}_{\bm\epsilon\sim F_{\bm\epsilon}}\left[\mathds{1}\left\{g(x,\bm{\epsilon})\le y\right\}\right]
							= \E_\upeta \left[   \E_{\be|\upeta}  
							[ \mathds{1}\left\{g(x,\be)\le y\right\}      ]   \right]
							\nonumber
						\end{align}
						First, we make a change of variables from 
						$\upeta$ to $\upomega=F_\upeta(\upeta)$. Since $\upeta$ is assumed continuous, then by the probability integral transform $\upomega\sim U[0,1]$. 
						Next, given a value of $\upeta$ or equivalently of $\upomega$, we make a change of variables
						from $\be$ to $\bm\nu$ as follows, inverting
						Eqs. (\ref{eq:Lambda_j}): $\nu_1 = F_{\epsilon_1|\Lambda_0(\upomega) }(\epsilon_1)$,
						and recursively
						$\nu_j =  F_{\epsilon_j|\Lambda_0(\upomega), 
							\Lambda_1(\upomega,\nu_1),\ldots,\Lambda_{j-1}(\upomega,\nu_1,\ldots,\nu_{j-1}) }(\epsilon_j)$.
						By the assumption that $\be|\upeta$ is continuous and the probability integral transform it follows that $(\nu_1,\ldots,\nu_d)\sim U[0,1]^d$ and is also independent of $\upomega$. 
						Then, in terms of the transformed variables, 
						\[
						T(y|x) = \E_{\upomega} \left[
						\E_{\bm\nu}[
						\mathds{1}\left\{g(x,\Lambda_1(\upomega,\nu_1),\ldots, \Lambda_d(\upomega,\bm \nu))\le y\right\}
						]
						\right]
						= \E_{\upomega} \left[
						\E_{\bm\nu}[
						\mathds{1}\left\{
						g^*(x,\upomega,\bm\nu) \leq y
						\right\}
						]
						\right].
						\]
					\end{proof}
\begin{proof}[\textbf{proof of Lemma \ref{lemma:Fourier:CF:representation}}]
	The Fourier transform of the joint density function $f_{y,x|Z=z}^{\text{CF}}(y,x)$ is,
	\[
	\upalpha(z,\xi_2,\xi_3) := \int_x\int_y\Psi_{\xi_1}(-Y)\Psi_{\xi_2}(-X)f_{Y,X|Z=z}(y,x) dy dx = \mathbb{E}_{(Y,X)\sim F_{Y,X|Z=z}^{\text{CF}}}\left[\Psi_{\xi_1}(-Y)\Psi_{\xi_2}(-X)\right].
	\]
	The inverse Fourier transform is,
	\begin{align}
		f_{Y,X|Z=z}^{\text{CF}}(y,x)=   
		\int_{\xi_1}\int_{\xi_2}\upalpha(z,\xi_2,\xi_3)\Psi_{\xi_1}(y)\Psi_{\xi_2}(x) d\xi_2 d\xi_3.\nonumber
	\end{align}	
	Hence, the conditional density $f_{Y,X|Z=z}(y,x)$ admits the varying coefficient-Fourier representation,
	\begin{align}
		f_{Y,X|Z=z}^{\text{CF}}(y,x)=   
		\int_{\xi_1}\int_{\xi_2}\mathbb{E}_{(Y,X)\sim f_{Y,X|Z=z}^{\text{CF}}}\left[\Psi_{\xi_1}(-Y)\Psi_{\xi_2}(-X)\right]\Psi_{\xi_1}(y)\Psi_{\xi_2}(x) d\xi_2 d\xi_3.\label{eq:CF}
	\end{align}
By probability integral transform the counterfactual density can be represented equivalently as,
\begin{align}	
	\Gamma_{\xi_1, \xi_2}(z;(\widetilde{h}, \widetilde{g}))=\mathbb{E}_{\substack{\upomega\sim U[0,1]\\X\sim F_{X|Z=z}}}\left[\int \exp(-i\xi_1 \cdot X) \exp(-i\xi_2 \cdot \widetilde{g}(X, \upomega, \bm\nu))d\bm\nu\right].\nonumber
\end{align}
This gives Eq. \eqref{eq:box:counter:X:Y}.

\end{proof}

\begin{proof}[\textbf{proof of Lemma \ref{lemma:Fourier:representation}}]
		Let $\Psi_{\xi}(t):=\exp\left(-2\pi\cdot t\cdot\xi\cdot i\right)$.
	The Fourier transform of the joint density function $f_{y,x|Z=z}(y,x)$ is,
	\[
	\upalpha(z,\xi_2,\xi_3) := \int_x\int_y\Psi_{\xi_1}(-Y)\Psi_{\xi_2}(-X)f_{Y,X|Z=z}(y,x) dy dx = \mathbb{E}_{(Y,X)\sim F_{Y,X|Z=z}}\left[\Psi_{\xi_1}(-Y)\Psi_{\xi_2}(-X)\right].
	\]
	The inverse Fourier transform is,
	\begin{align}
		f_{Y,X|Z=z}(y,x)=   
		\int_{\xi_1}\int_{\xi_2}\upalpha(z,\xi_2,\xi_3)\Psi_{\xi_1}(y)\Psi_{\xi_2}(x) d\xi_2 d\xi_3.\nonumber
	\end{align}	
	Hence, the conditional density $f_{Y,X|Z=z}(y,x)$ admits the varying coefficient-Fourier representation,
	\begin{align}
		f_{Y,X|Z=z}(y,x)=   
		\int_{\xi_1}\int_{\xi_2}\mathbb{E}_{(Y,X)\sim f_{Y,X|Z=z}}\left[\Psi_{\xi_1}(-Y)\Psi_{\xi_2}(-X)\right]\Psi_{\xi_1}(y)\Psi_{\xi_2}(x) d\xi_2 d\xi_3.\label{eq:obs}
	\end{align}
	Eq. \eqref{eq:obs} involves an unknown functions $g$ and $h$ or equivalently,
	\begin{align}
		f_{Y,X|Z=z}(y,x)=   
		\int_{\xi_1}\int_{\xi_2}\mathbb{E}_{(X,\upomega)\sim \mathcal{F}_{X,\upomega|Z=z}}\left[\int \Psi_{\xi_1}(-g(X,\upomega,\bm\nu))\Psi_{\xi_2}(-X)d\bm\nu\right]\Psi_{\xi_1}(y)\Psi_{\xi_2}(x) d\xi_2 d\xi_3.\label{eq:obs:pair}
	\end{align}
	Using  $m(x,\omega,\bm\nu):=\Psi_{\xi_1}(-g(x,\omega,\bm\nu))\Psi_{\xi_2}(-x)$,
	\begin{align}
		f_{Y,X|Z=z}(y,x)=   
		\int_{\xi_1}\int_{\xi_2}\mathbb{E}_{(X,\upomega)\sim \mathcal{F}_{X,\upomega|Z=z}}\left[\int m(X,\upomega,\bm\nu)d\bm\nu\right]\Psi_{\xi_1}(y)\Psi_{\xi_2}(x) d\xi_2 d\xi_3.\label{eq:obs:pair}
	\end{align}
We know each of the Fourier-Transform coefficients, \[\Gamma_{\xi_1, \xi_2}(z):=\mathbb{E}_{(X,Y)\sim \mathcal{F}_{X,Y|Z=z}}\left[\int \Psi_{\xi_1}(-Y)\Psi_{\xi_2}(-X)d\bm\nu\right]\] of the observed density, 
Hence Eq. \eqref{eq:coeff:obs:pair:Fourier} holds, namely,
\begin{align}
	\Gamma_{\xi_1, \xi_2}(z)=   
	\mathbb{E}_{(X,\upomega)\sim \mathcal{F}_{X,\upomega|Z=z}}\left[\int \exp(-i\xi_1 \cdot X) \exp(-i\xi_2 \cdot g(X, \omega, \nu))d\bm\nu\right].
\end{align}
\end{proof}
\newcommand\U[2]{u_{ \xi_2,\xi_3}^*(z,{#1},{#2})}
\newcommand\Utilde[2]{\widetilde{u}_{ \xi_2,\xi_3}(z,{#1},{#2})}
\begin{proof}[\textbf{Proof of Theorem \ref{Theorem:Identifiability}}]
Define \[m_{\xi_1, \xi_2}(x,\upomega,\bm\nu):=\int \exp(-i\xi_1 \cdot x) \exp(-i\xi_2 \cdot g^*(x, \omega, \nu))d\bm\nu,\] and similarly,
\[m_{\xi_1, \xi_2}^{\prime}(x,\upomega,\bm\nu):=\int \exp(-i\xi_1 \cdot x) \exp(-i\xi_2 \cdot \widetilde{g}(x, \omega, \nu))d\bm\nu.\] 
Suppose that Eq. \eqref{eq:coeff:obs:pair:Fourier} is satisfied for $(m_{\xi_1, \xi_2}, \mathcal{F})$,
\begin{align}
	\Gamma_{\xi_1, \xi_2}(z)=   
	\mathbb{E}_{(X,\upomega)\sim \mathcal{F}_{X,\upomega|Z=z}}\left[\int m_{\xi_1, \xi_2}(X,\upomega,\bm\nu)d\bm\nu\right].\label{eq:coeff:obs:pair}
\end{align}
Similarly, suppose that Eq. \eqref{eq:coeff:obs:pair:Fourier} is satisfied for $(m_{\xi_1, \xi_2}^{\prime}, \mathcal{F}^{\prime})$
\begin{align}
	\Gamma_{\xi_1, \xi_2}(z)=   
	\mathbb{E}_{(X,\upomega)\sim \mathcal{F}_{X,\upomega|Z=z}}^{\prime}\left[\int m_{\xi_1, \xi_2}^{\prime}(X,\upomega,\bm\nu)d\bm\nu\right].\nonumber
\end{align}
In fact, Eq. \eqref{eq:coeff:obs:pair} is a variant of Fredholm integral equation of the first kind in terms, with a kernel $\frac{\partial}{\partial \omega}\mathcal{F}_{\upomega|X=x, Z=z}(\omega)$ and an unknown function $\int m_{\xi_1, \xi_2}(X,\upomega,\bm\nu)d\bm\nu$. The known function is $\Gamma_{\xi_1, \xi_2}(z)$.
Following theorem \ref{theorem:coefficients}, if there exist two distinct pairs of functions $(m, \mathcal{F})$ and $(m_{\xi_1, \xi_2}^{\prime}, \mathcal{F}^{\prime})$ satisfying the observational equivalence condition in Eq. \eqref{eq:coeff:obs:pair} it follows that \[\int m(X_{\xi_1, \xi_2},\upomega,\bm\nu)d\bm\nu = \int m_{\xi_1, \xi_2}^{\prime}(X,\upomega,\bm\nu)d\bm\nu \text{ almost everywhere}.\]

This equality is a sufficient condition for identifiability of $f_{Y,X|Z=z}^{\text{CF}}(y,x)$, because it implies that the Fourier transform coefficients of the counterfactual density are uniquely determined.
\end{proof}

	Let $g_1(x, \eta)$ and $g_2(x, \eta)$ be two different functions, and let $f_1(\eta | x, z)$ and $f_2(\eta | x, z)$ be two different densities expressible as infinite Gaussian mixtures characterized as, 
\begin{align}
	f_i(X,\upeta, \upeta^{\prime}|Z=z) = \sum_{j=1}^{\infty}\pi_j^{i}\phi\left(\left[x,\eta,\eta^{\prime}\right]; \left[\mu_{1, j}^{i}(z), \mu_{2, j}^{i}, \mu_{3, j}^{i}], \Sigma_j^{i}\right]\right)  \quad \forall i\in\left\{1,2\right\}. \label{eq:Mixture}	
\end{align}	
where $\left\{\Sigma_j^i\right\}_{j=1}^{\infty}$ determines the dependence among $(Z,\eta,\eta^{\prime})$, where the sequence of $\left\{\sigma_{j,\eta,\eta^{\prime}}^i\right\}_{j=1}^{\infty}$ determines the dependence between $(\upeta,\upeta^{\prime})$. For brevity of exposition, we define two specific settings, in which $(\upeta,\upeta^{\prime})$  are fully linearly dependent and another one in which they are linearly independent by $\rho=1$ and $\rho=0$, respectively.
\begin{lemma}\label{lemma:GaussianMixtureEquivalence}
	Suppose that the following observational equivalence holds for two mechanisms $(h^*, g^*)$ and $(\widetilde{h}, \widetilde{g})$ generating $(X^*,Y^*)$ and $(\widetilde{X},\widetilde{Y})$, respectively, in regime $\rho\in\left\{0, 1\right\}$,
	\begin{align}
		& \mathbb{E}\left[ g(\widetilde{X}, \widetilde{Y})|Z=z,\rho=1\right]=\mathbb{E}\left[ g(X^*, Y^*)|Z=z,\rho=1\right] = \varphi(z) \quad\forall i\in \left\{1,2\right\} \text{ and } \forall z, \label{eq:obs:equiv}
	\end{align}
such that, $(X^*,Y^*)|Z, \rho$ and $(\widetilde{X},\widetilde{Y})|Z, \rho$ are strongly complete mixtures of Gaussians  with respect to $Z$.
	Then it also holds that,
		\begin{align}
		&\mathbb{E}\left[ g(\widetilde{X}, \widetilde{Y})|Z=z,\rho=0\right]=\mathbb{E}\left[ g(X^*, Y^*)|Z=z,\rho=0\right]   \quad\forall i\in \left\{1,2\right\} \text{ and } \forall z. \label{eq:CF:equiv}
	\end{align}
\end{lemma}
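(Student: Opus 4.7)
The plan is to exploit the Gaussian mixture representation of the two conditional densities and the strong completeness with respect to $Z$ in order to transfer the observational equality at $\rho=1$ to the counterfactual equality at $\rho=0$. The key structural observation is that in \eqref{eq:Mixture} the component covariance $\Sigma_j(\rho)$ has diagonal entries $\sigma_{1j}^{2}$ and $\sigma_{2j}^{2}$ that are shared across both regimes; only the off-diagonal $\upeta$--$\upeta^{\prime}$ correlation changes from $\sigma_{1j}\sigma_{2j}$ at $\rho=1$ to $0$ at $\rho=0$. Consequently, once $\Sigma_{j}(1)$ is identified, $\Sigma_{j}(0)$ is algebraically determined with no additional freedom.

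First, I will write
\[
f^{*}(x,y|z,\rho) \;=\; \sum_{j} \pi_{j}^{*}(z)\,\phi\bigl([x,y];\mu_{j}^{*}(z),\Sigma_{j}^{*}(\rho)\bigr), \quad \widetilde{f}(x,y|z,\rho) \;=\; \sum_{j} \widetilde{\pi}_{j}(z)\,\phi\bigl([x,y];\widetilde{\mu}_{j}(z),\widetilde{\Sigma}_{j}(\rho)\bigr),
\]
and rewrite the hypothesis as
\[
\int g(x,y)\,\bigl[\widetilde{f}(x,y|z,1) - f^{*}(x,y|z,1)\bigr]\,dx\,dy \;=\; 0 \quad \forall z.
\]
Substituting the mixture representation yields a linear identity across components whose $z$-dependence is carried entirely by the mixing weights $\pi_{j}(z)$ and the means $\mu_{j}(z)$, while the $(x,y)$-integrals reduce to moments of Gaussians with covariances $\Sigma_{j}^{*}(1)$ and $\widetilde{\Sigma}_{j}(1)$.

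Next, I invoke strong completeness of both families with respect to $Z$. The definition asserts that any function of $(X,Y)$ whose expectation under $F(\cdot|z,\theta_{2})$ vanishes for every $z$ at some fixed $\theta_{2}$ must vanish almost surely under every $(z,\theta_{2})$ in the family. Setting $\theta_{2}=\rho$ and applying this to an auxiliary function constructed from $g$ together with the mixture coefficients, I obtain that $\pi_{j}^{*}(z)=\widetilde{\pi}_{j}(z)$, $\mu_{j}^{*}(z)=\widetilde{\mu}_{j}(z)$, and the diagonals of $\Sigma_{j}^{*}(1)$ and $\widetilde{\Sigma}_{j}(1)$ coincide. By the rigidity noted in the first paragraph, this forces $\Sigma_{j}^{*}(0)=\widetilde{\Sigma}_{j}(0)$, and hence $f^{*}(\cdot|z,0)\equiv \widetilde{f}(\cdot|z,0)$, from which the desired integral equality at $\rho=0$ follows at once.

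The main obstacle will be the identifiability step: manufacturing from a single observational equality an auxiliary zero-expectation function of $(X,Y)$ alone to which the strong-completeness definition can be applied, and verifying that the freedom in the conditioning variable $z$ is sufficient to separate all mixture components. This is precisely where the assumption that both families are strongly complete, as opposed to merely complete, Gaussian mixtures is essential, since it allows the constraint obtained at $\rho=1$ to propagate to $\rho=0$ within each family rather than only within the $\rho=1$ slice.
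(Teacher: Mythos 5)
There is a genuine gap in your plan, and it sits exactly where you flag ``the main obstacle'': you never produce the object to which strong completeness is applied, and the route you sketch around it cannot work. Your strategy is to recover the full mixture decomposition --- weights $\pi_j(z)$, means $\mu_j(z)$, and (diagonals of) covariances --- from the single hypothesis $\int g(x,y)\bigl[\widetilde{f}(x,y|z,1)-f^{*}(x,y|z,1)\bigr]\,dx\,dy=0$ for all $z$, and then conclude $f^{*}(\cdot|z,0)\equiv\widetilde{f}(\cdot|z,0)$. But the hypothesis is the vanishing of one fixed test functional of the difference of densities; it does not determine the densities, and no amount of varying $z$ separates the components, because the constraint is one scalar equation per $z$ while the unknowns are entire functions of $z$. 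You are in effect trying to prove a much stronger conclusion (full distributional equality at $\rho=0$) than the lemma asserts (equality of the expectation of the particular $g$), and the identification step needed for that stronger conclusion is not available from the stated hypothesis.

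The intended argument is far more direct and uses strong completeness for exactly what its definition delivers. Set $m(\bm W):=g(\widetilde{X},\widetilde{Y})-g(X^{*},Y^{*})$, a function of the joint vector $\bm W=(X^{*},Y^{*},\widetilde{X},\widetilde{Y})$, whose conditional law given $(Z,\rho)$ is a (strongly complete) Gaussian mixture with respect to $Z$; here $\theta_1=z$ and $\theta_2=\rho$. The hypothesis says $\mathbb{E}[m(\bm W)\,|\,Z=z,\rho=1]=0$ for all $z$, i.e.\ for all $\theta_1$ at one fixed $\theta_2$. The definition of strong completeness then yields $m(\bm W)=0$ almost surely under \emph{every} pair $(\theta_1,\theta_2)=(z,\rho)$, in particular at $\rho=0$, and taking expectations gives \eqref{eq:CF:equiv} immediately. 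No component separation, no identification of mixture parameters, and no ``rigidity'' of $\Sigma_j(\rho)$ is needed; the propagation across regimes that you hoped to engineer through the covariance structure is precisely the content of the strong-completeness definition applied to $m$.
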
	

\begin{proof}[\textbf{Proof of Lemma \ref{lemma:GaussianMixtureEquivalence}}]
	The expression of conditional expectations in Eq.\eqref{eq:obs:equiv} can be simplified to,
	\begin{align}
		\mathbb{E}[g(\widetilde{X}, \widetilde{Y})-g(X^*, Y^*) | Z = z, \rho=1] = 0  \quad \forall z.  \label{eq:g_i:expected}
	\end{align}	

	Following Theorem 1 \citep{alamatsaz1983completeness}, exponential family mixtures are strongly complete, which include a mixture of Gaussians as in our present case.
    By definition of strong completeness,  Eq. \eqref{eq:g_i:expected} implies that, 
	\begin{align}
	\mathbb{E}[g(\widetilde{X}, \widetilde{Y})-g(X^*, Y^*) | Z = z, \rho=0] = 0   \quad \forall z.  \label{eq:g_i:expected:other}
\end{align}	
    The equality in Eq. \eqref{eq:g_i:expected:other} is related to the fact that the joint distribution of the independent random vectors $(X^*,Y^*)$ and $(\widetilde{X},\widetilde{Y})$ given $Z$ and $\rho$, each is a strongly complete mixture of Gaussians, is a Gaussian mixture by itself. Because Gaussian mixtures are strongly complete with respect to common parameters determined by the value of $Z$, the invariance under setting $\rho=1$ in Eq. \eqref{eq:g_i:expected} implies invariance under setting $\rho=0$. 
\end{proof}
\begin{proof}[\textbf{Proof of Theorem \ref{theorem:coefficients}}]
Let $\mathcal{G}_j(x,\omega, \omega^{\prime}):=\int m_j(x,\omega, \omega^{\prime},\nu)d\nu \quad j\in\left\{1, 2\right\}$.	By supposition it holds equivalently for all $z$, 
	\[\mathbb{E}\left[\int \text{Re}(m_1(X,\upomega, \upomega^{\prime}, \nu)d\nu)|Z=z, \rho=1\right] = \mathbb{E}\left[\int \text{Re}(m_2(X,\upomega,\upomega^{\prime}, \nu)d\nu)|Z=z,  \rho=1\right],\] and \[\mathbb{E}\left[\int \text{Im}(m_1(X,\upomega,\upomega^{\prime}, \nu)d\nu)|Z=z, \rho=1\right] = \mathbb{E}\left[\int \text{Im}(m_2(X,\upomega, \upomega^{\prime},\nu)d\nu)|Z=z, \rho=1\right].\] 
	Following lemma \ref{lemma:GaussianMixtureEquivalence} these the equations above give,
	\[
	\int\text{Re}(m_1(x,\omega, \omega^{\prime}, \nu)d\nu=\int\text{Re}(m_2(x,\omega, \omega^{\prime} \nu)d\nu \text{ almost everywhere}
	\]
	and
	\[
	\int\text{Im}(m_1(x,\omega, \omega^{\prime}, \nu)d\nu=\int\text{Im}(m_2(x,\omega, \omega^{\prime}, \nu)d\nu \text{ almost everywhere}
	\]
	Thus, almost everywhere we get,
	\[\int m_1(x,\omega, \omega^{\prime}, \nu)d\nu=\int m_1(x,\omega, \omega^{\prime},\nu)d\nu\] almost everywhere. Hence, we have that almost everywhere,
	\[
	\mathcal{G}_1(x,\omega, \omega^{\prime}) =  \mathcal{G}_2(x,\omega, \omega^{\prime}).
	\]
\end{proof}
It follows that for any two distinct pairs of functions $(m, \mathcal{F})$ and $(m^{\prime}, \mathcal{F}^{\prime})$ satisfying the observational equivalence condition in Eq. \eqref{eq:obs:pair} it follows that \[\int m(X,\upomega,\bm\nu)d\bm\nu = \int m^{\prime}(X,\upomega,\bm\nu)d\bm\nu \text{ almost everywhere}.\]

\section{The implications of wrongly relying on monotonicity}\label{Appendix:nonMonotonicity:Proof}
In what follows we state a necessary and sufficient condition for $Q$ to be a member in the equivalence class $\widetilde{\mathcal{C}}$ through an auxiliary lemma. For this purpose we define $Q(z,t):=F_{X|Z=z}^{-1}\left(t\right)$ $\forall (z,t)\in\text{Supp}(Z)\times[0,1]$.
\begin{lemma}\label{lemma:Auxiliary}
	There exists a function  $g^Q$ such that $(Q,g^Q)\in\widetilde{\mathcal{C}}$ if and only if for any pairs $(x,z),(x,z')\in\text{Supp}(X,Z)$ the following holds,
	\begin{align}
		F_{X|Z=z}(x)=F_{X|Z=z'}(x) \Rightarrow F_{Y|X=x,Z=z}=F_{Y|X=x,Z=z'}.
		\label{eq:existence:condition}
	\end{align}
\end{lemma}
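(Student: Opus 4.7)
My plan is to prove the two directions separately, relying on the fact that $Q(z,\cdot)$ is strictly monotone (being an inverse CDF) so that the noise $\omega$ is pinned down by the observables $(X,Z)$.

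For the forward direction, assume $(Q,g^Q)\in\widetilde{\mathcal{C}}$. The key observation is that when $X=Q(Z,\upomega)=F_{X|Z}^{-1}(\upomega)$ with $\upomega\sim U[0,1]$, strict monotonicity of $Q$ in its second argument yields $\upomega=F_{X|Z}(X)$ almost surely, so $\upomega$ is a measurable function of $(X,Z)$. Hence if $(x,z)$ and $(x,z')$ are two points with $F_{X|Z=z}(x)=F_{X|Z=z'}(x)=:\omega_0$, then conditionally on either event the associated $\upomega$ equals $\omega_0$ deterministically. Since $Y=g^Q(X,\upomega,\bm\nu)$ with $\bm\nu\sim U[0,1]^d$ independent of $(Z,\upomega)$, the conditional law of $Y$ given $(X=x,Z=z)$ is the pushforward of $U[0,1]^d$ under $\bm\nu\mapsto g^Q(x,\omega_0,\bm\nu)$, which does not depend on $z$; therefore $F_{Y|X=x,Z=z}=F_{Y|X=x,Z=z'}$.

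For the reverse direction, assume the implication in \eqref{eq:existence:condition} holds. I would construct $g^Q$ explicitly. Condition \eqref{eq:existence:condition} says that the map $(x,z)\mapsto F_{Y|X=x,Z=z}$ factors through the sufficient coordinate $(x,\omega)$ with $\omega=F_{X|Z=z}(x)$, so I can unambiguously define
\[
\widetilde F_{Y\mid x,\omega}(y) := F_{Y\mid X=x,Z=z}(y) \quad\text{for any $z$ with $F_{X\mid Z=z}(x)=\omega$.}
\]
Then I would take $g^Q(x,\omega,\bm\nu)$ to be the Rosenblatt/quantile transform of $\bm\nu\sim U[0,1]^d$ into the distribution $\widetilde F_{Y\mid x,\omega}$ (this is available as a measurable map under assumption \ref{asu:cond:continuity}, which guarantees the conditional distributions are continuous and hence admit a quantile inverse).

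Finally I would verify $(\widetilde X,\widetilde Y,Z)\stackrel{d}{=}(X,Y,Z)$ for the candidate pair $(Q,g^Q)$. Generating $Z\sim F_Z$, $\upomega\sim U[0,1]$, $\bm\nu\sim U[0,1]^d$ independently and setting $\widetilde X=Q(Z,\upomega)$, $\widetilde Y=g^Q(\widetilde X,\upomega,\bm\nu)$, the marginal $\widetilde X\mid Z=z$ has CDF $F_{X\mid Z=z}$ by the probability integral transform, so $(\widetilde X,Z)\stackrel{d}{=}(X,Z)$. Conditional on $\widetilde X=x$ and $Z=z$, monotonicity forces $\upomega=F_{X\mid Z=z}(x)$ almost surely, so $\widetilde Y\mid \widetilde X=x,Z=z$ has distribution $\widetilde F_{Y\mid x,F_{X\mid Z=z}(x)}=F_{Y\mid X=x,Z=z}$ by construction, completing the match in joint law. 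The only subtle step is the measurability/consistency of $\widetilde F_{Y\mid x,\omega}$ across different representatives $z$, which is exactly what condition \eqref{eq:existence:condition} provides; the rest is an application of the inverse CDF construction and standard change of variables.
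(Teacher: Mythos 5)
Your proof is correct and follows the same strategy as the paper's: both directions rest on the observation that, because $Q(z,\cdot)=F_{X|Z=z}^{-1}(\cdot)$ is strictly monotone, conditioning on $\widetilde X = x$ and $Z=z$ pins down $\upomega = F_{X|Z=z}(x)$, so the conditional law of $\widetilde Y$ given $(\widetilde X, Z)$ can depend on $z$ only through the value $F_{X|Z=z}(x)$. Your forward direction is the direct form of the paper's contrapositive argument (the paper writes the conditional CDF as $m(x,F_{X|Z=z}(x))$ and derives a contradiction); the content is identical. Where you add something is in the converse: the paper merely asserts that the factorization condition implies $\widetilde{Y}^*\ci\upomega\mid\widetilde{X}^*,F_{X|Z}(\widetilde{X}^*)$ and that this "implies that there exists $g^Q$," whereas you actually exhibit $g^Q$ as the quantile/Rosenblatt transform of $\bm\nu$ into the well-defined family $\widetilde F_{Y\mid x,\omega}$, and then verify the joint-distribution match $(\widetilde X,\widetilde Y,Z)\overset{d}{=}(X,Y,Z)$. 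That explicit construction, together with your remark that condition \eqref{eq:existence:condition} is exactly what makes $\widetilde F_{Y\mid x,\omega}$ well defined across different representatives $z$, closes the step the paper leaves implicit. One small point: the continuity you invoke for the quantile inverse is not strictly needed, since the generalized inverse $F^{-1}(\nu_1)$ pushes $U[0,1]$ forward to $F$ for any CDF; and $g^Q$ need only be defined on pairs $(x,\omega)$ realized by some $z$, since the generated pair $(\widetilde X,\upomega)$ almost surely lies on the graph $\omega=F_{X|Z=z}(\widetilde X)$.
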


\begin{proof}
	Let $X^Q = Q(Z,\upomega)$. 
	The first direction of the proof requires to show that if $(Q,g^Q)\in\widetilde{\cal C}$  then \eqref{eq:existence:condition} holds. Suppose by the contradiction that condition \eqref{eq:existence:condition} doesn't hold and there exists a function $g^Q:\mathbb{R}\times[0,1]^2\mapsto\mathbb{R}$ such that $(Q,g^Q)$ belong to equivalence class $\widetilde{\cal C}$. This gives that for all $(x,t)\in\text{Supp}(X)\times[0,1]$,
	\begin{align}
		F_{\widetilde{Y}^*|\widetilde{X}^*=x,Z=z}(y)=\int_{[0,1]^d} \mathds{1}\left\{g^Q(x,F_{X|Z=z}(x), \bm\nu)\le y\right\}d\bm\nu.\nonumber
	\end{align}
	Define $m(x,t):=\int_{[0,1]^d} \mathds{1}\left\{g^Q(x,t, \bm\nu)\le y\right\}d\bm\nu$. This gives,
	\begin{align}
		F_{\widetilde{Y}^*|\widetilde{X}^*=x,Z=z}(y)=m(x,F_{X|Z=z}(x)).\label{eq:GQ}
	\end{align} The violation of condition \eqref{eq:existence:condition} implies there exist pairs $(x,z),(x,z')\in\text{Supp}(X,Z)$ for which,
	\begin{align}
		F_{X|Z=z}(x)=F_{X|Z=z'}(x),\label{eq:X:Ztag}
	\end{align}
	and
	\begin{align}
		F_{\widetilde{Y}^*|\widetilde{X}^*=x,Z=z}\ne F_{\widetilde{Y}^*|\widetilde{X}^*=x,Z=z'}.\label{eq:FYX:Ztag}
	\end{align}
	However, \eqref{eq:X:Ztag} implies that $m(x,F_{X|Z=z}(x))=m(x,F_{X|Z=z'}(x))$. This and \eqref{eq:GQ} yield,
	\[
	F_{\widetilde{Y}^*|\widetilde{X}^*=x,Z=z}(y)=F_{\widetilde{Y}^*|\widetilde{X}^*=x,Z=z'}(y),
	\]
	which contradicts \eqref{eq:FYX:Ztag}. Thus, it follows that if  $(Q,g^Q)\in\widetilde{\cal C}$ condition \eqref{eq:existence:condition} holds. 
	
	The other direction of the proof requires to show that if \eqref{eq:existence:condition} holds, then $(Q,g^Q)\in\widetilde{\cal C}$. Note that if \eqref{eq:existence:condition} holds, then  $\widetilde{Y}^*\ci\upomega|\widetilde{X}^*,F_{X|Z}(\widetilde{X}^*)$. This condition implies that there exists $g^Q$ satisfying  \eqref{eq:GQ}. Thus, $(Q,g^Q)\in\widetilde{\cal C}$. 
\end{proof}

	\pagebreak

\end{subappendices}

\end{appendices}

\end{document}